\documentclass[10pt, final, twocolumn, twoside, romanappendices]{IEEEtran}

\ifCLASSINFOpdf
\else
\fi

\usepackage{graphicx,graphics,epsfig,epstopdf,float}
\graphicspath{{./}}
\usepackage{amsmath} 
\usepackage{amssymb}  
\usepackage{mathrsfs}
\usepackage{enumerate}
\usepackage{subfigure}
\usepackage{colortbl}
\usepackage{color}  
\usepackage{bm}
\usepackage{psfrag}
\usepackage{cite}
\usepackage{algorithm}
\usepackage{algorithmic}
\usepackage{mdwlist}
\usepackage{longtable}
\usepackage{array}
\usepackage{acronym}  

\usepackage{bm}
\usepackage[final]{microtype}  

\DeclareMathAlphabet{\mathsfbr}{OT1}{cmss}{m}{n}
\SetMathAlphabet{\mathsfbr}{bold}{OT1}{cmss}{bx}{n}
\DeclareRobustCommand{\msf}[1]{%
  \ifcat\noexpand#1\relax\msfgreek{#1}\else\mathsfbr{#1}\fi
}

\makeatletter
\newcommand{\msfgreek}[1]{\csname s\expandafter\@gobble\string#1\endcsname}
\makeatother

\DeclareFontEncoding{LGR}{}{} 
\DeclareSymbolFont{sfgreek}{LGR}{cmss}{m}{n}
\SetSymbolFont{sfgreek}{bold}{LGR}{cmss}{bx}{n}
\DeclareMathSymbol{\salpha}{\mathord}{sfgreek}{`a}
\DeclareMathSymbol{\sbeta}{\mathord}{sfgreek}{`b}
\DeclareMathSymbol{\sgamma}{\mathord}{sfgreek}{`g}
\DeclareMathSymbol{\sdelta}{\mathord}{sfgreek}{`d}
\DeclareMathSymbol{\sepsilon}{\mathord}{sfgreek}{`e}
\DeclareMathSymbol{\szeta}{\mathord}{sfgreek}{`z}
\DeclareMathSymbol{\seta}{\mathord}{sfgreek}{`h}
\DeclareMathSymbol{\stheta}{\mathord}{sfgreek}{`j}
\DeclareMathSymbol{\siota}{\mathord}{sfgreek}{`i}
\DeclareMathSymbol{\skappa}{\mathord}{sfgreek}{`k}
\DeclareMathSymbol{\slambda}{\mathord}{sfgreek}{`l}
\DeclareMathSymbol{\smu}{\mathord}{sfgreek}{`m}
\DeclareMathSymbol{\snu}{\mathord}{sfgreek}{`n}
\DeclareMathSymbol{\sxi}{\mathord}{sfgreek}{`x}
\DeclareMathSymbol{\somicron}{\mathord}{sfgreek}{`o}
\DeclareMathSymbol{\spi}{\mathord}{sfgreek}{`p}
\DeclareMathSymbol{\srho}{\mathord}{sfgreek}{`r}
\DeclareMathSymbol{\ssigma}{\mathord}{sfgreek}{`s}
\DeclareMathSymbol{\stau}{\mathord}{sfgreek}{`t}
\DeclareMathSymbol{\supsilon}{\mathord}{sfgreek}{`u}
\DeclareMathSymbol{\sphi}{\mathord}{sfgreek}{`f}
\DeclareMathSymbol{\schi}{\mathord}{sfgreek}{`q}
\DeclareMathSymbol{\spsi}{\mathord}{sfgreek}{`y}
\DeclareMathSymbol{\somega}{\mathord}{sfgreek}{`w}

\DeclareMathSymbol{\svarsigma}{\mathord}{sfgreek}{`c}

\DeclareMathSymbol{\sGamma}{\mathalpha}{sfgreek}{`G}
\DeclareMathSymbol{\sDelta}{\mathalpha}{sfgreek}{`D}
\DeclareMathSymbol{\sTheta}{\mathalpha}{sfgreek}{`J}
\DeclareMathSymbol{\sLambda}{\mathalpha}{sfgreek}{`L}
\DeclareMathSymbol{\sXi}{\mathalpha}{sfgreek}{`X}
\DeclareMathSymbol{\sPi}{\mathalpha}{sfgreek}{`P}
\DeclareMathSymbol{\sSigma}{\mathalpha}{sfgreek}{`S}
\DeclareMathSymbol{\sUpsilon}{\mathalpha}{sfgreek}{`U}
\DeclareMathSymbol{\sPsi}{\mathalpha}{sfgreek}{`Y}
\DeclareMathSymbol{\sOmega}{\mathalpha}{sfgreek}{`W}

\DeclareRobustCommand{\mcal}[1]{%
  \ifcat\noexpand#1\relax\mathnormal{#1}\else\cal{#1}\fi
}
\DeclareRobustCommand{\BM}[1]{%
  \ifcat\noexpand#1\relax\bm{\boldUppercaseItalicGreek{#1}}\else\bm{#1}\fi
}
\makeatletter
\newcommand{\boldUppercaseItalicGreek}[1]{\csname var\expandafter\@gobble\string#1\endcsname}
\makeatother

\definecolor{BLUE}{rgb}{0,0,1}
\newtheorem{theorem}{Theorem}
\newtheorem{proposition}{Proposition}

\newtheorem{remark}{Remark}
\newtheorem{corollary}{Corollary}
\newtheorem{lemma}{Lemma}
\newcommand{\paperTitle}{Data-Aided Target Localization in Multistatic ISAC Systems With Communication Constraints}

\acrodef{gnss}[GNSS]{global navigation satellite system}
\acrodef{rf}[RF]{radio frequency}
\acrodef{aoa}[AOA]{angle-of-arrival}
\acrodef{rss}[RSS]{received signal strength}
\acrodef{toa}[TOA]{time-of-arrival}
\acrodef{tdoa}[TDOA]{time-difference-of-arrival}
\acrodef{rtt}[RTT]{round-trip time}
\acrodef{fdd}[FDD]{frequency division duplex}
\acrodef{tdd}[TDD]{time division duplex}
\acrodef{fd}[FD]{full-duplex}
\acrodef{sdp}[SDP]{semidefinite programming}
\acrodef{psd}[PSD]{positive semi-definite}
\acrodef{crlb}[CRLB]{Cram\'{e}r-Rao lower bound}
\acrodef{nc}[NC]{narrow correlator}
\acrodef{sc}[SC]{storbe correlator}
\acrodef{pll}[PLL]{phase locked loop}
\acrodef{mp}[MP]{multipath}
\acrodef{sp}[SP]{single path}
\acrodef{ff}[FF]{flat fading}
\acrodef{mds}[MDS]{multidimensional scaling}
\acrodef{snr}[SNR]{signal-to-noise ratio}
\acrodef{los}[LOS]{line-of-sight}
\acrodef{nlos}[NLOS]{non-line-of-sight}
\acrodef{sic}[SIC]{serial interference cancelation}
\acrodef{pic}[PIC]{parallel interference cancelation}
\acrodef{adc}[ADC]{analog-to-digital converter}
\acrodef{bp}[BP]{basis pursuit}
\acrodef{lasso}[LASSO]{least absolute shrinkage and selection operator}
\acrodef{omp}[OMP]{orthogonal matching pursuit}
\acrodef{lls}[LLS]{linear least squares}
\acrodef{wlls}[WLLS]{weighted linear least squares}
\acrodef{nlls}[NLLS]{nonlinear least squares}
\acrodef{awgn}[AWGN]{additive white Gaussian noise}
\acrodef{cirf}[CIRF]{channel impulse response function}
\acrodef{irf}[IRF]{impulse response function}
\acrodef{llr}[LLR]{log-likelihood ratio}
\acrodef{llrs}[LLRs]{log-likelihood ratios}
\acrodef{fim}[FIM]{Fisher information matrix}
\acrodef{efim}[EFIM]{equivalent Fisher information matrix}
\acrodef{mse}[MSE]{mean squared error}
\acrodef{speb}[SPEB]{squared position error bound}
\acrodef{peb}[PEB]{position error bound}
\acrodef{rmse}[RMSE]{root mean squared error}
\acrodef{seb}[SEB]{synchronization error bound}
\acrodef{imu}[IMU]{inertial measurement unit}
\acrodef{nls}[NLS]{network localization and synchronization}
\acrodef{Nls}[NLS]{Network localization and synchronization}
\acrodef{reb}[REB]{ranging error bound}
\acrodef{co}[CO]{clock offset}
\acrodef{cdma}[CDMA]{code-division multiple-access}
\acrodef{pdf}[PDF]{probability density function}
\acrodef{isac}[ISAC]{integrated sensing and communication}
\acrodef{ofdm}[OFDM]{orthogonal frequency division multiplexing}
\acrodef{sac}[S\&C]{sensing and communication}
\acrodef{ser}[SER]{symbol error rate}
\acrodef{bp}[BP]{belief propagation}
\acrodef{ula}[ULA]{uniform linear array}
\acrodef{dk}[DK]{deterministic known}
\acrodef{ru}[RU]{random unknown}
\acrodef{tdma}[TDMA]{time division multiplexing access}
\acrodef{map}[MAP]{maximum a posteriori probability}
\acrodef{em}[EM]{expectation-maximization}
\acrodef{mf}[MF]{mean-field}
\acrodef{ep}[EP]{expectation propagation}
\acrodef{mmse}[MMSE]{minimum mean square error}
\acrodef{lmmse}[LMMSE]{linear minimum mean square error}
\acrodef{bpsk}[BPSK]{binary phase shift keying}
\acrodef{qpsk}[QPSK]{quadrature phase shift keying}
\acrodef{qam}[QAM]{quadrature amplitude modulation}
\acrodef{jtd}[JTD]{joint target-localization and data-detection}
\acrodef{csi}[CSI]{channel state information}
\acrodef{bs}[BS]{base station}
\acrodef{aod}[AOD]{angle-of-departure}
\acrodef{ml}[ML]{maximum likelihood}
\acrodef{sinr}[SINR]{signal-to-interference-plus-noise ratio}
\acrodef{wrt}[w.r.t.]{with respect to}
\acrodef{du}[DU]{deterministic unknown}
\acrodef{mi}[MI]{mutual information}
\acrodef{mimo}[MIMO]{multiple-input multiple-output}
\acrodef{siso}[SISO]{single-input single-output}
\acrodef{iid}[i.i.d.]{independent identically distributed}
\acrodef{ls}[LS]{least square}
\acrodef{simo}[SIMO]{single-input multiple-output}
\acrodef{otfs}[OTFS]{orthogonal time frequency space}
\acrodef{bcd}[BCD]{block coordinate descent}
\acrodef{sca}[SCA]{successive convex approximation}
\acrodef{kkt}[KKT]{Karush–Kuhn–Tucker}

\begin{document}
\title{\paperTitle}

\author{
	\vspace{0.2cm}
 Na~Zhao,~\IEEEmembership{Member,~IEEE},
Xiao~Shen,
Chao~Ge,~
Ziping~Lu,~\IEEEmembership{Student~Member,~IEEE}, and
Yuan~Shen,~\IEEEmembership{Senior~Member,~IEEE}


    \thanks{
    N. \ Zhao  is with the Smart City College, Beijing Union University, 100101, China
    (e-mail: {zhaona@buu.edu.cn});
    X. \ Shen, C. \ Ge, Z. \ Lu, and Y. \ Shen  are with the Department of Electronic Engineering, and Tsinghua National Laboratory for Information Science and Technology, Tsinghua University, Beijing 100084, China 
    (e-mail: {shenx.tsinghua@gmail.com}; {gechao@amss.ac.cn}; {lzp23@mails.tsinghua.edu.cn}; {shenyuan\_ee@tsinghua.edu.cn}).
     }

}
\maketitle


\begin{abstract}
\Ac{isac} enables future wireless networks to perform  \ac{sac} over a shared waveform. In multistatic \ac{isac} systems, however, the sensing receivers do not know the realizations of transmitted data symbols, making it challenging to exploit communication signals for sensing. In this paper, we propose a data-aided framework for target localization with two receiver strategies, namely statistical data-aided sensing and joint data-aided sensing and decoding, where the former marginalizes the random unknown data symbols and the latter reuses the reliably decoded data symbols as known virtual pilots. Under \ac{ofdm} signaling, we derive the performance limits for target localization in both strategies and adopt the achievable ergodic data rate as the communication metric. Then, we formulate a joint time-allocation and transmit data-covariance design problem for target localization under communication constraints, which characterizes the joint \ac{sac} bound and quantifies the sensing gain provided by data symbols. In addition, we develop two target localization algorithms that implement the proposed data-aided receiver processing, and extend the framework to finite-alphabet signaling. Simulation results validate theoretical analysis and the effectiveness of the proposed data-aided schemes.
\end{abstract}

\acresetall
\begin{IEEEkeywords}
Integrated sensing and communication, multistatic systems, target localization, data-aided schemes.
\end{IEEEkeywords}

%



\acresetall		

\section{Introduction}
\Ac{isac} technology has emerged as a promising enabler for next-generation wireless networks by supporting \ac{sac} over a shared waveform\cite{LiuLiuCui:J25,LiuZhaZha:J24}. This transformative paradigm supports a wide range of applications, including low-altitude economy\cite{TanYuPan:J25}, cooperative localization and navigation\cite{WanWuShe:J20}, and autonomous driving\cite{DuLiuLi:J25}. By sharing hardware platforms and spectral resources, \ac{isac} systems can significantly improve spectral efficiency and reduce deployment costs \cite{LiuMasLiSunHan:J18}. However, the tasks of \ac{sac} use the shared waveform in different ways, where communication relies on \ac{ru} data symbols, whereas sensing benefits from \ac{dk} pilot signals\cite{XioLiuCui:J22}. This mismatch creates an intrinsic coupling between data rate and localization accuracy for \ac{sac}, making transceiver design and theoretical characterization central issues for \ac{isac} system design \cite{XioLiuCui:J22,SheLuZha:J26,KesMojLac:J25}.

Existing studies can be broadly categorized into monostatic and bistatic or multistatic settings \cite{LiuHuaShlLiuZhoEld:J20,ZhaWanZha:J21,DuLiuXio:J24,LiuZhaXio:J25,ZhaChaShe:J25,WuLiHe:J25,KesKoiWym:J21}. In monostatic systems, the transceivers are collocated, allowing the receiver full knowledge of the transmitted waveform, including both codebook and codewords \cite{XioLiuCui:J22}. This property enables transmit-side waveform and beamforming optimization for joint \ac{sac} performance. For instance, \cite{LiuHuaShlLiuZhoEld:J20} proposes a joint multi-beam design to match desired radar beampatterns under communication constraints, while \cite{ZhaWanZha:J21} further develops a joint transmit and receive beamforming scheme accounting for clutter in complex environments. Beyond dedicated waveform design, \cite{DuLiuXio:J24,LiuZhaXio:J25} show that standard \ac{ofdm} signals can also be directly exploited for sensing, for example, by mitigating the effect of random data via probabilistic constellation shaping and utilizing the inherent ranging capability of cyclic prefix-\ac{ofdm}. Building on these transmitter designs, theoretical studies characterize the coupling of \ac{sac} through \ac{crlb}-rate regions and Pareto boundaries, revealing how waveform randomness and deterministic sensing requirements jointly shape the achievable performance\cite{XioLiuCui:J22,HuaHanXu:J24,RenPenSon:J24,GuoGuWan:J25}.

In contrast, bistatic and multistatic systems employ spatially separated transceivers. For target localization, the separated receivers provide geometric diversity in time delay and angle, but each sensing receiver knows only the statistical distribution of random data symbols rather than their realizations \cite{XioLiuCui:J22}. Therefore, a unified receiver-side framework is needed to determine how these unknown data symbols should be processed and how much localization information can be extracted from them under a required data rate. To address data uncertainty, \cite{ZhaChaShe:J25} proposes a joint target-localization and data-detection scheme, where iteratively decoded data symbols progressively improve sensing accuracy. Extending to complex environments, \cite{WuLiHe:J25} develops a variational inference receiver for orthogonal time frequency space \ac{mimo} systems that performs joint data detection and sensing via message passing. For uplink scenarios, \cite{YuRenPan:J25} introduces a projection-type receiver to optimally balance signal-to-inference-plus-noise ratio enhancement and interference suppression, while \cite{MenMasPet:J25} investigates cooperative gains in multistatic networks through multi-point coordinated transmission. These receiver designs demonstrate the potential of data-aided sensing, but the corresponding performance limits under communication constraints remain insufficiently understood.

 \begin{figure}[t]
    \centering
    \includegraphics[width=1\linewidth]{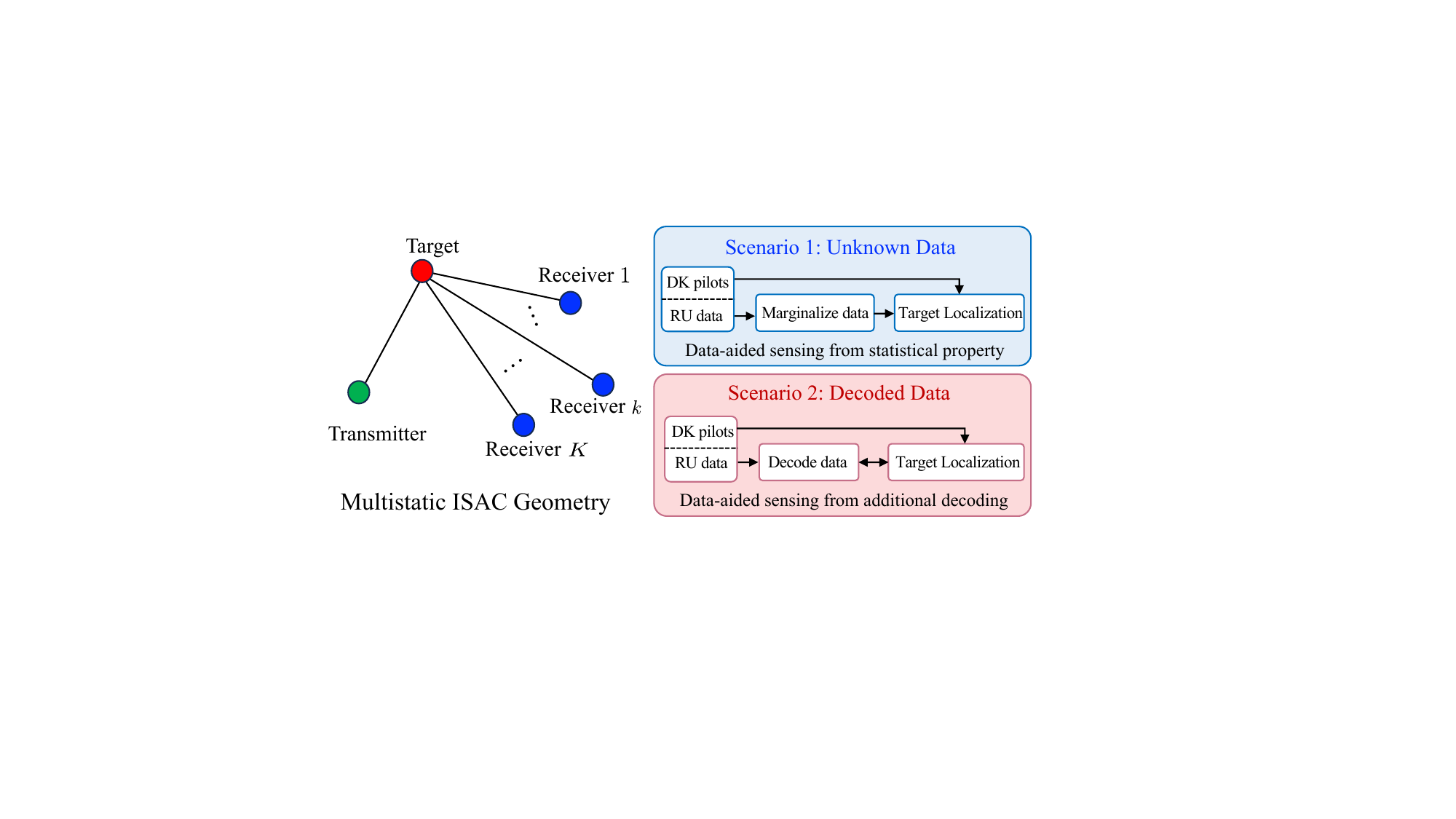}
    \caption{Multistatic \ac{isac} system geometry and data-aided sensing schemes.
Scenario~1 treats \ac{ru} data symbols as nuisance parameters,
whereas Scenario~2 relies on data recovery and exploits recovered data symbols for sensing. DK and RU  refer to deterministic known and random unknown, respectively.}
    \label{fig:system_scheme}
\end{figure}

Several recent works also analyze the theoretical limits of bistatic or multistatic \ac{isac} systems\cite{KwoConParWin:J21}. For instance, \cite{WanWan:J25} identifies waveform uncertainty and spatial water-filling as two structural conflicts, and \cite{SonYuXu:J26} proposes superimposing deterministic sensing sequences onto Gaussian information signals to compensate for waveform uncertainty at the transmitter. Accounting for estimation coupling at the receiver, \cite{FodFodTel:J25} derives the \ac{crlb}s for joint \ac{aoa} and data estimation, showing that stochastic signaling can intensify the \ac{sac} conflict. For delay-constrained bistatic networks, \cite{SheLuZha:J26} characterizes the rate-error tradeoff in the finite-blocklength regime, demonstrating how decoding errors couple with sensing accuracy. These studies reveal the importance of joint theoretical analysis for \ac{sac}, but they do not provide unified performance limits of target localization for different data-aided receiver strategies in multistatic \ac{isac} systems. This leaves open how different receiver processing strategies should exploit unknown data symbols under the same communication constraint, and how much localization information these symbols can contribute.

In particular, the way these unknown data symbols are processed determines the extent to which communication signals can be exploited for sensing, commonly referred to as \emph{data-aided sensing} or communication-assisted sensing. This idea has been extensively studied in communication systems.  In conventional channel estimation, \cite{MaPin:J14} utilizes partially decoded data symbols to iteratively suppress estimation errors. Extending  semi-blind frameworks to massive \ac{mimo} systems, \cite{ZhaKamAlo:J25} establishes theoretical limits by treating unknown data symbols as deterministic parameters, showing that data exploitation can significantly reduce training overhead. Inspired by these advances, recent studies have begun incorporating data-aided processing into \ac{isac} systems.  For monostatic sensing, \cite{XuYuLiu:J25} evaluates monostatic sensing by averaging over data distributions to derive the ergodic linear \ac{mmse}. In bistatic settings, \cite{ZhaChaShe:J25} treats random data symbols as nuisance parameters in the \ac{fim} derivation and approaches this limit through an iterative scheme, as does the data-aided receiver of~\cite{KesMurMiz:C25}. From a more general information-theoretic perspective, \cite{DonLiuLu:J25} models communication-assisted sensing via rate-distortion theory, quantifying how communication capacity constrains sensing parameter recovery. 
Therefore, a unified multistatic localization framework is still needed to quantify how different data-aided receiver processing strategies extract position information from unknown communication symbols under  rate constraints.

In this paper, we establish a unified data-aided target localization framework in multistatic \ac{isac} networks. Under \ac{ofdm} signaling, we consider two receiver processing strategies, namely \emph{statistical data-aided sensing} and
\emph{joint data-aided sensing and decoding}. The former exploits the statistical distribution of the unknown data symbols, whereas the latter decodes the symbols and reuses the reliably  decoded symbols for sensing. Since the two strategies
extract distinct sensing information from the same communication signals, they
give rise to different performance limits for target localization, joint \ac{sac} bounds, and 
algorithm designs.

The main contributions are summarized as follows.
\begin{itemize}
\item We derive the performance limits for target localization in both data-aided strategies, 
quantifying how much sensing information the unknown data symbols provide under different receiver processing strategies.
\item We characterize the joint \ac{sac} bound and jointly optimize time allocation and transmit data covariance, showing that data-aided processing improves the achievable localization accuracy under communication constraints.
\item We develop the target localization algorithms matched to the two strategies and extend the framework to finite-alphabet signaling, which validates the data-aided gains in practical receiver processing.
\end{itemize}

The rest of this paper is organized as follows. 
Section~\ref{sec:sys} presents the system model and problem formulation. 
Section~\ref{sec:data-aided} derives the performance limits for data-aided target localization and communication, and discusses the finite-alphabet extension. 
Section~\ref{sec:optimization} addresses the communication-constrained target localization optimization problem. 
Section~\ref{sec:DAS} develops the target localization algorithms 
Section~\ref{sec:results} provides numerical results, and Section~\ref{sec:concl} concludes the paper.

\emph{Notation:}
Throughout this paper, scalars, vectors, matrices, and sets are denoted by italic letters $x$, bold italic letters $\bm{x}$, bold capital italic letters $\bm{X}$, and calligraphic letters $\mathcal{X}$, respectively.
Random variables, random vectors, and random matrices are denoted by $\mathsf{x}$, bold letters $\bm{\mathsf{x}}$, and bold capital letters $\bm{\mathsf{X}}$, respectively.  $[\bm X]_{i,j}$ denotes an element at the $i$-th row and $j$-th column of matrix $\bm X$.
The operators $(\cdot)^*$, $(\cdot)^{\text T}$, and $(\cdot)^{\text H}$ denote the complex conjugate, transpose, and conjugate transpose, respectively.
The operators $\text{Re}\{\cdot\}$ and $\text{Im}\{\cdot\}$ denote the real and imaginary parts.
The operators $\text{rank}(\cdot)$ and $\text{tr}(\cdot)$ denote the matrix rank and trace.
The symbols $\odot$ and $\otimes$ denote the Hadamard product and Kronecker product.
The notation $\langle\bm X,\bm Y\rangle\triangleq
\text{Re}\{\text{tr}(\bm X^{\text H}\bm Y)\}$ denotes the real Frobenius inner
product.
The operator $\mathbb{E}\{\cdot\}$ denotes expectation.
The relation $\bm{X} \succeq \bm{Y}$ means that $\bm{X} - \bm{Y}$ is positive semidefinite, and $\text{diag}\{\cdot\}$ constructs a diagonal matrix from its input arguments.

\section{System Model and Problem Formulation}\label{sec:sys}
This section introduces the system model for the data-aided schemes, presents the \ac{sac} performance metrics, and formulates the data-aided target localization problem. 

\subsection{System Model} \label{sec:sysmod}
Consider an \ac{ofdm}-based $2\text{D}$  multistatic \ac{isac} system, consisting of one transmitter, one point target, and $K$ receivers, as illustrated in Fig.~\ref{fig:system_scheme}. The transmitter sends common data symbols to all receivers, which jointly decode them and localize the target. The positions of the transmitter and the $k$-th receiver are assumed to be known, denoted by $\bm p_{\text{t}} = [x_{\text{t}},y_{\text{t}}]^\text{T}$ and $\bm p_{\text{r},k} = [x_{\text{r},k},y_{\text{r},k}]^\text{T}$, $k\in\mathcal{K} \triangleq \{1,\cdots,K\}$, respectively. The target position is denoted by $\bm p = [x,y]^\text{T}$, which is the parameter of interest to be estimated. The system operates over $N$ orthogonal subcarriers indexed by $\mathcal N\triangleq\{0,1,\ldots,N-1\}$,
with subcarrier spacing $\Delta f$ and total bandwidth
$B=N\Delta f$.

\subsubsection{Transmitted Signal Model}
Following the \ac{tdd} protocol, each transmission frame is divided into pilot and data phases, in which pilot phase occupies $T_{\text p}=\rho T$
time slots, while the remaining $T_{\text d}=(1-\rho)T$
time slots are used for data transmission, where $\rho\in[0,1]$ denotes the pilot fraction controlling the \ac{sac} resource allocation and $T$ denotes the total time slots.\footnote{Without loss of generality, we assume  $\rho T$ is an integer throughout the analysis, such that
$T_{\text p}=\rho T$ and $T_{\text d}=(1-\rho)T$. The rounding discrepancy is of order $\mathcal O(1/T)$ and becomes negligible for practical frame lengths.} The pilot and data index sets are defined as $\mathcal T_{\text p} \triangleq \{1,\ldots,T_{\text p}\}$, and $\mathcal T_{\text d} \triangleq \{T_{\text p}+1,\ldots,T\}$, respectively. The transmitted signal is defined at the per-subcarrier level to provide a common description for both the \ac{mimo} and \ac{siso} models. At subcarrier $n \in \mathcal{N}$ and time slot $t \in \mathcal{T}\triangleq\{1,\ldots,T\}$, the transmitted signal $\bm{\mathsf{s}}[n,t]\in\mathbb C^{M_{\text t}}$ is given by
\begin{equation}
   \bm{\mathsf{s}}[n,t]=
    \begin{cases}
        \bm{s}_{\text p}[n,t], & t\in\mathcal T_{\text p},\\
         \bm{\mathsf{s}}_{\text d}[n,t], & t\in\mathcal T_{\text d},
    \end{cases}
    \label{eq:unified_signal}
\end{equation}
where $\bm{s}_{\text p}[n,t]$ and $\bm{\mathsf{s}}_{\text d}[n,t]$ denote the pilot and data symbols, respectively. The \ac{siso} case is obtained by setting
$M_{\text t}=1$. The pilot symbols are \ac{dk} at the receivers. We adopt spatially orthogonal pilots on every subcarrier. Defining $\bm S_{{\text p},n}\triangleq[\bm s_{\text p}[n,1],\ldots,\bm s_{\text p}[n,T_{\text p}]]$, their per-subcarrier sample covariance satisfies\cite{XioLiuCui:J22}
\begin{equation}
   \bm R_{\text p}= {T_{\text p}^{-1}}\bm S_{{\text p},n}\bm S_{{\text p},n}^{\text H}
    ={M_{\text t}^{-1}}\bm I_{M_{\text t}},
    \qquad n\in\mathcal N.
    \label{eq:orthogonal_pilot_main}
\end{equation}
The \ac{ru} data symbols are drawn independently across \ac{ofdm} subcarriers and time slots from a zero-mean Gaussian codebook.\footnote{Gaussian signaling is adopted for analytical tractability. This assumption does not rule out practical modulation schemes, whose finite-alphabet extensions are discussed separately.} Their per-subcarrier statistical covariance is 
$  \bm R_{\text d}\triangleq\mathbb E\{\bm{\mathsf s}_{\text d}[n,t]\bm{\mathsf s}_{\text d}^{\text H}[n,t]\}$,
where $R_{\text d}=1$ in the \ac{siso} case and $\bm R_{\text d}\succeq\bm0$, $\text{tr}(\bm R_{\text d})=1$ in the \ac{mimo} case. 
Without loss of generality, the  transmit power is normalized so that $\text{tr}(\bm R_\text{p})=\text{tr}(\bm R_\text{d})=1$. 
Stacking  over the $N$ subcarriers as $\tilde{\bm{\mathsf{s}}}[t] \triangleq [\bm{\mathsf{s}}^{\text T}[0,t], \ldots, \bm{\mathsf{s}}^{\text T}[N-1,t]]^{\text T} \in \mathbb{C}^{NM_{\text t}}$ yields the transmitted signal matrix as
\begin{equation}
    \bm{\mathsf S}
    =
    [\bm S_{\text p},\bm{\mathsf S}_{\text d}]
    \in\mathbb C^{NM_{\text t}\times T},
    \label{eq:signal_matrix}
\end{equation}
with a deterministic pilot block $\bm S_{\text p} \in \mathbb{C}^{NM_{\text t} \times T_{\text p}}$ and a random data block $\bm{\mathsf S}_{\text d} \in \mathbb{C}^{NM_{\text t} \times T_{\text d}}$. Condition~\eqref{eq:orthogonal_pilot_main} constrains the diagonal subcarrier blocks of $T_{\text p}^{-1}\bm S_{\text p}\bm S_{\text p}^{\text H}$, and it does not require the stacked $NM_{\text t}$ dimensions to be mutually orthogonal. The stacked data covariance is
$\tilde{\bm R}_{\text d}=\bm I_N\otimes\bm R_{\text d}$.

\subsubsection{MIMO Channel Model}
The transmitter and receiver are equipped with \acp{ula} of $M_{\text t}$ and
$M_{\text r}$ antennas, respectively. 
The received signal on the $n$-th subcarrier at the $k$-th receiver and the
$t$-th time slot is modeled as\cite{LiCosMit:J22}
\begin{equation}
    \bm{\mathsf y}_k[n,t]
    =
    \alpha_k
    e^{-j2\pi n\Delta f\tau_k}
    \bm A(\bm\vartheta_k)
    \bm{\mathsf s}[n,t]
    +
    \bm{\mathsf v}_k[n,t],
    \label{eq:mimo_subcarrier_model}
\end{equation}
where $ \bm A(\bm\vartheta_k) \triangleq  \bm a_{\text r}(\varphi_k) \bm a_{\text t}^{\text H}(\psi) \in \mathbb C^{M_{\text r}\times M_{\text t}}$ denotes the spatial array response, $\varphi_k$ denotes the \ac{aoa} of the target-reflected signal at the $k$-th receiver, $\psi$ denotes the \ac{aod} from the transmitter toward the target, and $\bm\vartheta_k \triangleq [\varphi_k,\psi]^{\text T}$ collects the angular parameters.\footnote{We focus on the target-reflected component that couples communication and localization. The direct path and static background can be separately calibrated or subtracted~\cite{ZhaLiLiuHim:J15}, with residual mismatch absorbed into the noise.}  Note that $\psi$ is shared by all receivers.
The observed noise is denoted by 
$\bm{\mathsf v}_k[n,t] \sim \mathcal{CN} (\bm 0,\sigma_v^2\bm I_{M_{\text r}})$.
Define the frequency-domain delay steering vector and its corresponding
diagonal matrix as
$\bm g(\tau_k)= [1,e^{-j2\pi \Delta f \tau_k},\cdots,
e^{-j2\pi (N-1)\Delta f \tau_k}]^{\text T}$ and
$\bm D(\tau_k)\triangleq\text{diag}(\bm g(\tau_k))\in\mathbb C^{N\times N}$,
respectively. Then, stacking all subcarriers yields
\begin{equation}
    \tilde{\bm{\mathsf y}}_k[t]
    =
    \alpha_k
    \big(
    \bm D(\tau_k)\otimes\bm A(\bm\vartheta_k)
    \big)
    \tilde{\bm{\mathsf s}}[t]
    +
    \tilde{\bm{\mathsf v}}_k[t],
    \label{eq:mimo_vector_model}
\end{equation}
where the collective received signal and noise are denoted by
$\tilde{\bm{\mathsf y}}_k[t]
=
[\bm{\mathsf y}_k[0,t]^{\text T},
\ldots,
\bm{\mathsf y}_k[N-1,t]^{\text T}]^{\text T}
\in
\mathbb C^{NM_{\text r}\times1}$
and
$\tilde{\bm{\mathsf v}}_k[t]
\sim
\mathcal{CN}
(
\bm 0,
\sigma_v^2\bm I_{NM_{\text r}}
)$. Finally, stacking over all time slots gives
\begin{equation}
    \tilde{\bm{\mathsf Y}}_k
    =
    \alpha_k
    \big(
    \bm D(\tau_k)\otimes\bm A(\bm\vartheta_k)
    \big)
    \bm{\mathsf S}
    +
    \tilde{\bm{\mathsf V}}_k,
    \label{eq:mimo_matrix_model}
\end{equation}
where $\tilde{\bm{\mathsf Y}}_k \in \mathbb{C}^{NM_{\text r}\times T}$ and $\tilde{\bm{\mathsf V}}_k \in \mathbb{C}^{NM_{\text r}\times T}$, partitioned as $\tilde{\bm{\mathsf Y}}_k = [\tilde{\bm{\mathsf Y}}_{\text{p},k},\tilde{\bm{\mathsf Y}}_{\text{d},k}]$ and $\tilde{\bm{\mathsf V}}_k = [\tilde{\bm{\mathsf V}}_{\text{p},k},\tilde{\bm{\mathsf V}}_{\text{d},k}]$, consistent with~\eqref{eq:signal_matrix}.
\begin{remark}
The \ac{siso} model is recovered as a special case by setting
$M_{\text t}=M_{\text r}=1$. The signal $\mathsf s[n,t]$ on each subcarrier
then becomes scalar. For conventional \ac{ofdm} with Gaussian signaling, the
data symbols are independent across subcarriers and have unit variance. The
received signal at $k$-th receiver is therefore
$\bm{\mathsf y}_k[t] = \alpha_k\bm g(\tau_k)\odot\bm{\mathsf s}[t]+\bm{\mathsf v}_k[t]$ or stacking over all time slots as
$ \bm{\mathsf Y}_k =\alpha_k\bm D(\tau_k)
\bm{\mathsf S}+\bm{\mathsf V}_k$.
\end{remark}

\subsubsection{Data-aided Schemes}
Since the sensing receiver does not have access to the realizations of the transmitted data symbols and only knows their statistical distribution, the receiver design fundamentally depends on how these unknown data symbols are processed. Accordingly, two data-aided receiver processing strategies are considered, as summarized in Fig.~\ref{fig:system_scheme}. 
\begin{itemize}
    \item In \emph{Scenario~1 (statistical data-aided sensing)}, the unknown data symbols $\bm{\mathsf s}_{\text d}$ are treated as nuisance random variables and marginalized to obtain the marginal likelihood $f(\bm{\mathsf y}; \bm p) = \int f(\bm{\mathsf y}|\bm{\mathsf s}_{\text d}, \bm p)f(\bm{\mathsf s}_{\text d})\,d\bm{\mathsf s}_{\text d}$, enabling data-aided target localization without explicit data decoding. 
    \item In \emph{Scenario~2 (joint data-aided sensing and decoding)}, the target position $\bm p$ and the data symbols $\bm{\mathsf s}_{\text d}$ are jointly estimated by exploiting prior information, e.g., codebook constraints, through the joint model $f(\bm{\mathsf y}, \bm{\mathsf s}_{\text d}; \bm p) = f(\bm{\mathsf y}|\bm{\mathsf s}_{\text d}, \bm p) f(\bm{\mathsf s}_{\text d})$, enabling data-aided target localization by reusing reliably recovered data symbols as virtual pilots.
\end{itemize}
	These two data-aided schemes show that the amount and nature of sensing information extractable from data are  determined by the receiver-side processing strategy. 

\subsection{Performance Metrics}
To evaluate the proposed data-aided target localization framework under communication constraints, we adopt the \ac{speb} as the sensing metric and the achievable ergodic data rate as the communication metric, respectively~\cite{XioLiuCui:J22,SheLuZha:J26}. This provides a unified performance formulation applicable to both \ac{mimo} and \ac{siso} systems.

\subsubsection{Sensing Performance}
The sensing objective is to estimate the target position $\bm p$, which is coupled with the unknown nuisance complex amplitude vector
$\bm \kappa = [\bm \alpha_1^\text{T},\cdots,\bm \alpha_{K}^\text{T}]^\text{T}$,
\label{eq:nuisance_def}
where
$\bm {\alpha}_k = [\text{Re}\{\alpha_k\},\text{Im}\{\alpha_k\}]^\text{T}$.
Accordingly, the unknown parameter vector is defined as
$\bm\eta=[\bm p^\text{T},\bm\kappa^\text{T}]^\text{T}$.

For a complex Gaussian observation with mean $\bm\mu(\bm\eta)$ and covariance $\bm Q(\bm\eta)$, the $(i,j)$-th entry of the \ac{fim} is given by~\cite{KhaPodHaa:J21}
\begin{equation}
    [\bm J(\bm\eta)]_{i,j}
    =
    2\text{Re}\Big\{
    \frac{\partial\bm\mu^\text H}{\partial\eta_i}
    \bm Q^{-1}
    \frac{\partial\bm\mu}{\partial\eta_j}
    \Big\}
    +
    \text{tr}\Big(
    \bm Q^{-1}\frac{\partial\bm Q}{\partial\eta_i}
    \bm Q^{-1}\frac{\partial\bm Q}{\partial\eta_j}
    \Big).
    \label{eq:direct_gaussian_fim}
\end{equation}
Based on the per-subcarrier signal model~\eqref{eq:mimo_subcarrier_model}, the two
scenarios differ in how the data component contributes to the Fisher information. In Scenario~1, the pilots are known and contribute through the mean-dependent term, while the unknown Gaussian data are marginalized and contribute only through the covariance-dependent term. 
In Scenario~2, correctly decoded data symbols are treated as a known waveform and therefore contribute through the same mean-dependent mechanism as the pilots.\footnote{Scenario~2 serves as a reliable-recovery benchmark, characterizing
the sensing information available once the data block has been recovered
accurately. This regime is motivated by joint data-detection and data-aided
sensing receivers, where outside the low-\ac{snr} region most recovered symbols
are correct and the recovered data block can be approximated as known for
sensing~\cite{ZhaChaShe:J25,KesMurMiz:C25}.} In Scenario~1, the unknown data symbols are marginalized, and hence their contribution to the \ac{fim} is characterized through the covariance
of the received signal in \eqref{eq:mimo_subcarrier_model}, calculated by
\begin{equation}
    \tilde{\bm Q}_k
    \triangleq
    \mathbb E\left\{
        \bm{\mathsf y}_k[n,t]\bm{\mathsf y}_k^{\text H}[n,t]
    \right\}
    =
    \bm H_k\bm R_{\text d}\bm H_k^{\text H}
    +\sigma_v^2\bm I_{M_{\text r}},~t\in \mathcal{T}_{\text{d}},
    \label{eq:Qk_def}
\end{equation}
where $\bm H_k=\alpha_k\bm A(\bm\vartheta_k)$ denotes the sensing spatial channel. Since $\bm H_k$ depends on the target position through the angular parameters, the covariance $\tilde{\bm Q}_k$ preserves spatial information about the target.
The corresponding \ac{siso} covariance  $\tilde Q_k=|\alpha_k|^2+\sigma_v^2$, which is a scalar independent of the target position, so marginalized data provides no direct position information.

To derive the performance limits for target localization, we introduce the intermediate geometric channel parameter vector
$\bm \xi =[\bm \tau^\text{T},\bm \varphi^\text{T},\psi,
\bm \kappa^\text{T}]^\text{T}$ for \ac{mimo} case, where $\bm \tau = [\tau_1,\cdots,\tau_K]^\text{T}$ and $\bm \varphi = [\varphi_1,\cdots,\varphi_K]^\text{T}$, 
which reduces to
$\bm \xi = [\bm \tau^\text{T},\bm \kappa^\text{T}]^\text{T}$ for \ac{siso} case.
The \ac{fim} \ac{wrt} $\bm\eta$ is obtained through the Jacobian transformation $\bm J(\bm\eta)=\bm T_\eta^\text{T}\bm J(\bm \xi)\bm T_\eta$,
where $\bm T_\eta \triangleq
{\partial \bm \xi}/{\partial\bm\eta^\text{T}}$. Partitioning the \ac{fim} as
\begin{equation}
    \bm J(\bm\eta)=
    \begin{bmatrix}
        \bm J_{\bm p\bm p} & \bm J_{\bm p \bm \kappa}\\
        \bm J_{\bm \kappa \bm p} & \bm J_{\bm \kappa \bm \kappa}
    \end{bmatrix},
\end{equation}
the \ac{efim} for the target position is given by 
 $\bm J_{\text e}(\bm p;\bm x)
    =
    \bm J_{\bm p\bm p}
    -
    \bm J_{\bm p \bm \kappa}
    \bm J_{\bm \kappa\bm \kappa}^{-1}
    \bm J_{\bm \kappa \bm p}$\cite{SheWin:J10a}.
Let $\hat{\bm p}$ be an unbiased estimates of the target position, then the \ac{mse} matrix of $\hat{\bm p}$
satisfies
\begin{equation}
    \mathbb E\left\{
    (\hat{\bm p}-\bm p)(\hat{\bm p}-\bm p)^\text{T}
    \right\}
    \succeq
    \bm J_{\text e}^{-1}(\bm p;\bm x),
\end{equation}
where the design variable is $\bm x=\rho$ for \ac{siso} time-allocation
design and $\bm x=(\rho,\bm R_{\text d})$ for \ac{mimo} joint
time-allocation and transmit-covariance design. 
Then, the \ac{mse} of the position estimate is bounded below by the \ac{speb}, defined
as\cite{SheWin:J10a}
\begin{equation}
    \mathcal{P}(\bm p;\bm x)
    \triangleq
    \text{tr}\left\{
    \bm J_{\text e}^{-1}(\bm p;\bm x)
    \right\}.
    \label{eq:speb_metric}
\end{equation}

\subsubsection{Communication Performance}
The communication performance is characterized by the achievable ergodic rate under pilot-based channel estimation. For receiver~$k$, the achievable rate $\bar{R}_k(\bm x)$ accounts for both the data-phase overhead $(1-\rho)$ and the channel estimation error with design variable $\bm x$. Since all receivers decode the same data symbols, the common-message broadcast rate is\cite{RenPenSon:J24}
\begin{equation}
    \bar{R}(\bm x) = \min_{k=1,\ldots,K}\; \bar{R}_k(\bm x).
    \label{eq:system_rate}
\end{equation}
The detailed rate expressions are derived in Section~\ref{sec:comm}.

\subsubsection{Joint Bound and Problem Formulation}
To jointly characterize \ac{sac} performance, we use the achievable data rate
and the localization bound as two coupled metrics. For visualization, the joint
\ac{sac} performance set is defined as
\begin{equation}
\mathcal C
 =
\left\{
(r,\varepsilon):
\exists\, \bm x \in \mathcal X ~~\text{s.t.}~
0\le r \le \bar R(\bm x),
~
\varepsilon \ge \mathcal{P}(\bm p;\bm x)
\right\},
\label{eq:joint_sac_set}
\end{equation}
where $\mathcal X=\mathcal X_{\text S}$ for \ac{siso} and
$\mathcal X=\mathcal X_{\text M}$ for \ac{mimo}, with
$\mathcal X_{\text S}=[\rho_{\min},1]$ and
$\mathcal X_{\text M}=\{(\rho,\bm R_{\text d}):
\rho_{\min}\le\rho\le1,\bm R_{\text d}\succeq\bm0,
\operatorname{tr}(\bm R_{\text d})=1\}$. The lower bound
$\rho_{\min}=M_{\text t}/T$ is imposed by the per-subcarrier spatially
orthogonal pilot condition.

For a prescribed communication requirement, the corresponding target localization
performance is obtained by solving the constrained optimization problem, given by
\begin{equation}
\min_{\bm x \in \mathcal X}\quad
\mathcal{P}(\bm p;\bm x),~~~~~~~\text{s.t.}~
\bar R(\bm x) \ge R_\text{th},
\label{eq:constrained_bound_problem}
\end{equation}
where $R_\text{th}$ is the minimum required  data rate. Varying
$R_\text{th}$ traces the joint \ac{sac} bound under communication constraints.

\section{Performance Limits Analysis for Data-aided Multistatic ISAC Systems}  \label{sec:data-aided}
In this section, we derive the performance limits for target localization under two data-aided schemes, and then characterize the ergodic rate for communication. Besides, we discuss the extensions from Gaussian signaling to finite-alphabet signaling.

\subsection{Geometric Channel Parameters}
\label{subsec:geo_params}
For later derivations, this subsection first establishes the geometric mapping
from the target position $\bm p$ to the intermediate geometric channel parameters
$\bm\theta_k=[\tau_k,\varphi_k,\psi]^{\text T}$ for the $k$-th bistatic link,
including bistatic delay $\tau_k$,  \ac{aoa} $\varphi_k$, and  \ac{aod}
$\psi$. The derivatives of $\bm \theta_k$ \ac{wrt}
 target position $\bm p$ are then derived to identify the information directions
associated with different measurements.
For $k$-th bistatic link, the geometric relationship between channel parameters and target position is given by 
\begin{equation}
    \begin{aligned}
        &~~~~~~\tau_k  = \frac{1}{c}(\Vert\bm p_{\text{t}}-\bm p\Vert +\Vert\bm p_{\text{r},k}-\bm p\Vert), \\
        &\psi = \text{arctan}\frac{y_{\text{t}}-y}{x_{\text{t}}-x},\quad\varphi_k = \text{arctan}\frac{y_{\text{r},k}-y}{x_{\text{r},k}-x},
    \end{aligned}
\end{equation}
where $c$ is the speed of light. The corresponding unit direction vectors for \ac{aod} and \ac{aoa} are $\bm e_{\text{t}} = [\cos\psi, \sin\psi]^\text{T}$ and $\bm e_{\text{r},k} = [\cos\varphi_k, \sin\varphi_k]^\text{T}$, with distances $d_{\text{t}} = \|\bm p - \bm p_{\text{t}}\|$  and $d_{\text{r},k} = \|\bm p - \bm p_{\text{r},k}\|$, respectively. 
The bistatic time-delay gradient \ac{wrt} $\bm p$ admits the bisector decomposition\cite{SheDaiWin:J14}
\begin{equation}
    \frac{\partial \tau_k}{\partial \bm p}
    = \frac{2}{c}\cos\Big(\frac{\psi - \varphi_k}{2}\Big)
    \bar{\bm q}_k,
    \quad
    \bar{\bm q}_k \triangleq \begin{bmatrix} \cos\bar\phi_k \\ \sin\bar\phi_k \end{bmatrix},
    \label{eq:bisector_decomp}
\end{equation}
where $\bar\phi_k = (\psi + \varphi_k)/2$ is the bistatic bisector angle. Besides, for \ac{mimo} systems, the angle gradients \ac{wrt} $\bm p$ are
\begin{equation}
    \frac{\partial \varphi_k}{\partial \bm p}
    = \frac{1}{d_{\text{r},k}} \bm e_{\text{r},k}^{\perp},
    \quad
    \frac{\partial \psi}{\partial \bm p}
    = \frac{1}{d_{\text{t}}} \bm e_{\text{t}}^{\perp},
    \label{eq:angle_gradients}
\end{equation}
where $\bm e^{\perp}$ denotes the $90^\circ$ counterclockwise rotation of $\bm e$.
Since the
complex amplitudes are independent of the target position, the nuisance amplitudes are first eliminated in the channel domain, yielding the \ac{efim}
$\bm J_{\text e}(\bm\theta)$ of the geometric channel parameters. The \ac{efim}
of target localization is then obtained through  geometric Jacobian matrix $\bm T$, given by
$\bm J_{\text e}(\bm p)
    =
    \bm T^{\text T}\bm J_{\text{e}}(\bm \theta)\bm T,
    \bm T
    =
    {\partial \bm \theta}/{\partial \bm p}$,
where $\bm \theta=[\bm \theta_1^\text{T},\ldots,\bm \theta_K^\text{T}]^\text{T}$.
Using the gradients in \eqref{eq:bisector_decomp} and
\eqref{eq:angle_gradients}, the \ac{efim} of target localization can be written as
\cite{SheWin:J10a}
\begin{equation}
    \bm J_{\text e}(\bm p)
    =
    \sum_{k=1}^{K}\Big(
    \mu_k^{\tau}\, \bar{\bm q}_k \bar{\bm q}_k^\text{T}
    + \mu_k^{\varphi}\, \bm e_{\text{r},k}^{\perp} {\bm e_{\text{r},k}^{\perp}}^\text{T}
    \Big)
    + \mu^{\psi}\, \bm e_{\text{t}}^{\perp} {\bm e_{\text{t}}^{\perp}}^\text{T},
    \label{eq:unified_efim}
\end{equation}
where $\mu^{\psi}=\sum_{k=1}^{K}\mu_k^{\psi}$, and the information intensities
$\mu_k^{\tau}$, $\mu_k^{\varphi}$, and $\mu_k^{\psi}$ depend on the system
configuration and scenario. For \ac{siso} case, only the time delay term is present with
$\mu_k^{\varphi}=\mu^{\psi}=0$. For \ac{mimo} case, the three terms carry information
along distinct spatial directions, providing inherent angular diversity even with
a single receiver.

\begin{remark}
\label{rem:info_directions}
The decomposition in~\eqref{eq:unified_efim} separates the topology-dependent
directions $\bar{\bm q}_k$, $\bm e_{\text{r},k}^{\perp}$, and
$\bm e_{\text{t}}^{\perp}$, which are fixed by the network geometry, from the
waveform-dependent intensities $\mu_k^{\tau}$, $\mu_k^{\varphi}$, and $\mu^{\psi}$.
Pilot allocation, data-covariance design, and data-aided processing change only
these scalar intensities, so optimizing the \ac{speb} amounts to shaping the
intensities along pre-determined spatial directions. This structure simplifies
the construction of the communication-constrained sensing objective.
\end{remark}

\begin{figure}[t]
    \centering
    \includegraphics[width=0.9\linewidth]{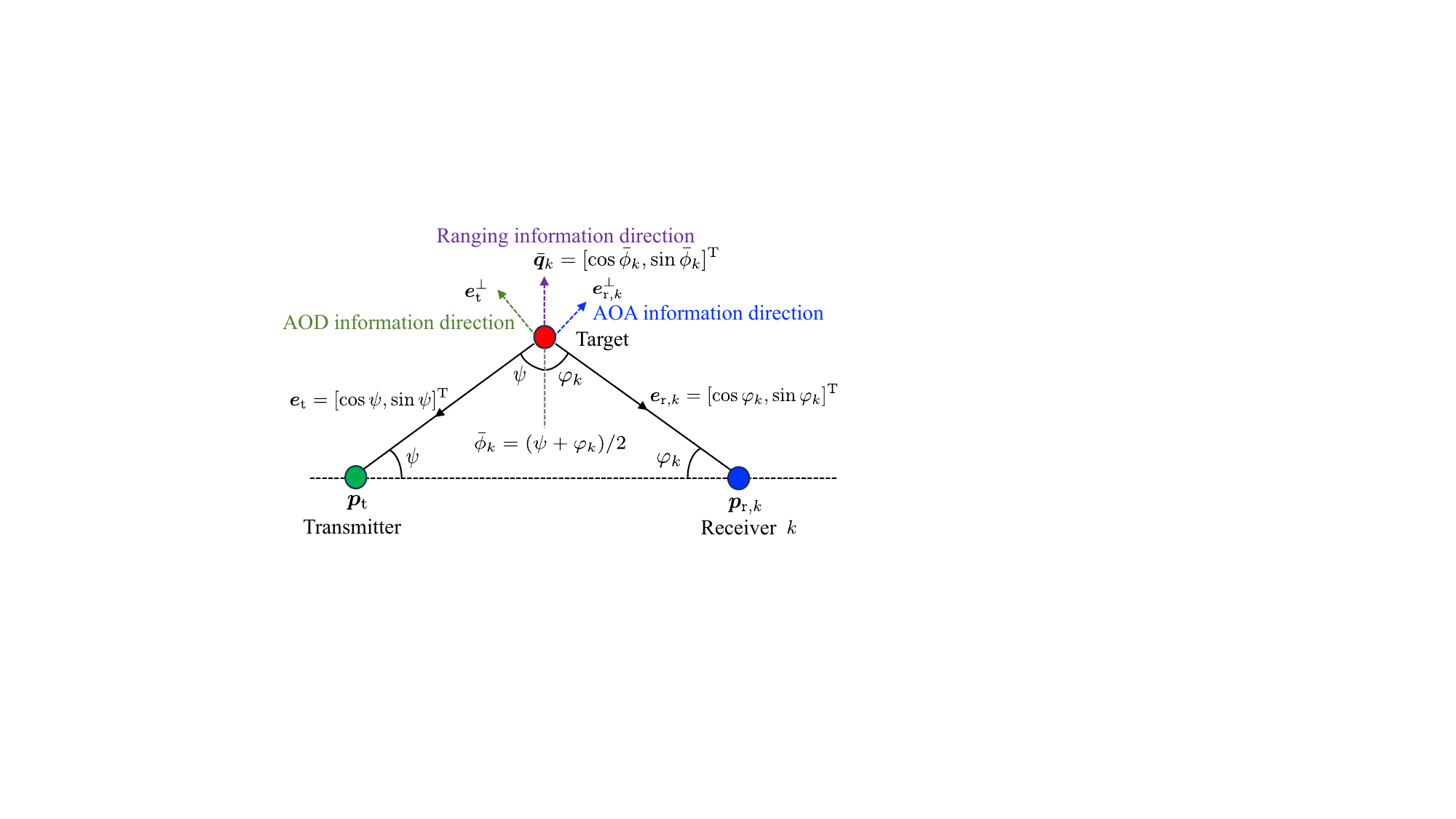}
    \caption{Spatial information directions for multistatic target localization: the bistatic bisector $\bar{\bm q}_k$ for delay, the receiver perpendicular $\bm e_{\text{r},k}^{\perp}$ for \ac{aoa}, and the transmitter perpendicular $\bm e_{\text{t}}^{\perp}$ for \ac{aod}. \ac{siso} retains only $\bar{\bm q}_k$.}
    \label{fig:info_directions}
\end{figure}

\subsection{Performance Limits for Target Localization}
\label{subsec:loc}
To quantify the data-aided gain, we first derive the performance limits for
target localization following 
\eqref{eq:direct_gaussian_fim}, i.e., known pilots provide the mean-information
baseline, marginalized Gaussian data in Scenario~1 provide covariance
information, and decoded data in Scenario~2 provide additional mean
information. For each link,
define the per-link augmented parameter vector as
$\bm\xi_k\triangleq
[\bm\theta_k^{\text T},\bm\alpha_k^{\text T}]^{\text T}$.
The channel-domain \ac{efim} of $\bm\theta_k = [\tau_k,\varphi_k,\psi]^\text{T}$ is obtained by eliminating the
nuisance amplitude $\bm\alpha_k = [\alpha_{\text{R},k},\alpha_{\text{I,k}}]^\text{T}$ with $\alpha_{\text{R},k} = \text{Re}\{\alpha_k\}$ and $\alpha_{\text{I},k} = \text{Im}\{\alpha_k\}$ from the \ac{fim} of $\bm\xi_k$. 

\subsubsection{Deterministic Known Pilot Part}
During the pilot phase, the transmitted symbols is known at the receiver, so
the vectorized observation has a parameter-dependent mean
$\alpha_k\text{vec}((\bm D(\tau_k)\otimes\bm A(\bm\vartheta_k))
\bm S_{\text p})$ and a parameter-independent noise covariance
$\sigma_v^2\bm I_{M_{\text r}NT_{\text p}}$. Thus, the pilot contribution for estimating channel parameters $\bm \theta_k$ is a
mean-information term, with the unknown complex amplitude treated as the
nuisance parameter. The \ac{efim} for $\bm \theta_k$ is given in the following Theorem. 

\begin{theorem}
\label{thm:mimo_s1_pilot}
Under the orthogonal pilot condition in~\eqref{eq:orthogonal_pilot_main}, the \ac{efim} of channel domain intermediate parameter $\bm \theta_k$ is 
\begin{equation}
\begin{aligned}
&\bm J_{\text p}^{\text e}(\bm\theta_k)
=\frac{2T_{\text p}|\alpha_k|^2}{\sigma_v^2}
\text{diag}\Big\{
M_{\text r}\omega^2
\big[s_2(N)-\frac{s_1^2(N)}{N}\big],\\
&~~~~~N\nu_{\text r}^2
\big[s_2(M_{\text r})-\frac{s_1^2(M_{\text r})}{M_{\text r}}\big],\frac{NM_{\text r}}{M_{\text t}}\nu_{\text t}^2
\big[s_2(M_{\text t})-\frac{s_1^2(M_{\text t})}{M_{\text t}}\big]
\Big\}
\end{aligned}
    \label{eq:mimo_pilot_diag}
\end{equation}
where 
$s_\ell(L)\triangleq\sum_{m=0}^{L-1}m^\ell$ for $\ell\in\{1,2\}$,
$\nu_{\text r}\triangleq u_{\text r}'(\varphi_k)$,
$\nu_{\text t}\triangleq u_{\text t}'(\psi)$, and
$\omega\triangleq2\pi\Delta f$. Here, $L$ denotes the relevant aperture
dimension instantiated as $N$, $M_{\text r}$, or $M_{\text t}$, while
$u_q'(\theta_q)$ is the derivative of the inter-element spatial phase of array
$q\in\{\text r,\text t\}$ with respect to the angle, as defined explicitly in
Appendix~\ref{app:proof_pilot_fim}.
\end{theorem}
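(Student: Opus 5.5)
We will apply the mean-only part of \eqref{eq:direct_gaussian_fim} to the pilot block. The covariance $\sigma_v^2\bm I$ does not depend on the parameters, so the \ac{fim} of $\bm\xi_k=[\bm\theta_k^{\text T},\bm\alpha_k^{\text T}]^{\text T}$ is
\[
\bm J_{\text p}(\bm\xi_k)=\frac{2}{\sigma_v^2}\sum_{t\in\mathcal T_{\text p}}\sum_{n\in\mathcal N}\text{Re}\Big\{\frac{\partial\bm\mu_{n,t}^{\text H}}{\partial\xi_i}\frac{\partial\bm\mu_{n,t}}{\partial\xi_j}\Big\},
\]
where $\bm\mu_{n,t}=\alpha_k e^{-j\omega n\tau_k}\bm a_{\text r}(\varphi_k)\bm a_{\text t}^{\text H}(\psi)\bm s_{\text p}[n,t]$. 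We take the \acp{ula} with phases $[\bm a_q(\theta)]_m=e^{-ju_q(\theta)m}$ for $m=0,\ldots,M_q-1$, i.e., indexed from the array edge. The derivatives are then
\[
\partial_{\tau_k}\bm\mu=-j\omega n\,\bm\mu,\qquad
\partial_{\varphi_k}\bm\mu=-j\nu_{\text r}\bm\Lambda_{M_{\text r}}\bm\mu,
\]
\[
\partial_{\psi}\bm\mu=\alpha_k e^{-j\omega n\tau_k}\bm a_{\text r}\big(j\nu_{\text t}\bm\Lambda_{M_{\text t}}\bm a_{\text t}\big)^{\text H}\bm s_{\text p}[n,t],
\]
\[
\partial_{\alpha_{\text R}}\bm\mu=\bm\mu/\alpha_k,\qquad
\partial_{\alpha_{\text I}}\bm\mu=j\bm\mu/\alpha_k,
\]
with $\bm\Lambda_L=\text{diag}\{0,\ldots,L-1\}$.

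Each Gram entry factorizes into a subcarrier sum, a receive-array quadratic form, and a transmit-side quadratic form in $\bm s_{\text p}[n,t]$. We sum the transmit-side quadratic forms over $t$ and replace them by $T_{\text p}\,\bm b_1^{\text H}\bm R_{\text p}\bm b_2=(T_{\text p}/M_{\text t})\bm b_1^{\text H}\bm b_2$, using \eqref{eq:orthogonal_pilot_main}. This is where the orthogonal pilot condition is essential: it removes any dependence on the particular pilot realization. All entries then reduce to the moments $s_0(L)=L$, $s_1(L)$ and $s_2(L)$, using $\bm a^{\text H}\bm a=L$, $\bm a^{\text H}\bm\Lambda\bm a=s_1$ and $\bm a^{\text H}\bm\Lambda^2\bm a=s_2$.

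The diagonal entries come out as follows, with $C\triangleq 2T_{\text p}|\alpha_k|^2/\sigma_v^2$:
\[
J_{\tau\tau}=C\,M_{\text r}\omega^2 s_2(N),\qquad
J_{\varphi\varphi}=C\,N\nu_{\text r}^2 s_2(M_{\text r}),\qquad
J_{\psi\psi}=C\,\frac{NM_{\text r}}{M_{\text t}}\nu_{\text t}^2 s_2(M_{\text t}).
\]
We expect the cross terms between geometric parameters to vanish. For example, the $\tau$–$\varphi$ entry is $\text{Re}\{(j\omega n)(-j\nu_{\text r}m)\}$-type, i.e. real times $j\cdot(-j)$ combinations. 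The factorized sums give products of first moments, and these need careful checking. I expect them to appear as $s_1(N)s_1(M_{\text r})$-type terms with a purely imaginary prefactor $(j)(j)^*$ …. The main obstacle is precisely this bookkeeping of cross terms. I would handle it by noting that, after nuisance elimination, only the \emph{centered} moments survive: subtracting the projection onto the amplitude directions is equivalent to replacing $n$ by $n-\bar n$ and $m$ by $m-\bar m$ (a phase-reference shift). Centered first moments vanish, so the $\tau$, $\varphi$, $\psi$ cross terms die.

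For the Schur complement itself, the amplitude block is $\bm J_{\alpha\alpha}=C\,(NM_{\text r}/|\alpha_k|^2)\bm I_2$. The coupling of each geometric parameter with $\bm\alpha_k$ comes only through the imaginary direction, with magnitude proportional to $s_1$, e.g. $C\,M_{\text r}\omega s_1(N)/|\alpha_k|$ for $\tau$. Subtracting $\bm J_{\theta\alpha}\bm J_{\alpha\alpha}^{-1}\bm J_{\alpha\theta}$ therefore yields $s_2(L)-s_1^2(L)/L$ on each diagonal entry. The same subtraction exactly cancels the uncentered cross terms, e.g. $C\omega\nu_{\text r}s_1(N)s_1(M_{\text r})-C\omega\nu_{\text r}\,M_{\text r}s_1(N)\cdot N s_1(M_{\text r})/(NM_{\text r})=0$. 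I will verify this cancellation explicitly for all three pairs, including the sign flip for $\psi$ caused by the conjugate on $\bm a_{\text t}^{\text H}$. The result is the diagonal form \eqref{eq:mimo_pilot_diag}.
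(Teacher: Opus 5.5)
Your proposal is correct and follows essentially the same route as the paper's proof: a mean-only FIM, the per-subcarrier orthogonal-pilot condition collapsing the transmit-side quadratic forms to $(T_{\text p}/M_{\text t})$ times array inner products so that every entry reduces to the moments $L$, $s_1(L)$, $s_2(L)$, and then a single Schur complement over the isotropic amplitude block $2T_{\text p}NM_{\text r}\sigma_v^{-2}\bm I_2$, which turns $s_2$ into $s_2-s_1^2/L$ and exactly cancels the uncentered cross terms. Two slips are harmless. First, the raw cross terms carry a real prefactor ($j\cdot j^{*}=1$), so they are nonzero before elimination, which your explicit cancellation already assumes. Second, with your convention $[\bm a_q]_m=e^{-ju_qm}$, the $\psi$-derivative should involve $(-j\nu_{\text t}\bm\Lambda_{M_{\text t}}\bm a_{\text t})^{\text H}$; this is a consistent sign flip that leaves the $\nu_{\text t}^2$ diagonal and all cancellations unchanged.
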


\emph{Proof:} See Appendix~\ref{app:proof_pilot_fim}.

\begin{remark}
Theorem~\ref{thm:mimo_s1_pilot} reveals two structural properties of the
pilot-induced channel-domain \ac{efim}. First, since
$s_2(L)-s_1^2(L)/L=\sum_{m=0}^{L-1}(m-(L-1)/2)^2$, only the centered frequency
and array apertures carry information, because the common phase is absorbed by
the unknown amplitude $\alpha_k$ and only relative phase variations across
subcarriers or antennas remain informative. Second, orthogonal pilots cancel the
cross terms and decouple the delay, \ac{aoa}, and \ac{aod} contributions, whose
intensities scale linearly with $T_{\text p}$ and the link \ac{snr}. This pilot
contribution provides the common non-data-aided baseline for both data-aided
scenarios.
\end{remark}

\subsubsection{Random Unknown Data Part in Scenario 1}
Scenario~1 refers to statistical data-aided sensing, in which the receiver knows the Gaussian distribution of data symbols but not their realizations. The data observation in
$\tilde{\bm{\mathsf y}}_{\text d,k}[t]
=\alpha_k(\bm D(\tau_k)\otimes\bm A(\bm\vartheta_k))\tilde{\bm{\mathsf s}}_{\text d}[t]
+\tilde{\bm{\mathsf v}}_k[t]$ is therefore averaged out using
$\tilde{\bm{\mathsf s}}_{\text d}[t]\sim
\mathcal{CN}(\bm0,\bm I_N\otimes\bm R_{\text d})$. Unlike the pilot part, the
resulting likelihood for data part in Scenario 1 has zero mean and carries information only through its
covariance. 

\begin{lemma}
\label{thm:mimo_s1_data}
Under Gaussian signaling for data symbols in Scenario 1, the \ac{fim} of all unknown parameters $\bm \xi_k$, which can be partitioned as $(\tau_k,\bm\vartheta_k,\bm\alpha_k)$, has  the block form
\begin{equation}
    \bm J_{\text d}(\bm \xi_k)
    =
    T_{\text d}
    \begin{bmatrix}
    0 & \bm0_{1\times2} & \bm0_{1\times2}\\
    \bm0_{2\times1} & \bm\Gamma_{\text d,k}
    & \bm b_{\text d,k}\bm \alpha_k^{\text T}\\
    \bm0_{2\times1} & \bm \alpha_k\bm b_{\text d,k}^{\text T}
    & \zeta_k\bm \alpha_k\bm \alpha_k^{\text T}
    \end{bmatrix},
    \label{eq:main_data_theta_fim}
\end{equation}
where the covariance matrix $\tilde{\bm Q}_k$ is defined in \eqref{eq:Qk_def}, and
\begin{equation}
\begin{aligned}
    &[\bm\Gamma_{\text d,k}]_{i,j}
    =
    \langle
    \dot{\tilde{\bm Q}}_{i,k},
    \dot{\tilde{\bm Q}}_{j,k}
    \rangle_k,
    ~~ i,j\in\{\varphi,\psi\},\\
    &[\bm b_{\text d,k}]_i
    =
    2\langle\dot{\tilde{\bm Q}}_{i,k},\bm C_k\rangle_k,
    ~~~
    \zeta_k=4\langle\bm C_k,\bm C_k\rangle_k,
    ~~ i\in\{\varphi,\psi\},
\end{aligned}
    \label{eq:main_data_gamma_def}
\end{equation}
with $\dot{\tilde{\bm Q}}_{\varphi,k}\triangleq
\partial\tilde{\bm Q}_k/\partial\varphi_k$,
$\dot{\tilde{\bm Q}}_{\psi,k}\triangleq
\partial\tilde{\bm Q}_k/\partial\psi$, the covariance inner product
$\langle\bm X,\bm Y\rangle_k\triangleq
N\text{tr}(\tilde{\bm Q}_k^{-1}\bm X\tilde{\bm Q}_k^{-1}\bm Y)$,
$\bm C_k=\bm A(\bm\vartheta_k)\bm R_{\text d}\bm A^{\text H}(\bm\vartheta_k)$, and
$\bm \alpha_k=[\alpha_{\text R,k},\alpha_{\text I,k}]^{\text T}$.
\end{lemma}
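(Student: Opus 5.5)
The approach is to apply the Gaussian \ac{fim} formula \eqref{eq:direct_gaussian_fim} directly to the data-phase observations with the mean term removed. After marginalizing $\tilde{\bm{\mathsf s}}_{\text d}[t]\sim\mathcal{CN}(\bm0,\bm I_N\otimes\bm R_{\text d})$, the vector $\tilde{\bm{\mathsf y}}_{\text d,k}[t]$ is zero-mean complex Gaussian with covariance
\[
\tilde{\bm Q}_k^{\text{full}}=|\alpha_k|^2(\bm D(\tau_k)\otimes\bm A(\bm\vartheta_k))(\bm I_N\otimes\bm R_{\text d})(\bm D(\tau_k)\otimes\bm A(\bm\vartheta_k))^{\text H}+\sigma_v^2\bm I_{NM_{\text r}}.
\]
Observations are independent across $t\in\mathcal T_{\text d}$, so the \ac{fim} is $T_{\text d}$ times the per-slot \ac{fim}. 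This accounts for the prefactor $T_{\text d}$.

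The first step is to simplify this covariance using the mixed-product property. Since $\bm D(\tau_k)$ is diagonal and unitary up to its entries, $\bm D\bm I_N\bm D^{\text H}=\bm I_N$. Hence
\[
\tilde{\bm Q}_k^{\text{full}}=\bm I_N\otimes(|\alpha_k|^2\bm C_k+\sigma_v^2\bm I_{M_{\text r}})=\bm I_N\otimes\tilde{\bm Q}_k,
\]
with $\bm C_k=\bm A\bm R_{\text d}\bm A^{\text H}$, consistent with \eqref{eq:Qk_def}. Two consequences follow immediately:
\begin{itemize}
\item The covariance does not depend on $\tau_k$, so every row and column of the \ac{fim} associated with $\tau_k$ vanishes. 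This gives the zero first row and column.
\item For any two parameters $\xi_i,\xi_j$, the trace term factors as $\text{tr}\big((\bm I_N\otimes\tilde{\bm Q}_k^{-1}\dot{\tilde{\bm Q}}_i)(\bm I_N\otimes\tilde{\bm Q}_k^{-1}\dot{\tilde{\bm Q}}_j)\big)=N\,\text{tr}(\tilde{\bm Q}_k^{-1}\dot{\tilde{\bm Q}}_i\tilde{\bm Q}_k^{-1}\dot{\tilde{\bm Q}}_j)$. This is exactly $\langle\cdot,\cdot\rangle_k$, and it is real because both factors are Hermitian.
\end{itemize}

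The second step computes the remaining derivatives.
\begin{itemize}
\item Angles: $\dot{\tilde{\bm Q}}_{\varphi,k}$ and $\dot{\tilde{\bm Q}}_{\psi,k}$ are kept as is. Their Gram matrix under $\langle\cdot,\cdot\rangle_k$ gives $\bm\Gamma_{\text d,k}$ directly.
\item Amplitudes: $|\alpha_k|^2=\alpha_{\text R,k}^2+\alpha_{\text I,k}^2$, so $\partial\tilde{\bm Q}_k/\partial\alpha_{\text R,k}=2\alpha_{\text R,k}\bm C_k$ and $\partial\tilde{\bm Q}_k/\partial\alpha_{\text I,k}=2\alpha_{\text I,k}\bm C_k$. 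In compact form, $\partial\tilde{\bm Q}_k/\partial\bm\alpha_k=2\bm\alpha_k\bm C_k$.
\item Cross block (angle $i$, amplitude component $m$): the entry is $\langle\dot{\tilde{\bm Q}}_{i,k},2\alpha_m\bm C_k\rangle_k=\alpha_m[\bm b_{\text d,k}]_i$. Collecting entries gives the blocks $\bm b_{\text d,k}\bm\alpha_k^{\text T}$ and its transpose.
\item Amplitude block: the $(m,m')$ entry is $4\alpha_m\alpha_{m'}\langle\bm C_k,\bm C_k\rangle_k$, which is the rank-one block $\zeta_k\bm\alpha_k\bm\alpha_k^{\text T}$.
\end{itemize}
Here $\dot{\tilde{\bm Q}}_{i,k}$ already carries the factor $|\alpha_k|^2$, which is consistent with the lemma's definitions.

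The main obstacle is the bookkeeping in the first step. One must justify that the delay phases cancel in the covariance, which relies on the subcarrier independence and the block-diagonal form $\bm I_N\otimes\bm R_{\text d}$. One must also correctly extract the factor $N$ from the Kronecker trace, and confirm that the real-part operator and the factor conventions of \eqref{eq:direct_gaussian_fim} introduce no extra factor of 2 for complex circular data. I would verify the latter by checking that the covariance term in \eqref{eq:direct_gaussian_fim} is the standard Slepian–Bangs trace term with no prefactor.
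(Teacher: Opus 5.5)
Your proposal is correct and follows essentially the same route as the paper's proof in Appendix~\ref{app:proof_data_fim}. You marginalize to get the $\tau_k$-independent covariance $\bm I_N\otimes\tilde{\bm Q}_k$ (using $\bm D_k\bm D_k^{\text H}=\bm I_N$), apply the zero-mean Gaussian trace formula with the factor $NT_{\text d}$, and differentiate with respect to the two angles and the real amplitude components (with $\partial\tilde{\bm Q}_k/\partial\alpha_{\text R,k}=2\alpha_{\text R,k}\bm C_k$, $\partial\tilde{\bm Q}_k/\partial\alpha_{\text I,k}=2\alpha_{\text I,k}\bm C_k$), which yields the $\bm b_{\text d,k}\bm\alpha_k^{\text T}$ and rank-one $\zeta_k\bm\alpha_k\bm\alpha_k^{\text T}$ blocks exactly as you describe.
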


\emph{Proof:} See Appendix~\ref{app:proof_data_fim}.

\begin{remark}
Lemma~\ref{thm:mimo_s1_data} reveals that statistical data-aided sensing is
 covariance-based. Marginalization removes the delay-dependent
subcarrier phases, so unknown data provide no delay information. In
contrast, \ac{aoa} and \ac{aod} perturb the spatial covariance, whose derivatives generate the angular information matrix
$\bm\Gamma_{\text d,k}$. The vector $\bm b_{\text d,k}$ quantifies how 
angular variations couple with the unknown channel magnitude. Moreover, the
rank-one block $\zeta_k\bm\alpha_k\bm\alpha_k^{\text T}$ shows that the
marginalized data identify only $|\alpha_k|$, not its phase, so the data FIM
must be combined with the pilot FIM before eliminating the shared amplitude.
This is consistent with~\cite{SonYuXu:J26}, where only the $|\alpha_k|$ associated with Gaussian information is
estimated, while its phase is not identified from the marginalized data.
\end{remark}

Theorem~\ref{thm:mimo_s1_pilot} and Lemma~\ref{thm:mimo_s1_data}
characterize the pilot and marginalized-data information, respectively.
Because the two observations share the same nuisance amplitude $\bm\alpha_k$,
their \acp{fim} should be summed before applying the Schur complement.
For compactness, define the per-pilot-symbol intensities
\begin{equation}
    \gamma_{\text p,k}^{i}
    \triangleq
    {T_{\text p}^{-1}}
    [\bm J_{\text p}^{\text e}(\bm\theta_k)]_{i,i},
    \qquad i\in\{\tau,\varphi,\psi\}.
    \label{eq:mimo_pilot_gamma_angle}
\end{equation}

\begin{theorem}
\label{thm:mimo_s1}
Under Scenario~1, the channel-domain \ac{efim} obtained from the joint pilot
and data observations is
\begin{equation}
    \bm J_{1}^{\text e}(\bm \theta_k)
    =
    \text{diag}\{\lambda_k^\tau,\lambda_k^\varphi,\lambda_k^\psi\},
    \label{eq:main_total_schur}
\end{equation}
where the delay, \ac{aoa}, and \ac{aod} intensities are
\begin{equation}
\begin{aligned}
\lambda_k^\tau
&=
T_{\text p}\,\gamma_{\text{p},k}^{\tau},\\
\lambda_k^\varphi
&=
T_{\text p}\,\gamma_{\text{p},k}^{\varphi}
+
T_{\text d}\,\gamma_k^{\varphi}(\bm R_{\text d}),\\
\lambda_k^\psi
&=
T_{\text p}\,\gamma_{\text{p},k}^{\psi}
+
T_{\text d}\,\gamma_k^{\psi}(\rho,\bm R_{\text d}),
\end{aligned}
\label{eq:mimo_lambda_s1}
\end{equation}
where $T_{\text p}=\rho T$, $T_{\text d}=(1-\rho)T$,
$\gamma_{\text p,k}^{i}$ is defined in \eqref{eq:mimo_pilot_gamma_angle}, and
\begin{equation}
    \gamma_k^{\varphi}(\bm R_{\text d})
    =[\bm\Gamma_{\text d,k}]_{\varphi,\varphi}.
    \label{eq:mimo_gamma_data_aoa}
\end{equation}
The effective \ac{aod} angular intensity is
\begin{equation}
\gamma_k^{\psi}(\rho,\bm R_{\text d})
=[\bm\Gamma_{\text d,k}]_{\psi,\psi}
-\frac{T_{\text d}[\bm b_{\text d,k}]_{\psi}^2\|\bm\alpha_k\|^2}
       {j_{\text p,k}^{\alpha}+T_{\text d}\zeta_k\|\bm\alpha_k\|^2},
\label{eq:mimo_gamma_data}
\end{equation}
where $j_{\text p,k}^{\alpha}=2T_{\text p}NM_{\text r}/\sigma_v^2$ is the
isotropic pilot amplitude information. The constant-modulus receive steering
vector gives $[\bm b_{\text d,k}]_{\varphi}=0$ and
$[\bm\Gamma_{\text d,k}]_{\varphi,\psi}=0$, as shown in
Appendix~\ref{app:proof_joint_efim}, whereas
$[\bm b_{\text d,k}]_{\psi}$ is generally nonzero when the transmit beampattern
$\bm a_{\text t}^{\text H}(\psi)\bm R_{\text d}\bm a_{\text t}(\psi)$ varies
with $\psi$.
\end{theorem}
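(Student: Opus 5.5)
Because the pilot block and the data block are statistically independent given $\bm\xi_k$, and they share the nuisance amplitude $\bm\alpha_k$, I would first build the full $5\times5$ \ac{fim} of $\bm\xi_k=(\tau_k,\varphi_k,\psi,\bm\alpha_k)$ as the sum $\bm J_{\text p}(\bm\xi_k)+\bm J_{\text d}(\bm\xi_k)$. Only then would I take a single Schur complement with respect to $\bm\alpha_k$.

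\textbf{Pilot FIM.} For $\bm J_{\text p}(\bm\xi_k)$ I will reuse the intermediate result from the proof of Theorem~\ref{thm:mimo_s1_pilot}, before the amplitude is eliminated. Under the orthogonal pilot condition~\eqref{eq:orthogonal_pilot_main}, the amplitude block is the isotropic $j_{\text p,k}^{\alpha}\bm I_2$ with $j_{\text p,k}^{\alpha}=2T_{\text p}NM_{\text r}\,\text{tr}(\bm R_{\text p})/\sigma_v^2=2T_{\text p}NM_{\text r}/\sigma_v^2$.

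The cross terms between each $\theta_i$ and $\bm\alpha_k$ have the form $2\text{Re}\{\partial\bm\mu^{\text H}/\partial\theta_i\cdot\partial\bm\mu/\partial\bm\alpha\}$. These are proportional to the uncentered sums $s_1(\cdot)$ along the phase direction $\bm\alpha_k^{\perp}=[-\alpha_{\text I,k},\alpha_{\text R,k}]^{\text T}$, i.e., they are orthogonal to $\bm\alpha_k$. The $\theta$-block is diagonal (orthogonal pilots kill the cross terms) with the uncentered $s_2$ entries. Eliminating $\bm\alpha_k$ from the pilot FIM alone produces the centered values $T_{\text p}\gamma_{\text p,k}^{i}$ of Theorem~\ref{thm:mimo_s1_pilot}.

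\textbf{Data FIM and the amplitude block.} Lemma~\ref{thm:mimo_s1_data} gives the data couplings. They lie only along $\bm\alpha_k$ (the rank-one blocks $\bm b_{\text d,k}\bm\alpha_k^{\text T}$ and $\zeta_k\bm\alpha_k\bm\alpha_k^{\text T}$), and the delay row is zero. The summed amplitude block $j_{\text p,k}^{\alpha}\bm I_2+T_{\text d}\zeta_k\bm\alpha_k\bm\alpha_k^{\text T}$ therefore has eigenvectors $\bm\alpha_k$ and $\bm\alpha_k^{\perp}$. I would invert it in the orthogonal basis $\{\bm\alpha_k/\|\bm\alpha_k\|,\bm\alpha_k^{\perp}/\|\bm\alpha_k\|\}$. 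In that basis the Schur complement splits into two independent corrections:
\begin{itemize}
\item the phase-direction correction, $\propto 1/j_{\text p,k}^{\alpha}$, which reproduces exactly the pilot centering and is unaffected by the data;
\item the magnitude-direction correction, $T_{\text d}^2\,\bm b\bm b^{\text T}\|\bm\alpha_k\|^2/(j_{\text p,k}^{\alpha}+T_{\text d}\zeta_k\|\bm\alpha_k\|^2)$, which acts on the angular block.
\end{itemize}
For this split to be exact I must check that the pilot cross terms have no component along $\bm\alpha_k$. I would verify this by writing $\partial\bm\mu/\partial\theta_i=j\alpha_k(\cdot)$, so the cross term is $\text{Re}\{j\alpha_k^*\cdots\}$, which is a phase-type direction.

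\textbf{Zero entries and assembly.} Next I would show $[\bm b_{\text d,k}]_\varphi=0$ and $[\bm\Gamma_{\text d,k}]_{\varphi\psi}=0$. Write $\tilde{\bm Q}_k=|\alpha_k|^2 c\,\bm a_{\text r}\bm a_{\text r}^{\text H}+\sigma_v^2\bm I$ with $c=\bm a_{\text t}^{\text H}\bm R_{\text d}\bm a_{\text t}$. Then
\begin{itemize}
\item $\partial\tilde{\bm Q}/\partial\varphi=|\alpha_k|^2c(\dot{\bm a}_{\text r}\bm a_{\text r}^{\text H}+\bm a_{\text r}\dot{\bm a}_{\text r}^{\text H})$;
\item $\partial\tilde{\bm Q}/\partial\psi=|\alpha_k|^2c'\bm a_{\text r}\bm a_{\text r}^{\text H}$ and $\bm C_k=c\,\bm a_{\text r}\bm a_{\text r}^{\text H}$.
\end{itemize}
Using the Woodbury form of $\tilde{\bm Q}_k^{-1}$, each of the two traces reduces to a multiple of $\text{Re}\{\bm a_{\text r}^{\text H}\dot{\bm a}_{\text r}\}$. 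This vanishes because $\|\bm a_{\text r}(\varphi)\|^2=M_{\text r}$ is constant, which is the constant-modulus property.

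Hence only the $\psi$ entry receives the magnitude correction, giving~\eqref{eq:mimo_gamma_data} after factoring out $T_{\text d}$. The $\tau$ row is untouched by the data, so $\lambda_k^\tau=T_{\text p}\gamma^\tau_{\text p,k}$. The $\varphi$ entry becomes $T_{\text p}\gamma^\varphi_{\text p,k}+T_{\text d}[\bm\Gamma_{\text d,k}]_{\varphi\varphi}$.

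Finally I must confirm that the off-diagonal entries vanish. Pilot-side cross terms between $\tau$, $\varphi$ and $\psi$ after centering are zero by Theorem~\ref{thm:mimo_s1_pilot}. Magnitude-direction corrections only involve $\psi$. Phase-direction corrections coincide with the pilot-only ones, which were already shown to give a diagonal result.

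\textbf{Main obstacle.} I expect the hardest step to be this orthogonality bookkeeping: showing that the pilot $\theta$–$\bm\alpha$ couplings are purely along $\bm\alpha_k^{\perp}$, while the data couplings are purely along $\bm\alpha_k$. That is what makes the joint Schur complement separate cleanly into the stated diagonal form.
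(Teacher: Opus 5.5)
Your proposal is correct and follows the paper's proof essentially step for step. Like the paper, you sum the pilot and data FIMs before a single Schur complement, invert the amplitude block in the radial/tangential basis so that the tangential pilot couplings and the radial data couplings separate, and obtain $[\bm b_{\text d,k}]_\varphi=[\bm\Gamma_{\text d,k}]_{\varphi,\psi}=0$ from $\tilde{\bm Q}_k^{-1}\bm a_{\text r}\parallel\bm a_{\text r}$ and $\text{Re}\{\bm a_{\text r}^{\text H}\dot{\bm a}_{\text r}\}=0$. The paper gets the tangentiality of the pilot couplings more cleanly from $\text{Re}\{c_{i0,k}\}=\tfrac12\,\partial c_{0,k}/\partial\theta_i=0$, since $c_{0,k}=T_{\text p}NM_{\text r}$ does not depend on $\bm\theta_k$.

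One harmless slip: the uncentered pilot $\bm\theta$-block is not diagonal. For example, $\text{Re}\{c_{\tau\varphi,k}\}=-T_{\text p}\omega\nu_{\text r}s_1(N)s_1(M_{\text r})\neq0$, and this is cancelled only by the rank-one tangential correction. Invoking Theorem~\ref{thm:mimo_s1_pilot} for the diagonality of the pilot-only EFIM, as you do in your final assembly, is therefore the right step.
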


\emph{Proof:} See Appendix~\ref{app:proof_joint_efim}.

\begin{remark}
Theorem~\ref{thm:mimo_s1} shows why the Scenario~1 gain is asymmetric.
Marginalizing Gaussian data removes the subcarrier phase rotations, so the data
\ac{fim} has a zero delay block and ranging remains pilot-only. The received
data covariance still depends on the spatial channel, and hence it contributes
angular information in \ac{mimo}. This covariance identifies the channel
magnitude but not its phase, which is why the pilot and data \acp{fim} must be
summed before the common amplitude is eliminated. After this Schur complement,
the \ac{aoa} term is directly additive because it is decoupled from the
amplitude, whereas the \ac{aod} term suffers the penalty in
\eqref{eq:mimo_gamma_data} when the transmit beampattern varies with $\psi$.
The penalty is reduced by pilot amplitude information and vanishes for an
angle-independent transmit beampattern. In \ac{siso}, the marginalized data
covariance is scalar and position-independent, so Scenario~1 reduces to the
pilot-only delay bound.
\end{remark}

The channel-domain \ac{efim} in Theorem~\ref{thm:mimo_s1} is expressed in
$\bm \theta_k = [\tau_k,\varphi_k,\psi]^\text{T}$. Applying the geometric Jacobian of
Section~\ref{subsec:geo_params} by the chain rule maps it to the position domain.
Each channel parameter has a geometry-fixed gradient direction.

\begin{corollary}
\label{thm:mimo_s1_position}
Under Scenario~1, the \ac{efim} of target localization by joint observation of pilot and data is given by
\begin{equation}
\bm J_{1}^{\text e}(\bm p)
=
\sum_{k=1}^{K}\left(
\mu_k^{\tau}\bar{\bm q}_k\bar{\bm q}_k^\text{T}
+ \mu_k^{\varphi}\bm e_{\text{r},k}^{\perp}{\bm e_{\text{r},k}^{\perp}}^\text{T}
\right)
+ \mu^{\psi}\bm e_{\text{t}}^{\perp}{\bm e_{\text{t}}^{\perp}}^\text{T},
\label{eq:mimo_s1_efim}
\end{equation}
where $\mu^\psi=\sum_k\mu_k^\psi$ and
\begin{equation}
    \mu_k^\tau=\frac{4}{c^2}\cos^2\!\left(\frac{\psi-\varphi_k}{2}\right)\lambda_k^\tau,
    ~~
    \mu_k^\varphi=\frac{\lambda_k^\varphi}{d_{\text r,k}^2},
    ~~
    \mu_k^\psi=\frac{\lambda_k^\psi}{d_{\text t}^2}.
    \label{eq:mimo_position_intensities}
\end{equation}
\end{corollary}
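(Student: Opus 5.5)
The corollary follows by pushing the channel-domain \ac{efim} of Theorem~\ref{thm:mimo_s1} to the position domain with the geometric Jacobian of Section~\ref{subsec:geo_params}. The plan is to first justify working with the channel-domain \ac{efim}, then apply the chain rule, and finally expand the resulting quadratic form.

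\textbf{Step 1: eliminate the amplitudes first.} The amplitudes $\bm\alpha_k$ do not depend on $\bm p$. The reparametrization $\bm\eta\mapsto\bm\xi$ therefore has a block Jacobian of the form $\text{diag}\{\bm T,\bm I\}$. For such a transformation, the Schur complement eliminating $\bm\kappa$ commutes with the congruence by $\bm T$, so
\[
\bm J_1^{\text e}(\bm p)=\bm T^{\text T}\bm J_1^{\text e}(\bm\theta)\bm T .
\]

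\textbf{Step 2: block structure over links.} Each link's amplitude $\bm\alpha_k$ enters only the observations of receiver $k$, and the noises are independent across receivers. The joint \ac{fim} of $(\bm\theta,\bm\kappa)$ is therefore a sum of per-link terms, and the amplitude block $\bm J_{\bm\kappa\bm\kappa}$ is block diagonal over $k$.

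One subtlety needs care: $\psi$ is shared by all links. I will handle this by embedding each per-link \ac{fim} of $\bm\xi_k=[\tau_k,\varphi_k,\psi,\bm\alpha_k^{\text T}]^{\text T}$ into the global parameter space. Since $\bm\alpha_k$ couples only to the link-$k$ rows, the Schur complement over $\bm\kappa$ splits link by link. The global channel-domain \ac{efim} is then the sum of the embedded per-link $\text{diag}\{\lambda_k^\tau,\lambda_k^\varphi,\lambda_k^\psi\}$, with the $\psi$ entries adding.

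\textbf{Step 3: expand the quadratic form.} Writing the Jacobian row-wise gives
\[
\bm T^{\text T}\bm J_1^{\text e}(\bm\theta)\bm T=\sum_k\Big(\lambda_k^\tau\,\nabla\tau_k\nabla\tau_k^{\text T}+\lambda_k^\varphi\,\nabla\varphi_k\nabla\varphi_k^{\text T}+\lambda_k^\psi\,\nabla\psi\nabla\psi^{\text T}\Big),
\]
where $\nabla=\partial/\partial\bm p$. Diagonality of the per-link \ac{efim} is what removes the cross terms.

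Next, substitute the gradients from \eqref{eq:bisector_decomp} and \eqref{eq:angle_gradients}:
\begin{itemize}
\item $\nabla\tau_k\nabla\tau_k^{\text T}=\frac{4}{c^2}\cos^2\big(\frac{\psi-\varphi_k}{2}\big)\bar{\bm q}_k\bar{\bm q}_k^{\text T}$;
\item $\nabla\varphi_k\nabla\varphi_k^{\text T}=d_{\text r,k}^{-2}\bm e_{\text r,k}^\perp{\bm e_{\text r,k}^\perp}^{\text T}$;
\item $\nabla\psi\nabla\psi^{\text T}=d_{\text t}^{-2}\bm e_{\text t}^\perp{\bm e_{\text t}^\perp}^{\text T}$.
\end{itemize}
Collecting the $\psi$ terms gives $\mu^\psi=\sum_k\mu_k^\psi$, which reproduces \eqref{eq:mimo_s1_efim}–\eqref{eq:mimo_position_intensities}.

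\textbf{Main obstacle.} The delicate point is Step~2: showing that per-link Schur complements can be summed even though the parameter $\psi$ is shared. The argument relies on the amplitude blocks being link-disjoint. I would verify this by writing out $\bm J_{\bm\theta\bm\kappa}\bm J_{\bm\kappa\bm\kappa}^{-1}\bm J_{\bm\kappa\bm\theta}$ explicitly as $\sum_k$ of link-$k$ contributions.

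A minor check is the sign convention in \eqref{eq:angle_gradients}. It is harmless here, because only outer products of the gradients appear.
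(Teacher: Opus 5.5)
Your route is essentially the paper's. The amplitudes are eliminated in the channel domain: the paper invokes the same fact, that they do not depend on $\bm p$, and your Step~1 makes the commutation of Schur complement and congruence precise. The per-link diagonal EFIM of Theorem~\ref{thm:mimo_s1} is then mapped through $\bm J_{\text e}(\bm p)=\bm T^{\text T}\bm J_{\text e}(\bm\theta)\bm T$, and the outer products of the gradients in \eqref{eq:bisector_decomp}--\eqref{eq:angle_gradients} give the intensities.

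The paper stacks $\bm\theta=[\bm\theta_1^{\text T},\ldots,\bm\theta_K^{\text T}]^{\text T}$ with $\psi$ replicated in every link, so its Jacobian repeats $\partial\psi/\partial\bm p$. Your embedding with a single shared $\psi$, whose diagonal entries add, is equivalent and yields the same $\mu^\psi=\sum_k\mu_k^\psi$. Steps~1 and~3 are correct, and you are right that the sign conventions in the gradients are immaterial.

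The justification you give for Step~2 does not hold, however. Independent noise does not make the Scenario~1 FIM additive over links, because the same data realization drives every receiver. After marginalizing $\bm{\mathsf s}_{\text d}[n,t]\sim\mathcal{CN}(\bm 0,\bm R_{\text d})$, the stacked data-phase observation has cross-receiver covariance blocks $\alpha_k\alpha_l^*e^{-j2\pi n\Delta f(\tau_k-\tau_l)}\bm A(\bm\vartheta_k)\bm R_{\text d}\bm A^{\text H}(\bm\vartheta_l)$ for $k\neq l$. Under that exact joint model, two things break:
\begin{itemize}
\item $\bm J_{\bm\kappa\bm\kappa}$ is not block diagonal, since these blocks depend on the relative phase of $\alpha_k$ and $\alpha_l$;
\item the marginalized data contribute differential-delay information.
\end{itemize}
Consequently the per-link Schur complements would not simply add.

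The additive form in the corollary rests on a modeling choice the paper makes implicitly: each link's data observation is marginalized separately. This is visible in the per-link covariance $\tilde{\bm Q}_k$ of Lemma~\ref{thm:mimo_s1_data} and in the separable cost $\mathcal C_1(\bm p)$ of Algorithm~\ref{alg:statistical_localizer}. You should state this as an explicit assumption, since your noise-independence argument is valid only for the pilot phase. With that assumption in place, the rest of your argument goes through unchanged.
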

It remains to convert a $2\times2$ position \ac{efim} into the localization
bound. Since the \ac{efim} is a sum of rank-one directional terms, its inverse
trace admits a compact closed form.

\begin{proposition}
\label{prop:directional_speb}
Let a $2\times2$ position \ac{efim} be a sum of rank-one terms with directional
index set $\{(c_m,\omega_m)\}_{m=1}^{M}$, where $c_m$ and $\omega_m$ are the
intensity and direction angle of the $m$-th term, i.e., $\bm J=\sum_m c_m\bm e(\omega_m)\bm e^{\text T}(\omega_m)$. The corresponding \ac{speb} is given by
\begin{equation}
    \mathcal{P}(\bm p)
    =
    \frac{\sum_{m=1}^{M}c_m}
    {\sum_{1\le m<n\le M}c_m c_n\sin^2(\omega_m-\omega_n)},
    \label{eq:speb_rankone}
\end{equation}
provided the denominator is nonzero.
\end{proposition}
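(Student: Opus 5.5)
For a symmetric positive definite $2\times2$ matrix, $\operatorname{tr}(\bm J^{-1})=\operatorname{tr}(\bm J)/\det(\bm J)$. This follows directly from the adjugate formula $\bm J^{-1}=\operatorname{adj}(\bm J)/\det(\bm J)$, since in two dimensions $\operatorname{tr}(\operatorname{adj}\bm J)=\operatorname{tr}\bm J$. The plan is therefore to compute the trace and the determinant of $\bm J=\sum_m c_m\bm e(\omega_m)\bm e^{\text T}(\omega_m)$ separately in closed form, and then take their ratio.

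The trace is immediate. Because each $\bm e(\omega_m)=[\cos\omega_m,\sin\omega_m]^{\text T}$ is a unit vector,
\[
\operatorname{tr}(\bm J)=\sum_m c_m\|\bm e(\omega_m)\|^2=\sum_m c_m .
\]

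The determinant is the only step requiring care. I will write $\bm J=\bm E\bm C\bm E^{\text T}$, where $\bm E=[\bm e(\omega_1),\ldots,\bm e(\omega_M)]\in\mathbb R^{2\times M}$ and $\bm C=\text{diag}\{c_1,\ldots,c_M\}$. I then apply the Cauchy--Binet formula, which expresses $\det(\bm J)$ as a sum over all pairs $m<n$ of the $2\times2$ minors of $\bm E\bm C^{1/2}$, each squared:
\[
\det(\bm J)=\sum_{m<n}c_mc_n\det[\bm e(\omega_m),\bm e(\omega_n)]^2 .
\]
Each such minor is
\[
\det[\bm e(\omega_m),\bm e(\omega_n)]=\cos\omega_m\sin\omega_n-\sin\omega_m\cos\omega_n=\sin(\omega_n-\omega_m),
\]
so squaring gives $\sin^2(\omega_m-\omega_n)$. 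As a cross-check avoiding Cauchy--Binet, one can expand $\det(\bm J)=J_{11}J_{22}-J_{12}^2$ as a double sum. The diagonal terms $m=n$ cancel because $\cos^2\omega\sin^2\omega-(\cos\omega\sin\omega)^2=0$. For each off-diagonal pair, the contributions combine as
\[
\cos^2\omega_m\sin^2\omega_n+\sin^2\omega_m\cos^2\omega_n-2\cos\omega_m\sin\omega_m\cos\omega_n\sin\omega_n=(\cos\omega_m\sin\omega_n-\sin\omega_m\cos\omega_n)^2 .
\]

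Dividing the trace by the determinant yields \eqref{eq:speb_rankone}. The stated nonzero-denominator condition is exactly $\det(\bm J)\neq0$, i.e., invertibility of $\bm J$, which is what makes the \ac{speb} well defined. This holds, for instance, when at least two terms with positive intensity have non-collinear directions, since all $c_m\ge0$.

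Finally, to connect with Corollary~\ref{thm:mimo_s1_position}, I will note that each of the directions $\bar{\bm q}_k$, $\bm e_{\text r,k}^{\perp}$ and $\bm e_{\text t}^{\perp}$ is of the form $\bm e(\omega)$. Their angles are $\bar\phi_k$, $\varphi_k+\pi/2$ and $\psi+\pi/2$ respectively, with intensities $\mu_k^\tau$, $\mu_k^\varphi$ and $\mu^\psi$, so the proposition applies directly to \eqref{eq:mimo_s1_efim}.
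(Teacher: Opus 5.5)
Your proposal is correct and takes essentially the same route as the paper, whose proof likewise combines $\operatorname{tr}(\bm J^{-1})=\operatorname{tr}(\bm J)/\det(\bm J)$ with $\operatorname{tr}(\bm J)=\sum_m c_m$ and $\det(\bm J)=\sum_{m<n}c_mc_n\sin^2(\omega_m-\omega_n)$; the paper cites Shen and Win for the determinant identity, whereas you derive it via Cauchy--Binet and the direct $J_{11}J_{22}-J_{12}^2$ expansion. Your $\bm C^{1/2}$ factorization tacitly assumes $c_m\ge 0$, which holds for these Fisher intensities, and the assumption is not needed in your direct-expansion check or if you apply Cauchy--Binet to $(\bm E\bm C)\bm E^{\text T}$.
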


\emph{Proof:}
For a $2\times2$ \ac{efim} for target localization,
$\text{tr}(\bm J)=\sum_m c_m$ and
$\det(\bm J)=\sum_{m<n}c_mc_n\sin^2(\omega_m-\omega_n)$. Since
$\text{tr}(\bm J^{-1})=\text{tr}(\bm J)/\det(\bm J)$,
\eqref{eq:speb_rankone} follows\cite{SheWin:J10a}. \hfill$\square$

\begin{remark}
The numerator of \eqref{eq:speb_rankone} represents the total information,
whereas the denominator characterizes its directional spread. For two
directions, the \ac{speb} is minimized when they are orthogonal and diverges
when they become collinear. Consequently, a single \ac{mimo} receiver can
localize the target using two non-collinear directions among the bistatic
delay, \ac{aoa}, and \ac{aod} directions. In contrast, a \ac{siso} receiver
provides only the bistatic-delay direction and therefore requires at least two
spatially separated receivers.
\end{remark}

Applying \eqref{eq:speb_rankone} to
\eqref{eq:mimo_s1_efim}, the directional information set in Scenario~1 is given by
\begin{equation}
\mathcal I_{1}
=
\{(\mu_k^\tau,\bar\phi_k)\}_{k=1}^K
\cup
\{(\mu_k^\varphi,\varphi_k+\tfrac{\pi}{2})\}_{k=1}^K
\cup
\{(\mu^\psi,\psi+\tfrac{\pi}{2})\}.
\label{eq:mimo_s1_direction_set}
\end{equation}
Then, the \ac{speb} for target localization in Scenario~1 is
\begin{equation}
    \mathcal{P}_{1}(\bm p)
    =
    \frac{\sum_{k=1}^{K}\big(\mu_k^\tau+\mu_k^\varphi\big)+\mu^\psi}
    {\sum_{1\le m<n\le 2K+1}
    c_m c_n\sin^2(\omega_m-\omega_n)},
    \label{eq:speb_s1}
\end{equation}
where
$\{(c_m,\omega_m)\}_{m=1}^{2K+1}=\mathcal I_{1}$ and 
\eqref{eq:speb_s1} shows that the pilot fraction $\rho$ shapes not only the
total localization information but also its distribution across directions.
Larger $\rho$ strengthens the pilot-driven delay direction, while smaller $\rho$
allocates more observations to the data-driven angular directions, so the
optimal allocation depends on both the intensities and the directional geometry.

\subsubsection{Scenario 2: Joint Data-Aided Sensing and Decoding}
Scenario~2 considers the reliable-recovery regime in which the transmitted data
realization is decoded without error and then treated as known, which provides
an information-upper-bound benchmark for data-aided sensing with decoded
data~\cite{ZhaKamAlo:J25}. Specifically,
let $\bm S_{\text d}$ denote a realization of the random data matrix
$\bm{\mathsf S}_{\text d}$. Under correct decoding,
$\hat{\bm{\mathsf S}}_{\text d}=\bm S_{\text d}$, so the realized full waveform
$\bm S_2=[\bm S_{\text p},\bm S_{\text d}]$ is known and contributes through
the same mean-information mechanism as the pilots. Conditioned on
$\bm S_{\text d}$, its EFIM follows from the pilot derivation by replacing
$\bm S_{\text p}$ with $\bm S_2$. For offline covariance design, before the
realization is available, define the random full-frame waveform
$\bm{\mathsf S}_2=[\bm S_{\text p},\bm{\mathsf S}_{\text d}]$ and average its
FIM over $\bm{\mathsf S}_{\text d}$ using
\begin{equation}
\begin{aligned}
    &\mathbb E_{\bm{\mathsf S}_{\text d}}
    \{\bm{\mathsf S}_2\bm{\mathsf S}_2^{\text H}\}
    =T(\bm I_N\otimes\bm R_2),\\
    &\bm R_2 = \frac{\rho}{M_{\text t}}\bm I_{M_{\text t}}
      +(1-\rho)\bm R_{\text d}.
\end{aligned}
    \label{eq:mimo_s2_effective_covariance}
\end{equation}
The expectation above gives the corresponding offline covariance-design
benchmark.

\begin{theorem}
\label{thm:mimo_s2}
Under Scenario~2, the \ac{efim} of geometric channel parameter $\bm \theta_k$ is given by
\begin{equation}
    \bar{\bm J}_{2}^{\text e}(\bm \theta_k)
    =
    \text{diag}\{\tilde\lambda_k^\tau,\tilde\lambda_k^\varphi,\tilde\lambda_k^\psi\}.
    \label{eq:mimo_s2_expected_diag}
\end{equation}
Define the three full-frame transmit moments as
\begin{equation}
\begin{aligned}
    q_0
    &=\rho+(1-\rho)
      \bm a_{\text t}^{\text H}\bm R_{\text d}\bm a_{\text t},\\
    q_1
    &=\frac{j\rho\nu_{\text t}s_1(M_{\text t})}{M_{\text t}}
      +(1-\rho)\bm a_{\text t}^{\text H}
      \bm R_{\text d}\dot{\bm a}_{\text t},\\
    q_2
    &=\frac{\rho\nu_{\text t}^2s_2(M_{\text t})}{M_{\text t}}
      +(1-\rho)\dot{\bm a}_{\text t}^{\text H}
      \bm R_{\text d}\dot{\bm a}_{\text t},
\end{aligned}
\label{eq:mimo_s2_moments}
\end{equation}
where $\bm a_{\text t}=\bm a_{\text t}(\psi)$ and
$\dot{\bm a}_{\text t}=\partial\bm a_{\text t}(\psi)/\partial\psi$, the three diagonal intensities are given by
\begin{equation}
\begin{aligned}
\tilde\lambda_k^\tau
&=\frac{2T|\alpha_k|^2}{\sigma_v^2}M_{\text r}q_0\omega^2
\Big[s_2(N)-\frac{s_1^2(N)}{N}\Big],\\
\tilde\lambda_k^\varphi
&=\frac{2T|\alpha_k|^2}{\sigma_v^2}Nq_0\nu_{\text r}^2
\Big[s_2(M_{\text r})-\frac{s_1^2(M_{\text r})}{M_{\text r}}\Big],\\
\tilde\lambda_k^\psi
&=\frac{2T|\alpha_k|^2}{\sigma_v^2}NM_{\text r}
\Big[q_2-\frac{|q_1|^2}{q_0}\Big].
\end{aligned}
    \label{eq:mimo_s2_lambda}
\end{equation}
\end{theorem}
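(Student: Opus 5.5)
Scenario~2 is the pilot-only computation of Theorem~\ref{thm:mimo_s1_pilot} with $\bm S_{\text p}$ replaced by the full known frame, after which we average over the data. Conditioned on a realization $\bm S_{\text d}$, the observation $\text{vec}(\tilde{\bm{\mathsf Y}}_k)$ has mean $\alpha_k\text{vec}((\bm D(\tau_k)\otimes\bm A(\bm\vartheta_k))\bm S_2)$ and parameter-independent covariance $\sigma_v^2\bm I$. Hence only the mean term of \eqref{eq:direct_gaussian_fim} survives:
\begin{equation*}
[\bm J(\bm\xi_k)]_{i,j}=\frac{2}{\sigma_v^2}\text{Re}\,\text{tr}\Big(\frac{\partial \bm G_k^{\text H}}{\partial\xi_i}\frac{\partial \bm G_k}{\partial\xi_j}\bm S_2\bm S_2^{\text H}\Big),\qquad \bm G_k=\alpha_k\bm D(\tau_k)\otimes\bm A(\bm\vartheta_k).
\end{equation*}
This expression is linear in $\bm S_2\bm S_2^{\text H}$. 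So for the offline benchmark we take the expectation inside the trace and substitute \eqref{eq:mimo_s2_effective_covariance}, i.e.\ $T(\bm I_N\otimes\bm R_2)$. The block-diagonal structure over subcarriers separates the frequency and spatial factors, so every entry factors as (frequency sum)$\times$(receive-array sum)$\times$(transmit quadratic form in $\bm R_2$). We define the averaged EFIM $\bar{\bm J}_2^{\text e}$ as the Schur complement of this averaged FIM.

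\textbf{Step 1: derivatives.} Write $\bm a_{\text r}$, $\bm a_{\text t}$ with phases $m u_q(\cdot)$, so that $\dot{\bm a}_{\text r}=j\nu_{\text r}\text{diag}(0,\ldots,M_{\text r}-1)\bm a_{\text r}$, and $\partial\bm D/\partial\tau=-j\omega\,\text{diag}(0,\ldots,N-1)\bm D$. The amplitude derivatives give $\bm G_k/\alpha_k$ and $j\bm G_k/\alpha_k$.

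\textbf{Step 2: transmit moments.} The transmit factor enters only through $\bm a_{\text t}^{\text H}\bm R_2\bm a_{\text t}$, $\bm a_{\text t}^{\text H}\bm R_2\dot{\bm a}_{\text t}$ and $\dot{\bm a}_{\text t}^{\text H}\bm R_2\dot{\bm a}_{\text t}$. With $\bm R_2=\frac{\rho}{M_{\text t}}\bm I+(1-\rho)\bm R_{\text d}$ and unit-modulus entries, these equal $q_0,q_1,q_2$ in \eqref{eq:mimo_s2_moments}. For example, $\frac{\rho}{M_{\text t}}\bm a_{\text t}^{\text H}\bm a_{\text t}=\rho$ and $\frac{\rho}{M_{\text t}}\bm a_{\text t}^{\text H}\dot{\bm a}_{\text t}=j\rho\nu_{\text t}s_1(M_{\text t})/M_{\text t}$. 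The receive and frequency factors give $M_{\text r}$ and $N$ or the sums $s_1,s_2$.

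\textbf{Step 3: assemble the $5\times5$ FIM.} Collect the entries of the FIM over $(\tau,\varphi,\psi,\alpha_{\text R},\alpha_{\text I})$.
\begin{itemize}
\item The $(\tau,\tau)$ entry is $\frac{2T|\alpha_k|^2}{\sigma_v^2}\omega^2 s_2(N)M_{\text r}q_0$. Its amplitude coupling is $\propto\omega s_1(N)M_{\text r}q_0$ in the direction $j\alpha_k$.
\item The $\varphi$ entries are analogous, with $\nu_{\text r}$ in place of $\omega$ and $N$ and $M_{\text r}$ swapped.
\item The $\psi$ entries are $NM_{\text r}q_2$, with amplitude coupling $NM_{\text r}q_1$ (complex, so it projects onto both amplitude directions).
\end{itemize}

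\textbf{Step 4: Schur complement.} This is the main obstacle, because the cross terms do not all vanish. The $\tau$–$\varphi$, $\tau$–$\psi$ and $\varphi$–$\psi$ raw entries are nonzero, e.g.\ $\tau$–$\varphi$ $\propto s_1(N)s_1(M_{\text r})q_0$. The plan is to show they are exactly cancelled after projecting out $\bm\alpha_k$.

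Equivalently, we reparametrize: replace the nuisance by $\alpha_k'=\alpha_k e^{j(\text{common phase shift})}$, centering the indices $m\mapsto m-\frac{L-1}{2}$ in frequency and receive aperture and $\dot{\bm a}_{\text t}\mapsto\dot{\bm a}_{\text t}-\frac{q_1}{q_0}\bm a_{\text t}$ at the transmitter. After this change the derivative vectors become orthogonal, in the $\bm R_2$-weighted inner product, to the amplitude directions. Since the Schur complement is invariant under such a nuisance reparametrization, the EFIM is the FIM of the centered derivatives.

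The centered frequency and receive sums vanish to first order, so all cross terms become zero. The diagonal entries become:
\begin{itemize}
\item $s_2-s_1^2/L$ for $\tau$ and $\varphi$;
\item $q_2-|q_1|^2/q_0$ for $\psi$, from $\|\dot{\bm a}_{\text t}-\tfrac{q_1}{q_0}\bm a_{\text t}\|_{\bm R_2}^2$.
\end{itemize}
The main care needed is verifying that the $\psi$ centering and the frequency/receive centering can be done simultaneously with a single complex amplitude. This works because the combined mean derivative is a sum of separable terms and the cross inner products factor into products of centered sums, each zero. The result is the diagonal \eqref{eq:mimo_s2_lambda}.
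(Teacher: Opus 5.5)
Your proof is correct in substance and follows the same route as the paper. You condition on the decoded frame so that only the mean term survives. You replace $\bm S_2\bm S_2^{\text H}$ by its average $T(\bm I_N\otimes\bm R_2)$, separate the Kronecker factors, and take the Schur complement of the averaged FIM. Your Step~4 is the projection form of the paper's entry $\text{Re}\{\chi_{ij}-\chi_{i0}\chi_{0j}/\chi_{00}\}$: removing from each derivative its component along $\text{vec}(\bm B_{0,k}\bm S_2)$ gives a Gram matrix equal to that expression. The cross terms cancel because $\chi_{i0}\chi_{0j}/\chi_{00}=\chi_{ij}$ whenever $i$ and $j$ act on different Kronecker factors. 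This is equivalent to your observation that each cross term contains a centered first-order sum.

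One step needs fixing. The nuisance change cannot be a pure phase $\alpha_k e^{j\phi}$ with real $\phi$. A real phase shift can only subtract purely imaginary multiples of $\bm a_{\text t}$ from $\dot{\bm a}_{\text t}$. However, $\text{Re}\{q_1\}=\tfrac12\,\partial q_0/\partial\psi=\tfrac{1-\rho}{2}\,\partial_\psi(\bm a_{\text t}^{\text H}\bm R_{\text d}\bm a_{\text t})$, which is nonzero whenever the data beampattern varies with $\psi$. With a pure phase, a radial $\psi$--amplitude coupling would remain. The Gram matrix of your partially centered derivatives would then overstate the $\psi$ entry as $q_2-(\text{Im}\{q_1\})^2/q_0$ instead of $q_2-|q_1|^2/q_0$. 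This is the same radial coupling that produces the penalty in Theorem~\ref{thm:mimo_s1}.

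The fix is to use a complex gain, $\alpha_k'=\alpha_k e^{\gamma_k(\bm\theta_k)}$ with complex $\gamma_k$ and $\partial\gamma_k/\partial\psi=q_1^*/q_0$. Its modulus part is $|\alpha_k'|\propto|\alpha_k|\sqrt{q_0(\psi)}$, i.e., the effective beampattern-weighted amplitude. This is still an admissible reparametrization because $\alpha_k$ is a full complex nuisance. Alternatively, note directly that subtracting any complex multiple of $\text{vec}(\bm B_{0,k}\bm S_2)$ from a $\bm\theta_k$-derivative leaves the Schur complement unchanged. With this change, the proof goes through and yields \eqref{eq:mimo_s2_lambda}.
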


\emph{Proof:} See Appendix~\ref{app:proof_s2_efim}.

\begin{remark}
Theorem~\ref{thm:mimo_s2} has the same centered-aperture structure as
Theorem~\ref{thm:mimo_s1_pilot}, but the isotropic pilot covariance is replaced
by the effective full-frame covariance $\bm R_2$. Thus, unlike the marginalized
data of Scenario~1, the decoded data preserve the subcarrier phases and
contribute mean information to the delay as well as both angles. The effective
beampattern gain $q_0$ scales the delay and \ac{aoa} information, whereas the
Schur-complement term in $\tilde\lambda_k^\psi$ is the transmit aperture
measured after removing the component indistinguishable from the unknown
amplitude. Setting $\bm R_2=\bm R_{\text p}$ recovers
Theorem~\ref{thm:mimo_s1_pilot} with $T_{\text p}$ replaced by $T$.
\end{remark}

For the expected known-data benchmark, no new geometry is needed. The position
transformation of Corollary~\ref{thm:mimo_s1_position} and the rank-one inversion
of Proposition~\ref{prop:directional_speb} apply unchanged, with the Scenario~1
intensities $\lambda_k^i$ replaced by their known-data counterparts
$\tilde\lambda_k^i$. The resulting Scenario~2 position \ac{efim} is
\begin{equation}
\bm J_{2}^{\text e}(\bm p)
=
\sum_{k=1}^{K}\left(
\tilde\mu_k^{\tau}\bar{\bm q}_k\bar{\bm q}_k^\text{T}
+ \tilde\mu_k^{\varphi}\bm e_{\text{r},k}^{\perp}{\bm e_{\text{r},k}^{\perp}}^\text{T}
\right)
+ \tilde\mu^{\psi}\bm e_{\text{t}}^{\perp}{\bm e_{\text{t}}^{\perp}}^\text{T},
\label{eq:mimo_s2_efim}
\end{equation}
with $\tilde\mu^\psi=\sum_k\tilde\mu_k^\psi$ and the per-link intensities
$\tilde\mu_k^i$ obtained from~\eqref{eq:mimo_position_intensities} under the
substitution from $\lambda_k^i$ to $\tilde\lambda_k^i$. Listing the
corresponding directions in the Scenario~2 index set
\begin{equation}
\mathcal I_{2}
=
\{(\tilde\mu_k^\tau,\bar\phi_k)\}_{k=1}^K
\cup
\{(\tilde\mu_k^\varphi,\varphi_k+\tfrac{\pi}{2})\}_{k=1}^K
\cup
\{(\tilde\mu^\psi,\psi+\tfrac{\pi}{2})\}.
\label{eq:mimo_s2_direction_set}
\end{equation}
Substitution of $\mathcal I_{2}$ into
Proposition~\ref{prop:directional_speb} gives the \ac{speb}
$\mathcal{P}_{2}(\bm p)$, which differs from $\mathcal{P}_{1}(\bm p)$ only in the
intensities $\tilde\mu_k^i$. Because the decoded data add mean-based delay and
angular information on top of Scenario~1, the known-data benchmark satisfies
$\bm J_{2}^{\text e}(\bm p)\succeq\bm J_{1}^{\text e}(\bm p)$ and
$\mathcal{P}_{2}(\bm p)\le\mathcal{P}_{1}(\bm p)$.

\subsubsection{SISO Special Case}
Setting $M_{\text t}=M_{\text r}=1$ removes the angular parameters, so each
link provides only bistatic-delay information. In Scenario~1, the pilot and
data FIMs are still combined, but
$\tilde Q_k=|\alpha_k|^2+\sigma_v^2$ is position independent and its radial
amplitude information does not alter the pilot delay EFIM. In Scenario~2,
$R_{\text p}=R_{\text d}=1$, so the entire correctly decoded frame
contributes mean-based delay information.

\begin{corollary}
\label{cor:siso_special}
Under \ac{siso} model case, the \acp{efim} for target localization in both scenarios reduce to
\begin{equation}
    ~~\bm J_1^{\text e}(\bm p)
    =\sum_{k=1}^{K}\mu_k^\tau
      \bar{\bm q}_k\bar{\bm q}_k^{\text T},~~~~
    \bm J_2^{\text e}(\bm p)
    =\sum_{k=1}^{K}\tilde\mu_k^\tau
      \bar{\bm q}_k\bar{\bm q}_k^{\text T},
    \label{eq:siso_position_efims}
\end{equation}
where
\begin{equation}
\begin{aligned}
   & \mu_k^\tau=\rho\tilde\mu_k^\tau,~~~~~
    \tilde\mu_k^\tau
    =\frac{4}{c^2}
      \cos^2\Big(\frac{\psi-\varphi_k}{2}\Big)
      T\gamma_{\text p,k}^\tau,\\
    &~~~~~~~\gamma_{\text p,k}^\tau
    =\frac{|\alpha_k|^2(2\pi\Delta f)^2N(N^2-1)}
      {6\sigma_v^2}.
\end{aligned}
    \label{eq:siso_position_intensities}
\end{equation}
\end{corollary}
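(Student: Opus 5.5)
Setting $M_{\text t}=M_{\text r}=1$ makes the array responses scalar unit-modulus constants, so the angles $\varphi_k,\psi$ drop out of the parameter vector. Each link's parameter vector reduces to $\bm\xi_k=[\tau_k,\bm\alpha_k^{\text T}]^{\text T}$. The plan is to specialize Theorem~\ref{thm:mimo_s1_pilot}, Lemma~\ref{thm:mimo_s1_data}, Theorem~\ref{thm:mimo_s1} and Theorem~\ref{thm:mimo_s2} to this case. We then map to position with the delay gradient \eqref{eq:bisector_decomp}, exactly as in Corollary~\ref{thm:mimo_s1_position}.

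\emph{Pilot term.} With $M_{\text r}=1$, the delay entry of \eqref{eq:mimo_pilot_diag} is
\[
\frac{2T_{\text p}|\alpha_k|^2}{\sigma_v^2}\,\omega^2\Big[s_2(N)-\frac{s_1^2(N)}{N}\Big].
\]
Using $s_1(N)=N(N-1)/2$ and $s_2(N)=(N-1)N(2N-1)/6$, the bracket equals $N(N^2-1)/12$. Hence
\[
\gamma_{\text p,k}^\tau=\frac{|\alpha_k|^2\omega^2N(N^2-1)}{6\sigma_v^2},
\]
which is the stated expression. The pilot observation is zero-mean apart from the known pilot waveform. In it, the cross term between $\tau_k$ and $\bm\alpha_k$ is proportional to $\sum_n n$, and this common-phase component is exactly what the centered-aperture form already removes. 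So I take the pilot EFIM directly from Theorem~\ref{thm:mimo_s1_pilot}.

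\emph{Scenario 1.} This is the step needing care, since the data and pilot FIMs share $\bm\alpha_k$ and must be summed before taking the Schur complement. In SISO, $\tilde Q_k=|\alpha_k|^2+\sigma_v^2$ does not depend on $\tau_k$. So the data FIM in \eqref{eq:main_data_theta_fim} has a zero row and column for $\tau_k$, and only the block $T_{\text d}\zeta_k\bm\alpha_k\bm\alpha_k^{\text T}$ is nonzero. Adding it to the pilot FIM changes only the $\bm\alpha_k$ block.

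The pilot FIM has $\tau$–$\bm\alpha$ coupling. I will therefore write the pilot FIM in the form $[a,\bm u^{\text T};\bm u,\beta\bm I]$ with $\beta=j_{\text p,k}^\alpha$. The coupling comes from $\partial\mu/\partial\tau\propto j\alpha_k$, so $\bm u$ is parallel to $\bm\alpha_k^{\perp}$. The data term adds $c\,\bm\alpha_k\bm\alpha_k^{\text T}$, which acts only on the radial direction $\bm\alpha_k$ and leaves $\bm\alpha_k^\perp$ untouched. The Schur complement $a-\bm u^{\text T}(\beta\bm I+c\bm\alpha_k\bm\alpha_k^{\text T})^{-1}\bm u$ is then unchanged by the data term. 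This gives $\lambda_k^\tau=T_{\text p}\gamma^\tau_{\text p,k}$, consistent with Theorem~\ref{thm:mimo_s1}.

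\emph{Scenario 2.} With $R_{\text p}=R_{\text d}=1$, we get $R_2=1$, and \eqref{eq:mimo_s2_moments} gives $q_0=1$. Theorem~\ref{thm:mimo_s2} then gives $\tilde\lambda_k^\tau=T\gamma_{\text p,k}^\tau$, so $\lambda_k^\tau=\rho\tilde\lambda_k^\tau$.

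\emph{Position domain.} Apply $\bm J_{\text e}(\bm p)=\bm T^{\text T}\bm J_{\text e}(\bm\theta)\bm T$ with the delay gradient from \eqref{eq:bisector_decomp}. This yields the rank-one terms $\frac{4}{c^2}\cos^2\!\big(\frac{\psi-\varphi_k}{2}\big)\lambda_k^\tau\,\bar{\bm q}_k\bar{\bm q}_k^{\text T}$. Summing over the $K$ links, which have independent noise and separate amplitudes, gives \eqref{eq:siso_position_efims} together with $\mu_k^\tau=\rho\tilde\mu_k^\tau$.
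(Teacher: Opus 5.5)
Your proposal is correct and follows essentially the same route as the paper. You specialize the pilot, Scenario~1, and Scenario~2 channel-domain EFIMs to $M_{\text t}=M_{\text r}=1$ (with $R_2=q_0=1$), observe that the marginalized data add only a radial amplitude block orthogonal to the tangential pilot delay--amplitude coupling, so the delay Schur complement is unchanged (the argument in the proof of Theorem~\ref{thm:mimo_s1}), and map to position through the bisector delay gradient; your explicit checks of the centered bracket $N(N^2-1)/12$ and of the tangential direction of $\bm u$ fill in details the paper states only in words.
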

Applying Proposition~\ref{prop:directional_speb} to the $K$ bistatic-delay
directions gives
\begin{equation}
    \mathcal P_{1}(\bm p)
    =
    \frac{\sum_{k=1}^{K}\mu_k^\tau}
    {\sum_{1\le i<j\le K}
    \mu_i^\tau\mu_j^\tau
    \sin^2(\bar\phi_i-\bar\phi_j)}.
    \label{eq:siso_s1_speb}
\end{equation}
and, since $\tilde\mu_k^\tau=\mu_k^\tau/\rho$,
 $\mathcal P_{2}(\bm p;\rho)
    =\rho\,\mathcal P_{1}(\bm p;\rho)
    =\mathcal P_{1}(\bm p;1)$,
provided the delay directions span the 2D position space.

\subsection{Achievable Communication Rate}\label{sec:comm}
The sensing and communication functions operate over the same target-reflected bistatic
channel. On subcarrier $n$, the reflected channel is
$\bm H_k[n]\triangleq e^{-j2\pi n\Delta f\tau_k}\bm H_k
=\alpha_k e^{-j2\pi n\Delta f\tau_k}\bm A(\bm\vartheta_k)$, and
$\hat{\bm H}_k[n]$ denotes its pilot-based estimate. Each receiver
uses the remaining $T_{\text d}=(1-\rho)T$ symbols to decode the data conveyed.

\begin{proposition}
\label{prop:rates}
Treating the residual channel-estimation error as worst-case Gaussian noise\cite{HasHoc:J13},
define the estimated effective SNR on subcarrier $n$ as
\begin{equation}
    \Gamma_{\text c,k}[n]
    \triangleq
    \kappa(\rho)
    \text{tr}\big(
    \hat{\bm H}_k[n]\bm R_{\text d}
    \hat{\bm H}_k^{\text H}[n]\big).
    \label{eq:mimo_effective_snr}
\end{equation}
The achievable ergodic rate of receiver~$k$ is
\begin{equation}
    \bar R_k(\rho,\bm R_{\text d})
    =
    \frac{1-\rho}{N}
    \sum_{n\in\mathcal N}
    \mathbb E\left\{
    \log_2\big(1+\Gamma_{\text c,k}[n]\big)
    \right\}.
    \label{eq:mimo_rate_general}
\end{equation}
The expectation is over $\hat{\bm H}_k[n]$, and the effective
inverse-noise factor under the orthogonal pilots
$\bm R_{\text p}=\bm I_{M_{\text t}}/M_{\text t}$ is
\begin{equation}
    \kappa(\rho)
    \triangleq
    \Big[
    \sigma_v^2\big(
    1+T_{\text p}^{-1}
    \text{tr}(\bm R_{\text d}\bm R_{\text p}^{-1})
    \big)
    \Big]^{-1}
    =
    \frac{\rho T}{\sigma_v^2(\rho T+M_{\text t})}.
    \label{eq:effective_comm_snr}
\end{equation}
For \ac{siso}, $M_{\text t}=M_{\text r}=1$ and $R_{\text d}=1$, yielding
$\Gamma_{\text c,k}[n]=\kappa(\rho)|\hat h_k[n]|^2$, which specializes
\eqref{eq:mimo_rate_general} to the scalar  channel.
\end{proposition}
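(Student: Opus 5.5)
The plan is to follow the standard training-based lower-bounding argument of~\cite{HasHoc:J13}, applied per subcarrier and per receiver, and then to specialize the effective noise level using the orthogonal pilot condition~\eqref{eq:orthogonal_pilot_main}.

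\textbf{Step 1: the pilot-based estimate and its error.} On subcarrier $n$, collect the pilot observations of~\eqref{eq:mimo_subcarrier_model} as $\bm{\mathsf Y}_{\text p,k}[n]=\bm H_k[n]\bm S_{\text p,n}+\bm{\mathsf V}_{\text p,k}[n]$. I would form the least-squares estimate
\[
\hat{\bm H}_k[n]=\bm{\mathsf Y}_{\text p,k}[n]\bm S_{\text p,n}^{\text H}\big(\bm S_{\text p,n}\bm S_{\text p,n}^{\text H}\big)^{-1}.
\]
The estimation error is then $\bm E_k[n]=\bm{\mathsf V}_{\text p,k}[n]\bm S_{\text p,n}^{\text H}(\bm S_{\text p,n}\bm S_{\text p,n}^{\text H})^{-1}$. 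This error is zero-mean Gaussian and independent of the data. Its rows have covariance $\sigma_v^2(T_{\text p}\bm R_{\text p})^{-1}$, which follows from~\eqref{eq:orthogonal_pilot_main}.

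\textbf{Step 2: rewrite the data-phase model and apply the worst-case-noise lemma.} In the data phase I would write
\[
\bm{\mathsf y}_k[n,t]=\hat{\bm H}_k[n]\bm{\mathsf s}_{\text d}[n,t]+\big(\bm{\mathsf v}_k[n,t]-\bm E_k[n]\bm{\mathsf s}_{\text d}[n,t]\big).
\]
The bracketed effective noise is uncorrelated with $\bm{\mathsf s}_{\text d}$, because $\bm{\mathsf s}_{\text d}$ is zero-mean and independent of both $\bm E_k$ and $\bm{\mathsf v}_k$. Its covariance is
\[
\sigma_v^2\bm I+\mathbb E\{\bm E_k\bm R_{\text d}\bm E_k^{\text H}\}
=\sigma_v^2\big(1+T_{\text p}^{-1}\text{tr}(\bm R_{\text d}\bm R_{\text p}^{-1})\big)\bm I_{M_{\text r}}
=\kappa(\rho)^{-1}\bm I_{M_{\text r}}.
\]
Replacing this noise by Gaussian noise of the same covariance lower-bounds the mutual information given $\hat{\bm H}_k[n]$~\cite{HasHoc:J13}. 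This yields $\log_2\det\big(\bm I+\kappa(\rho)\hat{\bm H}_k[n]\bm R_{\text d}\hat{\bm H}_k^{\text H}[n]\big)$.

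\textbf{Step 3: reduce the log-det to the scalar form in~\eqref{eq:mimo_effective_snr}.} This is the step I expect to be the main obstacle. The true channel $\alpha_k\bm a_{\text r}\bm a_{\text t}^{\text H}$ has rank one, but the raw LS estimate generally does not. I would therefore use the structured (parametric) estimate $\hat{\bm H}_k[n]=\hat\alpha_k e^{-j\omega n\hat\tau_k}\bm a_{\text r}(\hat\varphi_k)\bm a_{\text t}^{\text H}(\hat\psi)$, obtained by projecting onto the rank-one model. For a rank-one matrix, $\det(\bm I+\bm X)=1+\text{tr}(\bm X)$, which recovers $\Gamma_{\text c,k}[n]$ exactly. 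If one instead keeps the unstructured estimate, I would argue that the expression is the rank-one-channel specialization of the log-det bound, and that the estimation error is already absorbed in $\kappa(\rho)$.

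\textbf{Step 4: evaluate $\kappa(\rho)$ and assemble the rate.} Substituting $\bm R_{\text p}=\bm I/M_{\text t}$ and $\text{tr}(\bm R_{\text d})=1$ gives $\text{tr}(\bm R_{\text d}\bm R_{\text p}^{-1})=M_{\text t}$. With $T_{\text p}=\rho T$, this gives the closed form in~\eqref{eq:effective_comm_snr}.

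Data symbols are i.i.d. across subcarriers and slots, so the per-subcarrier bounds add. Normalizing by $N$, weighting by the data fraction $1-\rho$, and averaging over the distribution of $\hat{\bm H}_k[n]$ yields~\eqref{eq:mimo_rate_general}.

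For the SISO case, setting $M_{\text t}=M_{\text r}=1$ and $R_{\text d}=1$ makes the trace equal $|\hat h_k[n]|^2$.
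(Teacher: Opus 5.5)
Your Steps 1, 2 and 4 follow the paper's proof: orthogonal-pilot ML (equivalently LS) estimation with per-row error covariance $\sigma_v^2\bm R_{\text p}^{-1}/T_{\text p}$, the worst-case Gaussian-noise bound of~\cite{HasHoc:J13} with effective noise level $\sigma_v^2[1+T_{\text p}^{-1}\text{tr}(\bm R_{\text d}\bm R_{\text p}^{-1})]$, the Gaussian-input log-det, and the $(1-\rho)$ overhead. Your evaluation of $\mathbb E\{\bm E_k\bm R_{\text d}\bm E_k^{\text H}\}$ and of $\kappa(\rho)$ is correct. The paper closes with a one-line appeal to the matrix determinant lemma ``for the rank-one reflected channel,'' which is exactly the point you flagged.

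Your main fix for Step 3 does not work as stated. A rank-one parametric estimate $\hat\alpha_k e^{-j\omega n\hat\tau_k}\bm a_{\text r}(\hat\varphi_k)\bm a_{\text t}^{\text H}(\hat\psi)$ is a nonlinear function of the pilot noise. Its error is therefore no longer $\bm{\mathsf V}_{\text p,k}[n]\bm S_{\text p,n}^{\text H}(\bm S_{\text p,n}\bm S_{\text p,n}^{\text H})^{-1}$; it is non-Gaussian and generally biased. As a result, the corresponding $\mathbb E\{\bm E_k\bm R_{\text d}\bm E_k^{\text H}\}$ is in general neither $\sigma_v^2T_{\text p}^{-1}\text{tr}(\bm R_{\text d}\bm R_{\text p}^{-1})\bm I_{M_{\text r}}$ nor even proportional to $\bm I_{M_{\text r}}$. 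The effective-noise covariance of Step~2, and hence $\kappa(\rho)$, would have to be recomputed. Pairing the structured estimate with the LS-based $\kappa(\rho)$ therefore does not ``recover $\Gamma_{\text c,k}[n]$ exactly.'' Your fallback (``rank-one specialization'') is only an assertion.

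You need neither. Keep the LS estimate and set $\bm X\triangleq\kappa(\rho)\hat{\bm H}_k[n]\bm R_{\text d}\hat{\bm H}_k^{\text H}[n]\succeq\bm 0$ with eigenvalues $\lambda_i\ge 0$. Then
\[
\det(\bm I+\bm X)=\prod_i(1+\lambda_i)\ge 1+\sum_i\lambda_i=1+\text{tr}(\bm X).
\]
Hence $\log_2(1+\Gamma_{\text c,k}[n])$ lower-bounds the Step~2 log-det rate and is itself achievable. Equality holds exactly when $\bm X$ has rank at most one, for instance whenever $\bm R_{\text d}$ is rank one, even with the unstructured estimate. This is the rigorous version of the paper's determinant-lemma step.
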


\emph{Proof:}
Orthogonal-pilot \ac{ml} estimation of $\bm H_k[n]$ gives the 
channel-error covariance
$\sigma_v^2\bm R_{\text p}^{-1}/T_{\text p}$. Hence the residual error plus
 noise has variance
$\sigma_v^2[1+T_{\text p}^{-1}
\text{tr}(\bm R_{\text d}\bm R_{\text p}^{-1})]$.
Substitution into the Gaussian-input rate and multiplication by 
 $1-\rho$, followed by the matrix determinant lemma for the
rank-one reflected channel, gives \eqref{eq:mimo_rate_general}. 
\hfill$\square$

The communication rate and localization accuracy are coupled through the same
target-reflected channel. Increasing $\rho$ improves its estimate but shortens
the data phase, while $\bm R_{\text d}$ simultaneously shapes the reflected
communication beampattern and the sensing intensities. The target-dependent
gain and angles therefore affect both the communication constraint and the
localization bound. In
\ac{siso}, the covariance-design degree of freedom disappears and only the
pilot fraction remains.

\subsection{Extension to Finite-Alphabet Signaling}\label{sec:ext}
The preceding performance limits are derived under Gaussian signaling, which
enables closed-form covariance \acp{fim}. For finite-alphabet signaling, with
symbols drawn independently and uniformly from $\mathcal A$, the data likelihood
instead becomes a finite Gaussian mixture. Define the vectorized data-phase
observation and its conditional mean as
$\tilde{\bm{\mathsf y}}_{\text d,k}\triangleq
\text{vec}(\tilde{\bm{\mathsf Y}}_{\text d,k})
\in\mathbb C^{NM_{\text r}T_{\text d}}$ and
$\bm m_k(\bm S_{\text d})\triangleq
\text{vec}\left(\alpha_k(\bm D(\tau_k)\otimes
\bm A(\bm\vartheta_k))\bm S_{\text d}\right)$, respectively. Then,
marginalizing over all possible transmitted data matrices yields the
Gaussian-mixture likelihood
\begin{equation}\label{eq:finite_marginal}
    f(\tilde{\bm{\mathsf y}}_{\text d,k}; \bm p, \bm\kappa)\propto
    \sum_{\bm S_{\text d} \in \mathcal{A}^{NM_{\text t} T_{\text d}}}
    \mathcal{CN}\big(
    \tilde{\bm{\mathsf y}}_{\text d,k};
    \bm m_k(\bm S_{\text d}),
    \sigma_v^2\bm I_{NM_{\text r}T_{\text d}}
    \big).
\end{equation}
The corresponding finite-alphabet data rate is
$\bar R_k^{\text{FA}}(\rho)=(1-\rho)I(\bm{\mathsf S}_{\text d};
\tilde{\bm{\mathsf y}}_{\text d,k}|\hat{\bm H}_{k})$, which saturates at the
constellation entropy, where $\hat{\bm H}_{k}$ denotes the channel estimated
from the pilot phase. The constrained formulation
in~\eqref{eq:constrained_bound_problem} then applies after replacing
$\bar R(\bm x)$ and $\mathcal{P}(\bm p;\bm x)$ by their finite-alphabet
counterparts. However, the closed-form Gaussian covariance \acp{fim} above do
not carry over directly. For localization-bound derivation, Scenario~1 requires
the \ac{fim} of the mixture likelihood in~\eqref{eq:finite_marginal}, whereas
Scenario~2 uses the conditional known-waveform \ac{fim} averaged over the
finite-alphabet data symbols. For estimator design,
\eqref{eq:finite_marginal} can be handled by \ac{em} or message-passing
approximations in Scenario~1, while Scenario~2 uses the joint target
localization and decoding algorithm with a symbol-wise \ac{map} or soft decoder.

\begin{remark}
Finite-alphabet signaling introduces an information--complexity tension.
Unlike Gaussian signaling, which is invariant to arbitrary phase rotations, a
finite constellation generally possesses only discrete rotational symmetries.
Hence, the marginalized likelihood may retain the delay-induced phase rotations,
allowing unknown data to provide direct delay and position information even in
\ac{siso} systems. However, the Gaussian mixture contains
$|\mathcal A|^{NM_{\text t}T_{\text d}}$ components, so exact marginalization
and the \ac{fim} evaluation scale exponentially with the signal dimension and
become computationally prohibitive for large $N$, $M_{\text t}$, or
$T_{\text d}$. Therefore, exploiting additional information generally requires
approximate inference, while discrete constellation symmetries may still cause
phase ambiguities.
\end{remark}

\section{Communication-Constrained Target Localization Optimization Problem}
\label{sec:optimization}
This section optimizes the performance limits of target localization under communication-rate
constraints by jointly designing the pilot fraction and  transmitted data covariance.

\subsection{Unified Constrained Optimization}
Let $\mathcal P_\ell(\bm p;\bm x)$ denote the \ac{speb} in
Scenario~$\ell\in\{1,2\}$, where
$\bm x=(\rho,\bm R_{\text d})$ for \ac{mimo} and $\bm x=\rho$ for \ac{siso}.
For a required data rate $R_{\text{th}}$, the communication-constrained target 
localization design is obtained from
\begin{equation}\label{eq:unified_constrained_problem}
\begin{aligned}
\mathscr P_{\ell}(R_{\text{th}}):\quad
\min_{\bm x\in \mathcal{X}}\quad
&\mathcal P_{\ell}(\bm p;\bm x)\\
\text{s.t.}\quad
&\min_{k=1,\ldots,K}\bar R_k(\bm x)\ge R_{\text{th}}
\end{aligned}
\end{equation}
Equivalently, the broadcast-rate constraint requires
$\bar R_k(\bm x)\ge R_{\text{th}}$ for every receiver $k$.
The orthogonality condition
in~\eqref{eq:orthogonal_pilot_main} requires $T_{\text p}\ge M_{\text t}$.
Thus, assuming $T\ge M_{\text t}$,  then we have $\rho_{\min}={M_{\text t}}/{T}$, $\mathcal X_{\text S}=[\rho_{\min},1]$,  and
$\mathcal X_{\text M}
=[\rho_{\min},1]\times
\{\bm R_{\text d}\succeq\bm0:
\text{tr}(\bm R_{\text d})=1\}$.
In~\eqref{eq:unified_constrained_problem}, $\mathcal X=\mathcal X_{\text M}$ for \ac{mimo} and
$\mathcal X=\mathcal X_{\text S}$ for \ac{siso}.
Varying $R_{\text{th}}$ over the feasible broadcast data rates traces the
joint \ac{sac} bound under communication constraints.
For a fixed $\bm R_{\text d}$, define the broadcast-feasible pilot set
$\mathcal F(\bm R_{\text d})=\left\{\rho\in[\rho_{\min},1]:
\bar R_k(\rho,\bm R_{\text d})\ge R_{\text{th}},~\forall k\right\}$.
The time-allocation update
is the one-dimensional constrained search
\begin{equation}
    \rho^{(t+1)}
    =\arg\min_{\rho\in\mathcal F(\bm R_{\text d}^{(t)})}
    \mathcal P_\ell(\bm p;\rho,\bm R_{\text d}^{(t)}).
    \label{eq:rho_exact_update}
\end{equation}
In implementation, the feasible connected components are identified by a
one-dimensional grid, and the best component is refined by standard
one-dimensional search methods, such as bisection or golden-section search
\cite{BoyVan:B04}.

\subsection{Alternating Optimization for MIMO Case}
\label{subsec:mimo_bcd}
The \ac{efim} for target localization in the \ac{mimo} case has the rank-one form in
\eqref{eq:unified_efim}. For fixed geometry, $\rho$ and $\bm R_{\text d}$
change only its directional intensities. These are enumerated by
$\mathcal I_1$ in~\eqref{eq:mimo_s1_direction_set} and $\mathcal I_2$
in~\eqref{eq:mimo_s2_direction_set} for Scenarios~1 and~2, respectively.
We therefore adopt a two-block \ac{sca} method, alternating
between the scalar pilot fraction and the transmit covariance
\cite{RazHonLuo:J13,ScuFacLamSon:J17}.
Consider
$\bm J_{\ell}^{\text e}(\bm p)=\sum_m c_m(\rho,\bm R_{\text d})
\bm v_m\bm v_m^{\text T}$, where $\|\bm v_m\|=1$ and the directions
$\bm v_m$ are fixed by the geometry. Taking the differential \ac{wrt}
the design variables $(\rho,\bm R_{\text d})$, since
$\mathcal P_\ell(\bm p)=\text{tr}\{(\bm J_\ell^{\text e}(\bm p))^{-1}\}$ and
$\text{d}\bm J_\ell^{\text e}
=\sum_m\bm v_m\bm v_m^{\text T}\text{d}c_m$, its differential is
\begin{equation}
\text{d}\mathcal P_\ell
=-\text{tr}\left\{
(\bm J_\ell^{\text e})^{-1}\text{d}\bm J_\ell^{\text e}
(\bm J_\ell^{\text e})^{-1}\right\}
=-\sum_m
\bm v_m^{\text T}(\bm J_\ell^{\text e})^{-2}\bm v_m\,\text{d}c_m.
\label{eq:speb_differential}
\end{equation}
Accordingly, the \ac{speb} sensitivity to the $m$-th intensity is
\begin{equation}
    s_m
    \triangleq
    -\frac{\partial\mathcal P_\ell}{\partial c_m}
    =
    \bm v_m^{\text T}(\bm J_{\ell}^{\text e}(\bm p))^{-2}\bm v_m .
    \label{eq:speb_sensitivity}
\end{equation}
Thus a larger $s_m$ identifies a direction in which additional information
produces a larger local reduction in \ac{speb}. After the $\rho$-update, let
$\bm c^{(t)}\triangleq
[c_m(\rho^{(t+1)},\bm R_{\text d}^{(t)})]_m$, and evaluate $s_m^{(t)}$ at the
same point. The first-order expansion is
\begin{equation}
\mathcal P_\ell(\bm c)
\approx \mathcal P_\ell(\bm c^{(t)})
-\sum_m s_m^{(t)}(c_m-c_m^{(t)}).
\label{eq:speb_first_order}
\end{equation}
Minimizing this
local model is equivalent to maximizing the sensitivity-weighted information
\cite{SunBabPal:J17}
\begin{equation}
    \mathcal W_\ell^{(t)}(\bm R_{\text d})
    \triangleq
    \sum_m s_m^{(t)}c_m(\rho^{(t+1)},\bm R_{\text d}).
    \label{eq:weighted_intensity_objective}
\end{equation}
The chain rule gives
$-\nabla_{\bm R_{\text d}}\mathcal P_\ell=\sum_m s_m\nabla_{\bm R_{\text d}}c_m
=\nabla_{\bm R_{\text d}}\mathcal W_\ell$ at the current iterate. Hence
the gradient of $\mathcal W_\ell^{(t)}$ is the exact first-order descent
direction of the \ac{speb}.

For fixed $\rho$, introduce the rate-feasible covariance set
$\mathcal C(\rho)=\left\{\bm R_{\text d}\succeq\bm0:
\text{tr}(\bm R_{\text d})=1,~
\bar R_k(\rho,\bm R_{\text d})\ge R_{\text{th}},~\forall k\right\}$.
The rate in~\eqref{eq:mimo_rate_general} is concave in $\bm R_{\text d}$ for fixed
$\rho$, so $\mathcal C(\rho)$ is convex. A  proximal
minorization step is obtained from the standard \ac{sca} construction
\cite{SunBabPal:J17,ScuFacLamSon:J17}
\begin{equation}
\begin{aligned}
\hat{\mathcal W}_\ell^{(t)}(\bm R_{\text d})
&=\mathcal W_\ell^{(t)}(\bm R_{\text d}^{(t)})
+\left\langle\nabla_{\bm R_{\text d}}\mathcal W_\ell^{(t)}
(\bm R_{\text d}^{(t)}),
\bm R_{\text d}-\bm R_{\text d}^{(t)}\right\rangle\\
&\quad-\frac{L_t}{2}\|\bm R_{\text d}
-\bm R_{\text d}^{(t)}\|_{\text F}^2,\\
\breve{\bm R}_{\text d}^{(t+1)}
&=\arg\max_{\bm R_{\text d}\in\mathcal C(\rho^{(t+1)})}
\hat{\mathcal W}_\ell^{(t)}(\bm R_{\text d}).
\end{aligned}
\label{eq:proximal_covariance_update}
\end{equation}
Starting from a positive $L_t$, backtracking increases it until the candidate
is feasible and satisfies
$\mathcal P_\ell(\bm p;\rho,\breve{\bm R}_{\text d}^{(t+1)})
\le\mathcal P_\ell(\bm p;\rho,\bm R_{\text d}^{(t)})
-\delta\|\breve{\bm R}_{\text d}^{(t+1)}
-\bm R_{\text d}^{(t)}\|_{\text F}^2$ for some $\delta>0$. The strongly
concave subproblem in~\eqref{eq:proximal_covariance_update} then has a unique
solution~\cite{RazHonLuo:J13,ScuFacLamSon:J17}. The initial point is found by maximizing
$\min_k\bar R_k(\rho,\bm R_{\text d})$ over $\mathcal X_{\text M}$, using the
same one-dimensional search in $\rho$ and a convex covariance subproblem. If
the resulting rate is below $R_{\text{th}}$, Problem~\eqref{eq:unified_constrained_problem} is
infeasible.

\begin{algorithm}[t]
\caption{Alternating Proximal-SCA for Rate-Constrained \ac{speb} Minimization}\label{alg:constrained_bound_opt}
\begin{algorithmic}[1]
\STATE \textbf{Input:} $R_{\text{th}}$, $\ell\in\{1,2\}$, tolerance $\epsilon$.
\STATE Find a feasible $(\rho^{(0)},\bm R_{\text d}^{(0)})$ and set $t=0$.
\REPEAT
    \STATE Update $\rho^{(t+1)}$ by~\eqref{eq:rho_exact_update};
    \STATE Evaluate $\bm J_\ell^{\text e}$ and $\{s_m^{(t)}\}$ at
    $(\rho^{(t+1)},\bm R_{\text d}^{(t)})$ using~\eqref{eq:speb_sensitivity};
    \STATE Form $\mathcal W_\ell^{(t)}$ in~\eqref{eq:weighted_intensity_objective};
    \STATE Backtrack $L_t$ and solve~\eqref{eq:proximal_covariance_update};
    \STATE Set $\bm R_{\text d}^{(t+1)}=\breve{\bm R}_{\text d}^{(t+1)}$ and $t = t+1$;
\UNTIL{the relative \ac{speb} reduction and block-step norm are below $\epsilon$}.
\STATE \textbf{Output:} $(\rho^\star,\bm R_{\text d}^\star)$.
\end{algorithmic}
\end{algorithm}

The accepted block updates monotonically decrease the exact
\ac{speb}. Specifically, the pilot-fraction update in
\eqref{eq:rho_exact_update} and the covariance update satisfy
$\mathcal P_\ell(\bm p;\rho^{(t+1)},\bm R_{\text d}^{(t+1)})
\le
\mathcal P_\ell(\bm p;\rho^{(t+1)},\bm R_{\text d}^{(t)})
\le
\mathcal P_\ell(\bm p;\rho^{(t)},\bm R_{\text d}^{(t)})$.
Since the \ac{speb} is nonnegative, the objective-value sequence converges.
This statement concerns only the monotonic convergence of the objective, and convergence of
the design variables is examined numerically in Section~\ref{sec:results}.
General stationary-point guarantees for block \ac{sca} methods
require additional regularity and exact block-solution conditions, as in
standard block-coordinate and \ac{sca} analyses
\cite{RazHonLuo:J13,ScuFacLamSon:J17}.

\subsection{Time Allocation for SISO Case}
\label{subsec:siso_closed_form}
For \ac{siso}, $R_{\text d}=1$ and only $\rho$ remains. Define
$\eta_k\triangleq(4T/c^2)\cos^2((\psi-\varphi_k)/2)\,\gamma_{\text{p},k}^{\tau}$
and
\begin{equation}
    C_{\text S}
    =
    \frac{\sum_{k=1}^{K}\eta_k}
    {\sum_{1\le i<j\le K}
    \eta_i\eta_j\sin^2(\bar\phi_i-\bar\phi_j)} .
    \label{eq:siso_geometry_constant}
\end{equation}
Provided the delay directions span the plane, Corollary~\ref{cor:siso_special}
gives
    $\mathcal P_1(\bm p;\rho)={C_{\text S}}/{\rho}$,
  and
    $\mathcal P_2(\bm p;\rho)=C_{\text S}$.
Let
$\mathcal F_{\text S}=\{\rho\in[N/T,1]:
\bar R_k(\rho)\ge R_{\text{th}},~\forall k\}$. If this set is nonempty, the
Scenario~1 solution is
$    \rho_{\text{S1}}^\star=\max\mathcal F_{\text S},~
    \mathcal P_1^\star={C_{\text S}}/{\rho_{\text{S1}}^\star}.$
Every $\rho\in\mathcal F_{\text S}$ attains the Scenario~2 optimum
$\mathcal P_2^\star=C_{\text S}$. A particular point may therefore be selected
to maximize the broadcast data rate or satisfy a secondary pilot-overhead criterion.

\section{Data-Aided Target Localization Algorithm}\label{sec:DAS}
The data-aided schemes lead to two estimators with distinct uses of data symbols, in which Scenario~1 integrates out the unknown data and exploits their
covariance, whereas Scenario~2 estimates the  transmitted data realization and reuses it as a known counterpart. We assume that the link
observations are available to a fusion processor. Referring to
\eqref{eq:mimo_matrix_model}, define the position-dependent channel matrix
$\bm B_k(\bm p)
    \triangleq
    \bm D(\tau_k(\bm p))\otimes
    \bm A(\bm\vartheta_k(\bm p))$.
The same transmitted data realization $\bm S_{\text d}$ is observed over all
$K$ links, and only $\alpha_k$ and $\bm B_k(\bm p)$ are link dependent.

\begin{algorithm}[t]
\caption{Marginal-Likelihood Based Target Localization Algorithm in Scenario 1}
\label{alg:statistical_localizer}
\begin{algorithmic}[1]
\STATE \textbf{Input:} $\{\tilde{\bm{\mathsf Y}}_{\text p,k},
\tilde{\bm{\mathsf Y}}_{\text d,k}\}_{k=1}^K$, $\bm S_{\text p}$, $\bm R_{\text d}$.
\STATE Form $\{\hat{\bm Q}_{y,k}\}_{k=1}^K$ using~\eqref{eq:sample_data_cov_est};
\STATE Generate a coarse position set from  pilot-only likelihood;
\FOR{each candidate $\bm p$ in the coarse-to-fine search set}
    \FOR{$k=1,\ldots,K$}
        \STATE Compute $\bm X_{\text p,k}(\bm p)$, $z_k(\bm p)$, and $a_k(\bm p)$ as defined above
        \STATE Solve~\eqref{eq:scenario1_magnitude_profile} for $\hat r_k(\bm p)$;
    \ENDFOR
    \STATE Evaluate $\mathcal C_1(\bm p)$ in~\eqref{eq:scenario1_profile_cost};
\ENDFOR
\STATE Refine the best candidate by local minimization of $\mathcal C_1(\bm p)$.
\STATE \textbf{Output:} $\hat{\bm p}_1$ and $\{\hat\alpha_{1,k}\}_{k=1}^K$.
\end{algorithmic}
\end{algorithm}

\subsection{Marginal-Likelihood Based Localization in Scenario 1}
Firstly, marginalizing the Gaussian data symbols gives the zero-mean per-subcarrier covariance matrix with
  $  \bm Q_k(\bm p,r_k)
    \triangleq
    r_k^2\bm A(\bm\vartheta_k(\bm p))\bm R_{\text d}
    \bm A^{\text H}(\bm\vartheta_k(\bm p))
    +\sigma_v^2\bm I_{M_{\text r}}$,
where $r_k=|\alpha_k|$. Since this covariance is common to all data samples,
their sufficient statistic is the sample receive covariance
\begin{equation}
    \hat{\bm Q}_{y,k}
    =
    \frac{1}{NT_{\text d}}
    \sum_{t\in\mathcal T_{\text d}}\sum_{n\in\mathcal N}
    \bm{\mathsf y}_{k}[n,t]\bm{\mathsf y}_{k}^{\text H}[n,t].
    \label{eq:sample_data_cov_est}
\end{equation}
Treating the marginalized data as an unconditional (stochastic) Gaussian
observation with the sample covariance as a sufficient
statistic~\cite{StoNeh:J90}, the negative data log-likelihood is
given by
\begin{equation}
    \Phi_{\text d,k}(\bm p,r_k)
    \triangleq NT_{\text d}\big(\ln\det\bm Q_k(\bm p,r_k)
+\text{tr}(
    \bm Q_k^{-1}(\bm p,r_k)\hat{\bm Q}_{y,k}
    )\big).
    \label{eq:data_marginal_likelihood}
\end{equation}
For a trial position, define
$\bm X_{\text p,k}(\bm p)\triangleq\bm B_k(\bm p)\bm S_{\text p}$,
$z_k(\bm p)\triangleq\text{tr}\!\big(
\bm X_{\text p,k}^{\text H}(\bm p)
\tilde{\bm{\mathsf Y}}_{\text p,k}\big)$, and
$a_k(\bm p)\triangleq\|\bm X_{\text p,k}(\bm p)\|_{\text F}^2$.
Writing $\alpha_k=r_ke^{j\phi_k}$, the pilot residual is minimized for
$\hat\phi_k(\bm p)=\arg z_k(\bm p)$ at any fixed $r_k$. The exact concentrated
magnitude is therefore obtained from the scalar problem
\begin{equation}
\hat r_k(\bm p)
=\arg\min_{r\ge0}
\left\{
\frac{a_k(\bm p)r^2-2r|z_k(\bm p)|}{\sigma_v^2}
+\Phi_{\text d,k}(\bm p,r)
\right\},
\label{eq:scenario1_magnitude_profile}
\end{equation}
and
$\hat\alpha_{1,k}(\bm p)=
\hat r_k(\bm p)e^{j\arg z_k(\bm p)}$.
Then, the resulting position cost and estimate are given by
\begin{equation}
\begin{aligned}
\Phi_{\text p,k}(\bm p)
&\triangleq\frac{1}{\sigma_v^2}
\|\tilde{\bm{\mathsf Y}}_{\text p,k}
-\hat\alpha_{1,k}(\bm p)\bm X_{\text p,k}(\bm p)\|_{\text F}^2,\\
\mathcal C_1(\bm p)
&\triangleq\sum_{k=1}^{K}\left[
\Phi_{\text p,k}(\bm p)
+\Phi_{\text d,k}(\bm p,\hat r_k(\bm p))\right],\\
\hat{\bm p}_1&=\arg\min_{\bm p}\mathcal C_1(\bm p).
\end{aligned}
\label{eq:scenario1_profile_cost}
\end{equation}
The resulting nonconvex localization problem is solved by a
coarse-to-fine search over the target region, with the inner scalar
minimization in~\eqref{eq:scenario1_magnitude_profile} evaluated at each trial
position and the best candidate refined locally, as in standard direct
localization methods~\cite{BosKraYar:J16}.

\begin{algorithm}[t]
\caption{Joint MAP Target Localization and Data-Recovery Algorithm in Scenario 2}
\label{alg:joint_localizer}
\begin{algorithmic}[1]
\STATE \textbf{Input:} $\{\tilde{\bm{\mathsf Y}}_k\}_{k=1}^K$,
$\bm S_{\text p}$, $\mathcal C_{\text d}$, and $f(\bm S_{\text d})$.
\STATE Initialize $(\bm p^{(0)},\{\alpha_k^{(0)}\})$ using Algorithm~\ref{alg:statistical_localizer};
\STATE Set $t=0$;
\REPEAT
    \STATE Recover the common $\hat{\bm S}_{\text d}^{(t)}$ using~\eqref{eq:scenario2_data_update};
    \STATE Form $\hat{\bm S}^{(t)}_2=[\bm S_{\text p},\hat{\bm S}_{\text d}^{(t)}]$;
    \STATE Update $\bm p^{(t+1)}$ using~\eqref{eq:mstep_position};
    \STATE Update $\{\alpha_k^{(t+1)}\}$ using~\eqref{eq:alpha_concentrated_est};
    \STATE Accept an approximate block update only if the joint cost does not increase;
    \STATE Set $t= t+1$;
\UNTIL{the position and recovered data waveform stabilize.}
\STATE \textbf{Output:} $\hat{\bm p}_2$, $\{\hat\alpha_{2,k}\}$, and
$\hat{\bm S}_{\text d}$.
\end{algorithmic}
\end{algorithm}

\subsection{Joint Target Localization and Data Recovery in Scenario 2}
Let $\mathcal C_{\text d}$ denote the admissible set of the broadcast data
matrix. Scenario~2 jointly estimates the position, link amplitudes, and
data waveform. Define its negative joint log-posterior as
\begin{equation}
\mathcal L_2(\bm p,\{\alpha_k\},\bm S_{\text d})
\triangleq\frac{1}{\sigma_v^2}\sum_{k=1}^{K}
\left\|\tilde{\bm{\mathsf Y}}_k
-\alpha_k\bm B_k(\bm p)\bm S_2
\right\|_{\text F}^2
-\ln f(\bm S_{\text d}),
\label{eq:scenario2_joint}
\end{equation}
where $\bm S_2 = [\bm S_{\text p},\bm S_{\text d}]$.
The joint MAP estimator is
\begin{equation}
(\hat{\bm p}_2,\{\hat\alpha_{2,k}\},\hat{\bm S}_{\text d})
=\operatorname*{arg\,min}_{{\bm p,\{\alpha_k\}
, \bm S_{\text d}\in\mathcal C_{\text d}}}
\mathcal L_2(\bm p,\{\alpha_k\},\bm S_{\text d}),
\label{eq:scenario2_map_estimator}
\end{equation}
which can be solved by alternating between data-waveform recovery
and known-waveform localization, following the joint channel-estimation and
data-detection principle in~\cite{VikHasSto:J06,KesMurMiz:C25}.

Given $(\bm p^{(t)},\{\alpha_k^{(t)}\})$, the common data update is
\begin{equation}
\hat{\bm S}_{\text d}^{(t)}\in
\arg\min_{\bm S_{\text d}\in\mathcal C_{\text d}}\sum_{k=1}^{K}
\left\|\tilde{\bm{\mathsf Y}}_{\text d,k}
-\alpha_k^{(t)}\bm B_k(\bm p^{(t)})\bm S_{\text d}
\right\|_{\text F}^2-\sigma_v^2\ln f(\bm S_{\text d}).
\label{eq:scenario2_data_update}
\end{equation}
This block can be implemented by an \ac{lmmse} estimator for Gaussian data or
by a \ac{map} detector and soft decoder for finite-alphabet data. The recovered data
waveform is then appended to the pilots as
$\hat{\bm S}^{(t)}=[\bm S_{\text p},\hat{\bm S}_{\text d}^{(t)}]$ and treated
as a known waveform. For a trial position, define
$\bm X_k(\bm p\mid\hat{\bm S}^{(t)})
\triangleq\bm B_k(\bm p)\hat{\bm S}^{(t)}$. The corresponding full-frame
amplitude estimate is
\begin{equation}
    \hat\alpha_{2,k}(\bm p\mid\hat{\bm S}^{(t)})
    =
    \frac{
    \text{tr}\big\{
    \bm X_k^{\text H}(\bm p\mid\hat{\bm S}^{(t)})
    \tilde{\bm{\mathsf Y}}_k
    \big\}}
    {\|\bm X_k(\bm p\mid\hat{\bm S}^{(t)})\|_{\text F}^2}.
    \label{eq:alpha_concentrated_est}
\end{equation}
After calculating amplitudes, the localization block becomes
\begin{equation}
    \begin{aligned}
\mathcal C_2(\bm p\mid\hat{\bm S}^{(t)})
&\triangleq\frac{1}{\sigma_v^2}\sum_{k=1}^{K}\Bigg(
\|\tilde{\bm{\mathsf Y}}_k\|_{\text F}^2-
\frac{\big|\text{tr}\big(
\bm X_k^{\text H}(\bm p\mid\hat{\bm S}^{(t)})
\tilde{\bm{\mathsf Y}}_k\big)\big|^2}
{\|\bm X_k(\bm p\mid\hat{\bm S}^{(t)})\|_{\text F}^2}\Bigg),\\
\bm p^{(t+1)}&=\arg\min_{\bm p}
\mathcal C_2(\bm p\mid\hat{\bm S}^{(t)}).
\end{aligned}
\label{eq:mstep_position}
\end{equation}
The amplitudes are then updated by~\eqref{eq:alpha_concentrated_est} at
$\bm p^{(t+1)}$. Exact block updates decrease the joint cost, while approximate
decoder updates are accepted only if they do not increase~\eqref{eq:scenario2_joint}.

\begin{remark}
The two algorithms expose the robustness-information contrast predicted by
the performance limits. Algorithm~\ref{alg:statistical_localizer} avoids symbol
reconstruction and is therefore robust to data uncertainty, but it extracts only the
covariance information available in Scenario~1. Algorithm~\ref{alg:joint_localizer}
is data-estimation aided. If $\hat{\bm S}_{\text d}^{(t)}=\bm S_{\text d}$, its
localization block is precisely the known-waveform \ac{ml} estimator associated
with Scenario~2. Otherwise, data-reconstruction errors distort the
virtual pilot signals and may propagate into the position update.
\end{remark}

\begin{figure}[t]
    \centering
    \includegraphics[width=0.88\linewidth]{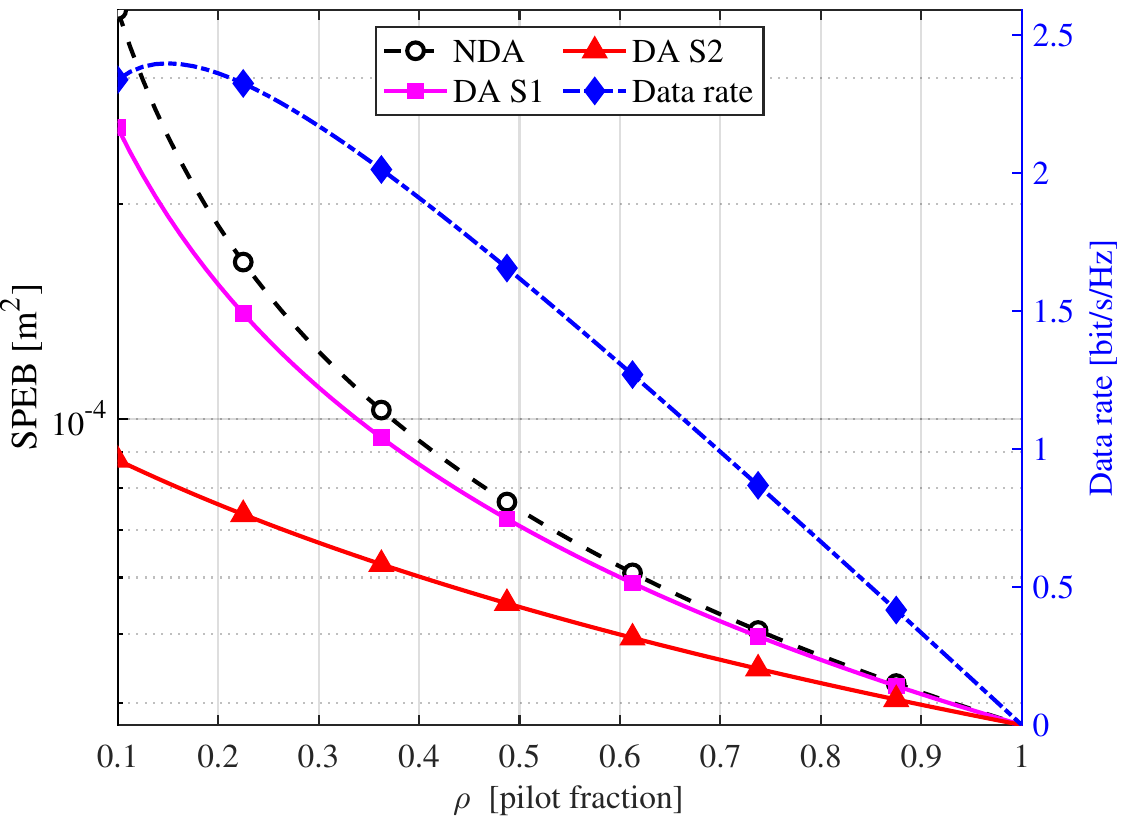}
    \caption{\ac{speb} and broadcast data rate versus pilot fraction $\rho$ at
    $5$~dB under a fixed data covariance.}
    \label{fig:theory_rho}
\end{figure}

\section{Numerical Results}
\label{sec:results}

This section validates the performance bounds for \ac{sac}, the
communication-constrained optimization, and the proposed estimators. 
Unless otherwise stated, we consider a multistatic network with one
transmitter, one point target, and $K=3$ receivers, located at
$\bm p_{\text t}=[0,0]^{\text T}$, $\bm p=[18,14]^{\text T}$,
$\bm p_{\text r,1}=[30,2]^{\text T}$, $\bm p_{\text r,2}=[5,30]^{\text T}$, and
$\bm p_{\text r,3}=[34,26]^{\text T}$ in meters. The propagation speed is
$c=3\times10^8$~m/s, the subcarrier spacing is $\Delta f=6$~MHz, and the noise
variance is normalized to $\sigma_v^2=1$. The 
$\text{SNR}=|\alpha_k|^2/\sigma_v^2$ is identical across the $K$ bistatic
links. The transmitter and each receiver employ half-wavelength \acp{ula} with
$M_{\text t}=M_{\text r}=8$ antennas, and each frame spans $N=16$ \ac{ofdm}
subcarriers and $T=80$ symbols. The pilot phase occupies $T_{\text p}$ symbols
with $M_{\text t}\le T_{\text p}\le T$ and $T_{\text d}=T-T_{\text p}$, so the
pilot fraction $\rho=T_{\text p}/T$ ranges over
$\{M_{\text t}/T,\ldots,1\}$ with minimum feasible value
$\rho_{\min}=M_{\text t}/T=0.1$.

\begin{figure}[t]
    \centering
    \includegraphics[width=0.88\linewidth]{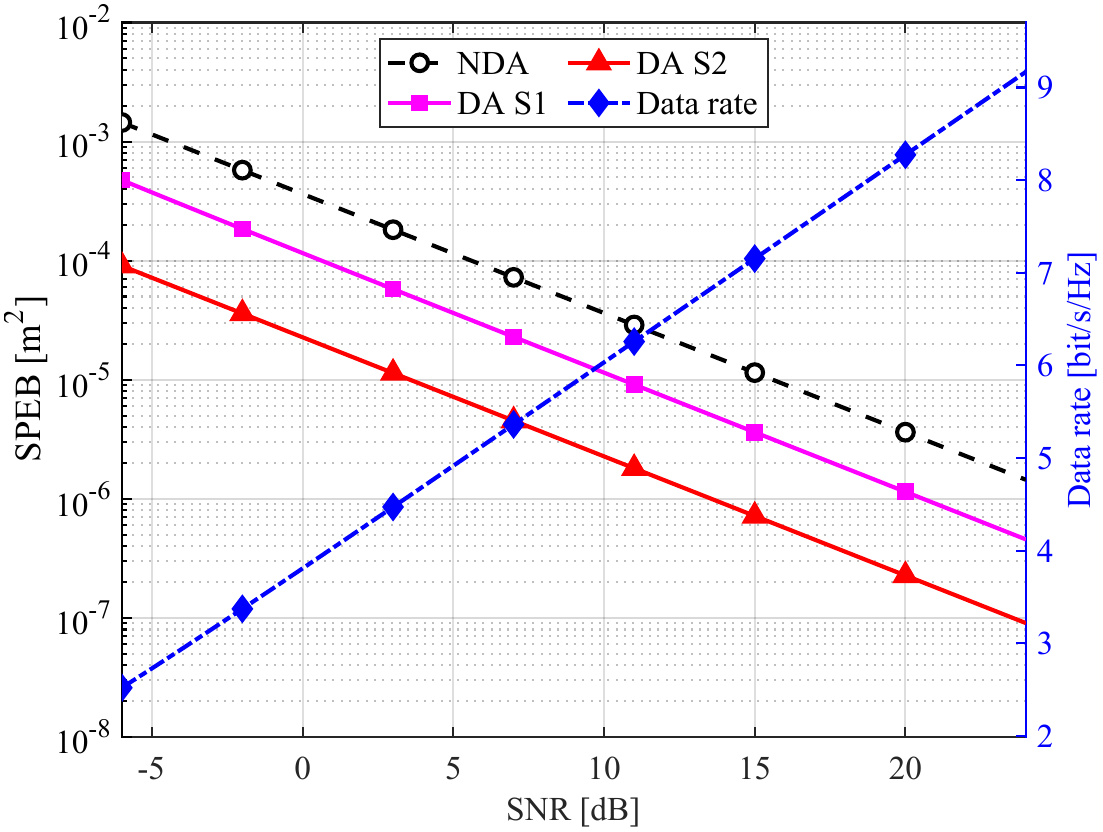}
    \caption{\ac{speb} and broadcast data rate versus \ac{snr} for NDA, DA S1,
    and DA S2 under fixed $\rho=0.32$ and a fixed $\bm R_{\text d}$.}
    \label{fig:theory_snr}
\end{figure}

\subsection{Validation of Theoretical Analysis for \ac{sac}}
To isolate the dependence of the performance limits on a single design
variable, both figures fix all other quantities and sweep only one.
Fig.~\ref{fig:theory_rho} fixes the \ac{snr} and the data covariance and
sweeps the pilot fraction $\rho$, whereas Fig.~\ref{fig:theory_snr} fixes
$\rho=0.32$ and a target-directed data covariance and sweeps the \ac{snr} to
validate the scaling under a strong reflected link.
 In all legends, NDA denotes the
non-data-aided baseline, while DA S1 and DA S2 denote data-aided Scenarios~1
and~2, respectively. The NDA bound uses only the pilot block for sensing.
Scenario~1 adds the spatial covariance information carried by the marginalized
Gaussian data, whereas Scenario~2 treats correctly decoded data as known
waveforms. The rate is evaluated from~\eqref{eq:mimo_rate_general} using the
pilot-estimated reflected channel and is common to all three  strategies
because they transmit the same waveform.

\begin{figure}[t]
    \centering
    \includegraphics[width=0.88\linewidth]{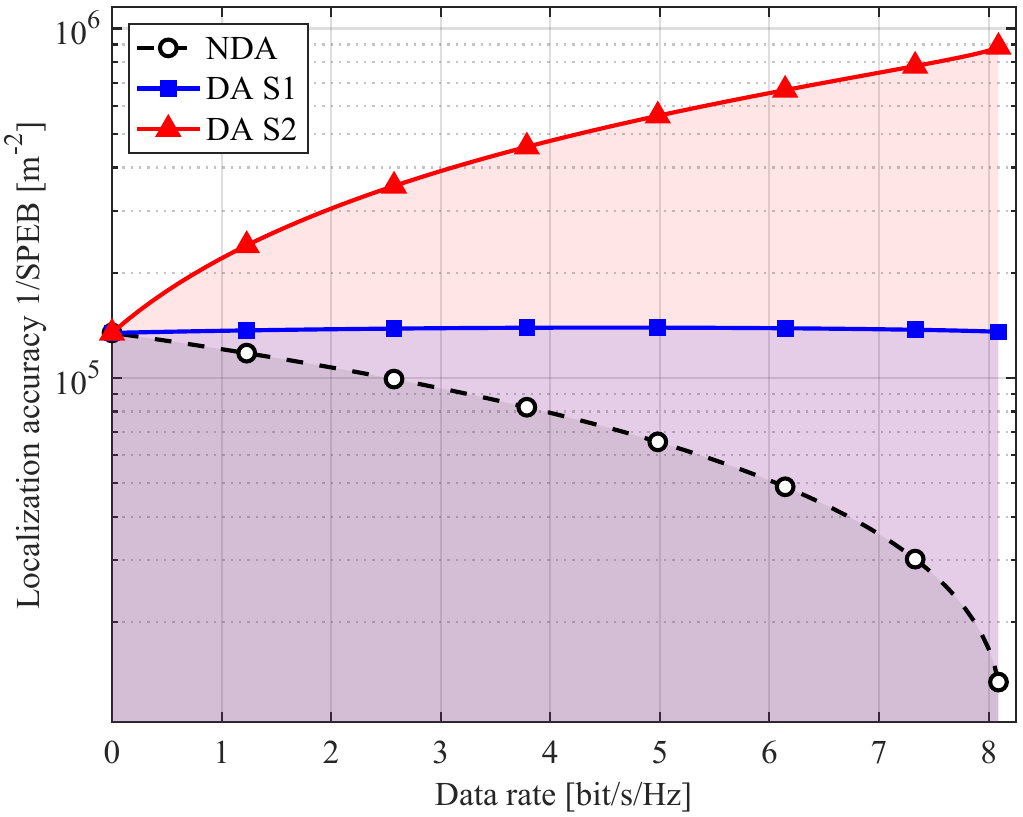}
    \caption{Joint \ac{sac} bounds obtained by sweeping the pilot fraction under
    optimized covariance at $12$~dB.}
    \label{fig:joint_sac_bound}
\end{figure}
\begin{figure}[t]
    \centering
    \includegraphics[width=0.88\linewidth]{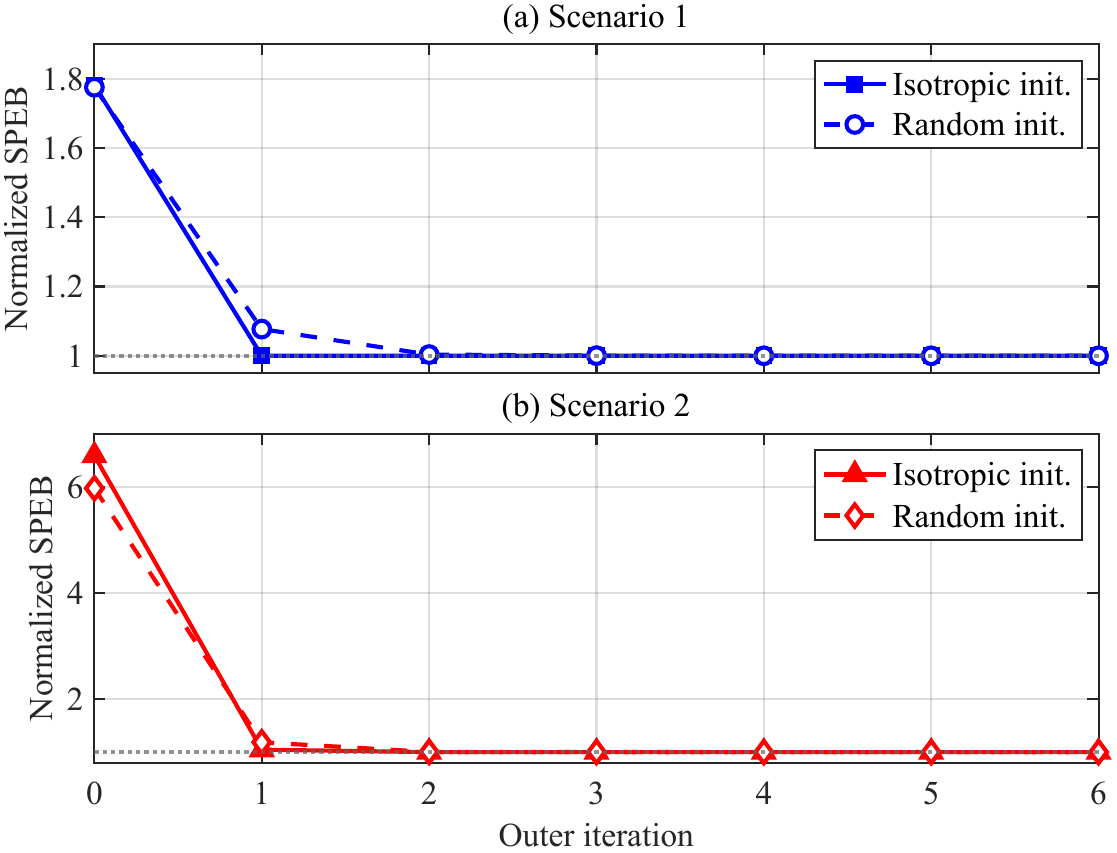}
    \caption{Convergence of Algorithm~\ref{alg:constrained_bound_opt} from isotropic and
    random feasible initializations in (a) Scenario~1 and (b) Scenario~2 at
    $R_{\text{th}}=0.6R_{\max,\text{iso}}$, normalized by the optimum
    $\mathcal P_\ell^\star$.}
    \label{fig:ao_convergence}
\end{figure}

Fig.~\ref{fig:theory_rho} fixes the \ac{snr} at $5$~dB and sweeps $\rho$. The
NDA bound decreases with $\rho$, since its localization information comes only
from the pilot block and scales with $T_{\text p}$. Scenario~1 stays below the
NDA bound at every $\rho$, because the marginalized data add strictly positive
covariance-based angular information on top of the same pilot term. This gap
narrows as $\rho\to1$, since the shrinking data phase $T_{\text d}$ carries
less covariance information. Scenario~2 attains the lowest bound over the
entire range because correctly decoded data act as virtual pilots. The
broadcast rate is maximized at an intermediate pilot fraction at $\rho\approx
0.15$ reflecting the balance between pilot-aided channel estimation and
available data duration. At $\rho=1$, the data phase vanishes, the three
\acp{speb} coincide, and the rate drops to zero, providing a direct
consistency check of the analysis.

Fig.~\ref{fig:theory_snr} fixes $\rho=0.32$ and sweeps \ac{snr}. All
\acp{speb} decrease with \ac{snr} while preserving the ordering
$\mathcal P_2<\mathcal P_1<\mathcal P_{\text{NDA}}$. The curves are nearly
parallel, indicating that data aiding provides an information gain rather than
changing the \ac{snr} order. Scenario~1 remains about half an order of magnitude
below the baseline, and Scenario~2 more than an order below it over the tested
\ac{snr} range. The broadcast rate rises from $2.5$ to $9.2$~bit/s/Hz as the
post-estimation \ac{snr} improves. Scenario~1 realizes its gain without decoding
the data realization and is statistically robust, whereas Scenario~2 is the
ideal correct-decoding benchmark and achieves the larger gain at the cost of
possible error propagation in practical receivers.

\begin{figure}[t]
    \centering
    \includegraphics[width=0.88\linewidth]{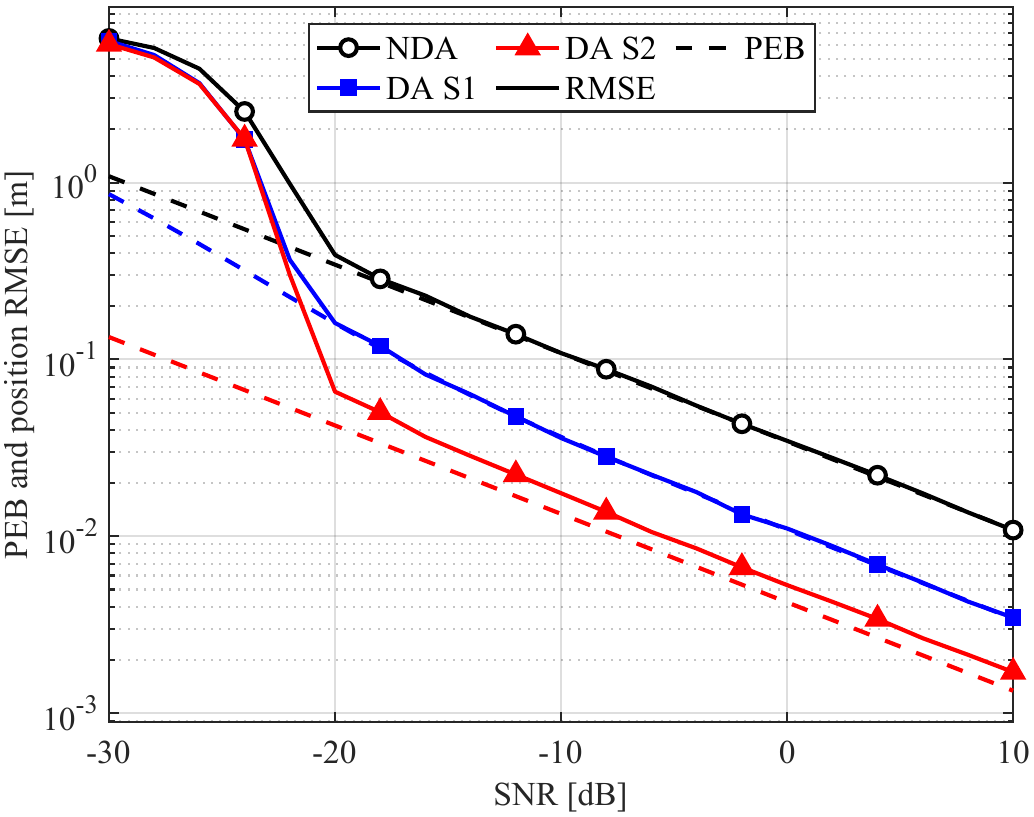}
    \caption{Position \ac{rmse} and \ac{peb} versus \ac{snr} at $\rho=0.1$ for
    the NDA, DA S1, and DA S2 localizers.}
    \label{fig:est_localization}
\end{figure}

\subsection{Validation of Communication-Constrained Optimization}
Fig.~\ref{fig:joint_sac_bound} depicts the joint \ac{sac} bound by plotting
the localization information $1/\text{SPEB}$ against the broadcast data rate at
$12$~dB, where $\bm R_{\text d}$ is fixed to the target-directed beam obtained
from Algorithm~\ref{alg:constrained_bound_opt}'s joint optimization and the
pilot fraction is swept over its feasible integer values. The gray, blue, and
red fillings denote the feasible areas below the NDA, DA S1, and DA S2 bound
curves, respectively. The NDA curve reflects the pilot-only sensing baseline and
decreases as the data rate increases, since higher rate requires a shorter pilot
phase. Scenario~1 stays comparatively flat because the loss of pilot-driven
delay information is largely compensated by data-driven angular information.
Scenario~2 gives the largest gain because reliably decoded data act as virtual
pilots and contribute to both delay and angular information.

Fig.~\ref{fig:ao_convergence} verifies the convergence of
Algorithm~\ref{alg:constrained_bound_opt} from isotropic and random 
initializations at $R_{\text{th}}=0.6R_{\max,\text{iso}}$, where
$R_{\max,\text{iso}}$ is the maximum broadcast rate attainable with an
isotropic data covariance after optimizing $\rho$ alone. The isotropic
start produces most of its decrease in the first covariance update and
stabilizes within a few outer iterations, whereas the random full-rank start
approaches the same point more gradually. Despite the different trajectories,
both starts converge to the same solution, namely $7.19\times10^{-6}$~m$^2$ in
Scenario~1 and $1.13\times10^{-6}$~m$^2$ in Scenario~2. Every accepted iterate
satisfies the broadcast-rate, positive-semidefinite, and unit-trace constraints,
with the final changes in $\rho$ and $\bm R_{\text d}$ below the prescribed
tolerances.

\subsection{Validation of Target Localization Estimator}

We finally evaluate the estimators of Section~\ref{sec:DAS} at the minimum
feasible pilot fraction $\rho=0.1$ with $T_{\text p}=8$. The data symbols are
drawn from the Gaussian codebook assumed in the analysis, and the results are
averaged over $1000$ Monte Carlo trials. All three estimators use the common
search region $[10,26]\text m\times[6,22]\text m$. Scenario~1 uses the
marginal-likelihood ML estimator in Algorithm~\ref{alg:statistical_localizer},
whereas Scenario~2 uses five alternating Gaussian-data MAP updates in
Algorithm~\ref{alg:joint_localizer}. Each accepted block update does not
increase the joint MAP cost.

\begin{figure}[t]
    \centering
    \includegraphics[width=0.88\linewidth]{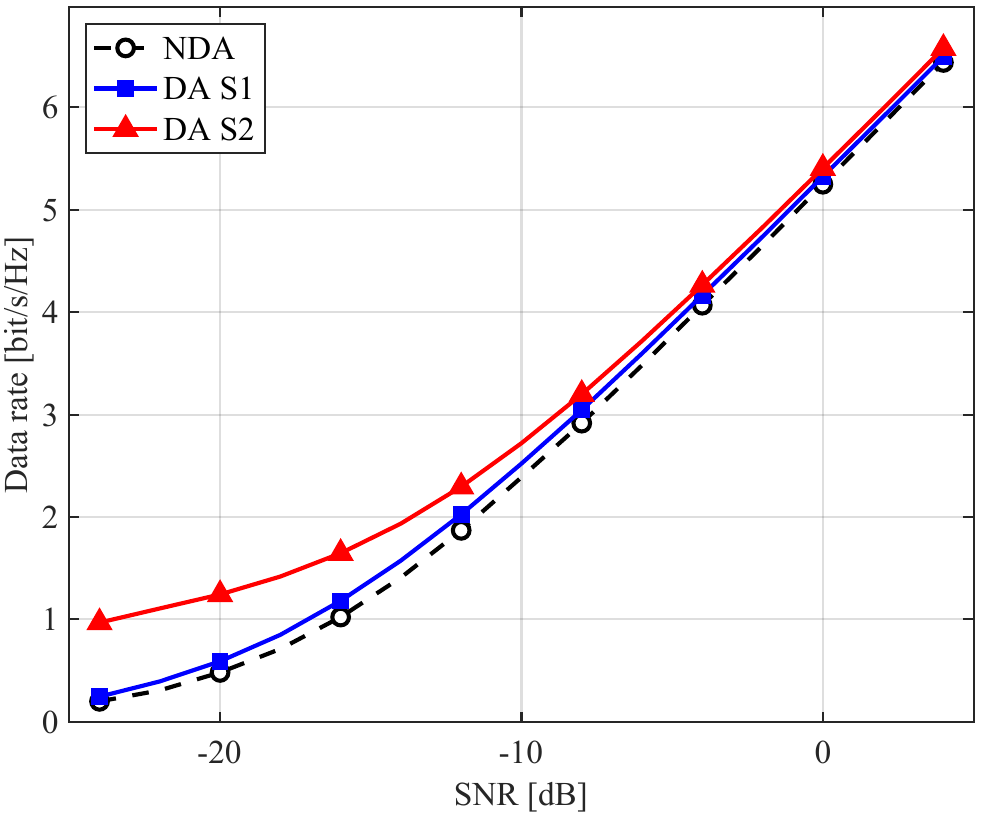}
    \caption{Broadcast data rate versus \ac{snr} at $\rho=0.1$ for  three strategies of NDA, DA S1, and DA S2.}
    \label{fig:est_rate}
\end{figure}

Fig.~\ref{fig:est_localization} shows that the practical estimators attain the
performance limits and realize the data-aided gain. Below a threshold near
$-20$~dB, the pilot initialization and Gaussian-data recovery are unreliable,
so the \acp{rmse} depart from their bounds, with Scenario~2 the most sensitive
because channel and data errors feed back through its alternating updates.
The NDA and Scenario~1 \acp{rmse} coincide with their
\acp{peb}, while the Scenario~2 \ac{rmse} stays close to but slightly above its
bound due to residual  data-estimation errors. At $10$~dB, the two
data-aided localizers reduce the position \ac{rmse} from the NDA value of
$10.8$~mm to $3.5$ and $1.7$~mm, respectively, with  \acp{peb} of
$3.4$ and $1.3$~mm.
This confirms that the gains predicted from
Theorem~\ref{thm:mimo_s1_pilot} to Theorem~\ref{thm:mimo_s2} are attained by the
algorithms.

Fig.~\ref{fig:est_rate} evaluates the complete receiver using the 
channel reconstructed by each localizer, with residual mismatch treated as
additional noise. The result shows
that data-aided localization can also improve communication by providing a more
accurate channel. Scenario~2 gives the largest gain
because decoded data act as known  information, whereas Scenario~1 improves
the reconstruction only through covariance-aided localization. The gain over the
NDA receiver is most visible at low \ac{snr}, reaching $0.97$ versus
$0.20$~bit/s/Hz at $-24$~dB and $1.65$ versus $1.02$~bit/s/Hz at $-16$~dB. It
then narrows as the pilot-only estimate becomes accurate, so the three
strategies converge at high \ac{snr}. This confirms that the data-aided
localization scheme also benefits the communication data rate by reusing the
refined channel estimate.

\section{Conclusion}\label{sec:concl}
This paper proposes a data-aided target localization framework for multistatic
\Ac{isac} networks under \ac{ofdm} signaling. Based on this, two receiver processing strategies
are considered, i.e., statistical data-aided sensing, which exploits the distribution
of unknown Gaussian data symbols, and joint data-aided sensing and decoding,
which reuses decoded data as virtual pilots. The derived performance limits for target localization show that marginalized data and decoded data provide additional sensing information from the 
 covariance and mean of the received signal, respectively. 
 Based on the
closed-form \ac{efim} and \ac{speb}, the communication-constrained localization
bound is optimized through joint time allocation and transmit-covariance design, and practical estimators
are developed to validate the two data-aided schemes.
Future work includes multi-target localization, joint pilot-data covariance optimization
under finite-alphabet constraints, and integration with iterative channel decoding.



%

\appendices

\section{Proof of Theorem~\ref{thm:mimo_s1_pilot}}
\label{app:proof_pilot_fim}
For the $k$-th link, recall the complete unknown parameter vector
$\bm\xi_k=[\tau_k,\varphi_k,\psi,\alpha_{\text R,k},
\alpha_{\text I,k}]^{\text T}$. The vectorized pilot observation is distributed
as $\tilde{\bm{\mathsf y}}_{\text p,k}\sim
\mathcal{CN}(\bm\mu_{\text p,k},\sigma_v^2\bm I)$, where
$\bm\mu_{\text p,k}=\alpha_k\text{vec}((\bm D_k\otimes\bm A(\bm\vartheta_k))
\bm S_{\text p})$. Since the covariance is independent
of $\bm\xi_k$, the complex Gaussian \ac{fim}~\cite{KhaPodHaa:J21} of
$\bm \xi_k$ gives
\begin{equation}
    \bm J_{\text p,k}(\bm\xi_k)
    =
    \frac{2}{\sigma_v^2}\text{Re}\Big\{
    \frac{\partial\bm\mu_{\text p,k}^{\text H}}{\partial \bm\xi_{k}}
    \frac{\partial\bm\mu_{\text p,k}}{\partial \bm \xi_{k}^\text{T}}
    \Big\}.
    \label{eq:app_pilot_fim_definition}
\end{equation}
For brevity, write $\bm D_k\triangleq\bm D(\tau_k)$. Define
$\bm B_{0,k}=\bm D_k\otimes\bm A(\bm\vartheta_k)$,
$\bm B_{\tau,k}=\dot{\bm D}_k\otimes\bm A(\bm\vartheta_k)$,
$\bm B_{\varphi,k}=\bm D_k\otimes\dot{\bm A}_{\varphi,k}$, and
$\bm B_{\psi,k}=\bm D_k\otimes\dot{\bm A}_{\psi,k}$, where
$\dot{\bm D}_k=-j2\pi\Delta f\,
\text{diag}(0,\ldots,N-1)\bm D_k$,
$\dot{\bm A}_{\varphi,k}=\dot{\bm a}_{\text r}(\varphi_k)
\bm a_{\text t}^{\text H}(\psi)$, and
$\dot{\bm A}_{\psi,k}=\bm a_{\text r}(\varphi_k)
\dot{\bm a}_{\text t}^{\text H}(\psi)$.
The mean derivatives with respect to $i\in\{\tau,\varphi,\psi\}$ are
$\alpha_k\text{vec}(\bm B_{i,k}\bm S_{\text p})$, while those with respect
to $(\alpha_{\text R,k},\alpha_{\text I,k})$ are
$\text{vec}(\bm B_{0,k}\bm S_{\text p})$ and
$j\text{vec}(\bm B_{0,k}\bm S_{\text p})$, respectively.
For $i,j\in\{0,\tau,\varphi,\psi\}$, define
$c_{ij,k}\triangleq
\text{vec}(\bm B_{i,k}\bm S_{\text p})^{\text H}
\text{vec}(\bm B_{j,k}\bm S_{\text p})$ and $c_{0,k}\triangleq c_{00,k}$.
Substitution into \eqref{eq:app_pilot_fim_definition} yields the block form
\begin{equation}
\bm J_{\text p,k}(\bm\xi_k)
=\frac{2}{\sigma_v^2}
\begin{bmatrix}
|\alpha_k|^2\bm C_{\text p,k} & \bm D_{\text p,k}\\
\bm D_{\text p,k}^{\text T} & c_{0,k}\bm I_2
\end{bmatrix},
\label{eq:app_pilot_full_fim}
\end{equation}
where $[\bm C_{\text p,k}]_{ij}=\text{Re}\{c_{ij,k}\}$ for
$i,j\in\{\tau,\varphi,\psi\}$, and the $i$-th row of
$\bm D_{\text p,k}$ is
$[\text{Re}\{\alpha_k^*c_{i0,k}\},-\text{Im}\{\alpha_k^*c_{i0,k}\}]$.
We next evaluate these entries under the per-subcarrier orthogonal-pilot
condition~\eqref{eq:orthogonal_pilot_main}. Since
$\bm B_{i,k}^{\text H}\bm B_{j,k}$ is block diagonal across subcarriers, only
the diagonal subcarrier blocks of
$\bm S_{\text p}\bm S_{\text p}^{\text H}$ enter the trace. Hence,
$\text{vec}(\bm B_{i,k}\bm S_{\text p})^{\text H}
 \text{vec}(\bm B_{j,k}\bm S_{\text p})
=(T_{\text p}/M_{\text t})
 \text{tr}(\bm B_{i,k}^{\text H}\bm B_{j,k})$.
Consequently, $c_{ij,k}
=(T_{\text p}/M_{\text t})
\text{tr}(\bm B_{i,k}^{\text H}\bm B_{j,k})$, whose Kronecker factors can be
evaluated separately. For a \ac{ula}, we have 
\begin{equation}
\begin{aligned}
    \bm a_q(\theta_q)
    &=\big[e^{jm u_q(\theta_q)}\big]_{m=0}^{M_q-1},\\
    u_q(\theta_q)
    &\triangleq 2\pi\Delta_q\sin\theta_q,~~~~
    u_q'(\theta_q)=2\pi\Delta_q\cos\theta_q,\\
    \dot{\bm a}_q(\theta_q)
    &=j u_q'(\theta_q)\text{diag}(0,\ldots,M_q-1)\bm a_q(\theta_q),
    \quad q\in\{\text r,\text t\}.
\end{aligned}
    \label{eq:app_ula_derivative}
\end{equation}
Here, $\theta_{\text r}=\varphi_k$, $\theta_{\text t}=\psi$, and
$\Delta_q\triangleq d_q/\lambda$ is the inter-element spacing of array $q$
normalized by the carrier wavelength. For half-wavelength spacing,
$\Delta_q=1/2$ and $u_q'(\theta_q)=\pi\cos\theta_q$.
Eliminating $(\alpha_{\text R,k},\alpha_{\text I,k})$ from
\eqref{eq:app_pilot_full_fim} by the Schur complement gives
$\bm J_{\text p}^{\text e}(\bm\theta_k)$~\cite{SheWin:J10a}. Using
the trace identity above, the off-diagonal terms cancel and the diagonal terms
retain only the centered frequency and array apertures. With
$s_\ell(L)\triangleq\sum_{m=0}^{L-1}m^\ell$ for $\ell\in\{1,2\}$,
$\omega\triangleq 2\pi\Delta f$,
$\nu_{\text r}\triangleq u_{\text r}'(\varphi_k)$, and
$\nu_{\text t}\triangleq u_{\text t}'(\psi)$, this gives
\begin{equation}
\begin{aligned}
\bm J_{\text p}^{\text e}(\bm\theta_k)
&=\frac{2T_{\text p}|\alpha_k|^2}{\sigma_v^2}
\text{diag}\Big\{
M_{\text r}\omega^2
\big[s_2(N)-\frac{s_1^2(N)}{N}\big],\\
&\quad
N\nu_{\text r}^2
\big[s_2(M_{\text r})-\frac{s_1^2(M_{\text r})}{M_{\text r}}\big],\\
&\quad
\frac{NM_{\text r}}{M_{\text t}}\nu_{\text t}^2
\big[s_2(M_{\text t})-\frac{s_1^2(M_{\text t})}{M_{\text t}}\big]
\Big\}.
\end{aligned}
\label{eq:app_pilot_final_efim}
\end{equation}
This is exactly the centered-aperture form in~\eqref{eq:mimo_pilot_diag}.
Using $s_2(L)-s_1^2(L)/L=L(L^2-1)/12$ gives the equivalent fully expanded
form.
\hfill$\square$

\section{Proof of Lemma~\ref{thm:mimo_s1_data}}
\label{app:proof_data_fim}
After marginalizing the Gaussian data, each data observation is zero-mean
Gaussian with covariance $\bm I_N\otimes\tilde{\bm Q}_k$, where
$\tilde{\bm Q}_k$ is given in Lemma~\ref{thm:mimo_s1_data}. The data covariance  is calculated by 
$|\alpha_k|^2(\bm D_k\otimes\bm A(\bm\vartheta_k))(\bm I_N\otimes\bm R_{\text d})
(\bm D_k\otimes\bm A(\bm\vartheta_k))^{\text H}+\sigma_v^2\bm I_{NM_{\text r}}$.
Since $\bm D_k\bm D_k^{\text H}=\bm I_N$, this reduces to
$\bm\Sigma_k=\bm I_N\otimes\tilde{\bm Q}_k$ and is independent of $\tau_k$.
Hence $\partial\bm\Sigma_k/\partial\tau_k=\bm0$ and
$[\bm J_{\text d,k}]_{\tau i}=0$ for
$i\in\{\tau,\varphi,\psi,\alpha_{\text R,k},\alpha_{\text I,k}\}$.
The zero first row and column in \eqref{eq:main_data_theta_fim} therefore come
from the marginalization of the unknown data symbols, which removes the
subcarrier phase rotations. The complete covariance over one data symbol is
$\bm I_N\otimes\tilde{\bm Q}_k$, and over $T_{\text d}$ independent data symbols it
is repeated $T_{\text d}$ times. For zero-mean complex Gaussian
observations~\cite{KhaPodHaa:J21},
\begin{equation}
    [\bm J_{\text d,k}]_{ij}
    =
    NT_{\text d}\text{tr}\big(
    \tilde{\bm Q}_k^{-1}
    \dot{\tilde{\bm Q}}_{i,k}
    \tilde{\bm Q}_k^{-1}
    \dot{\tilde{\bm Q}}_{j,k}
    \big).
    \label{eq:app_cov_fim}
\end{equation}
The angular covariance derivatives are
$    \dot{\tilde{\bm Q}}_{i,k}
    =
    \dot{\bm H}_{i,k}\bm R_{\text d}\bm H_k^{\text H}
    +
    \bm H_k\bm R_{\text d}\dot{\bm H}_{i,k}^{\text H},
    ~ i\in\{\varphi,\psi\}$,
with
$\dot{\bm H}_{\varphi,k}=\alpha_k\dot{\bm a}_{\text r}(\varphi_k)\bm a_{\text t}^{\text H}(\psi)$
and
$\dot{\bm H}_{\psi,k}=\alpha_k\bm a_{\text r}(\varphi_k)\dot{\bm a}_{\text t}^{\text H}(\psi)$,
which lead to the angular block in \eqref{eq:main_data_theta_fim}.
Using the covariance inner product defined in
\eqref{eq:main_data_gamma_def}, \eqref{eq:app_cov_fim} gives
$[\bm J_{\text d,k}]_{ij}=T_{\text d}[\bm\Gamma_{\text d,k}]_{ij}$ for
$i,j\in\{\varphi,\psi\}$.
For the amplitude blocks, let
$\bm C_k=\bm A(\bm\vartheta_k)\bm R_{\text d}\bm A^{\text H}(\bm\vartheta_k)$. Since
$\tilde{\bm Q}_k=|\alpha_k|^2\bm C_k+\sigma_v^2\bm I$,
$\partial\tilde{\bm Q}_k/\partial\alpha_{\text R,k}
=2\alpha_{\text R,k}\bm C_k$ and
$\partial\tilde{\bm Q}_k/\partial\alpha_{\text I,k}
=2\alpha_{\text I,k}\bm C_k$. Combining these derivatives with the angular
covariance derivative gives
\begin{equation}
    [\bm J_{\text d,k}]_{i\alpha}
    =
    2NT_{\text d}
    \text{tr}(\tilde{\bm Q}_k^{-1}\dot{\tilde{\bm Q}}_{i,k}
    \tilde{\bm Q}_k^{-1}\bm C_k)
    \bm\alpha_k^{\text T},
    \quad i\in\{\varphi,\psi\},
    \label{eq:app_data_angle_amp}
\end{equation}
Using \eqref{eq:main_data_gamma_def}, this is equivalently
$[\bm J_{\text d,k}]_{i\alpha}
=T_{\text d}[\bm b_{\text d,k}]_i\bm\alpha_k^{\text T}$ for
$i\in\{\varphi,\psi\}$,
which gives the data angle-amplitude block in \eqref{eq:main_data_theta_fim}.
Similarly, with
$\bm\alpha_k=[\alpha_{\text R,k},\alpha_{\text I,k}]^{\text T}$,
\begin{equation}
    \bm J_{\text d,\alpha\alpha,k}
    =
    4NT_{\text d}
    \text{tr}(\tilde{\bm Q}_k^{-1}\bm C_k
    \tilde{\bm Q}_k^{-1}\bm C_k)
    \bm\alpha_k\bm\alpha_k^{\text T},
    \label{eq:app_data_amp_rank_one}
\end{equation}
or, using $\zeta_k=4\langle\bm C_k,\bm C_k\rangle_k$,
$\bm J_{\text d,\alpha\alpha,k}
=T_{\text d}\zeta_k\bm\alpha_k\bm\alpha_k^{\text T}$.
This block is rank one and points along the radial direction $\bm\alpha_k$ in the
complex-amplitude plane, because the marginalized data covariance depends on
the complex gain only through $|\alpha_k|^2$.
Collecting the zero delay row and column, the angular block, the
angle-amplitude block, and the amplitude block gives
\begin{equation}
    \bm J_{\text d}(\bm \xi_k)
    =
    T_{\text d}
    \begin{bmatrix}
    0 & \bm0_{1\times2} & \bm0_{1\times2}\\
    \bm0_{2\times1} & \bm\Gamma_{\text d,k}
    & \bm b_{\text d,k}\bm \alpha_k^{\text T}\\
    \bm0_{2\times1} & \bm \alpha_k\bm b_{\text d,k}^{\text T}
    & \zeta_k\bm \alpha_k\bm \alpha_k^{\text T}
    \end{bmatrix},
    \label{eq:app_data_final_block}
\end{equation}
which gives \eqref{eq:main_data_theta_fim}. \hfill$\square$

\section{Proof of Theorem~\ref{thm:mimo_s1}}
\label{app:proof_joint_efim}
For the $k$-th link, recall
$\bm\xi_k=[\bm\theta_k^{\text T},\bm\alpha_k^{\text T}]^{\text T}$ with
$\bm\theta_k=[\tau_k,\varphi_k,\psi]^{\text T}$. Because the complex amplitude
$\bm\alpha_k$ is shared by the pilot and data observations, the total \ac{fim}
is the sum $\bm J_k=\bm J_{\text p,k}+\bm J_{\text d,k}$,
with $\bm J_{\text p,k}$ in \eqref{eq:app_pilot_full_fim} and $\bm J_{\text d,k}$
in \eqref{eq:main_data_theta_fim}. Partition the total \ac{fim}
$\bm J_k(\bm\xi_k)$ as
   $ \bm J_k(\bm\xi_k)
    =
    \begin{bmatrix}
    \bm J_{\theta\theta,k} & \bm J_{\theta\alpha,k}\\
    \bm J_{\alpha\theta,k} & \bm J_{\alpha\alpha,k}
    \end{bmatrix}$.
The amplitude is therefore eliminated only after the addition, through the
single Schur complement~\cite{SheWin:J10a}
\begin{equation}
    \bm J_1^{\text e}(\bm\theta_k)
    =\bm J_{\theta\theta,k}
    -\bm J_{\theta\alpha,k}\bm J_{\alpha\alpha,k}^{-1}\bm J_{\alpha\theta,k}.
    \label{eq:app_joint_schur}
\end{equation}
The key point is that \eqref{eq:app_joint_schur} must be applied to the
{sum} of the pilot and data FIMs, because both observations contain the
same nuisance amplitude $\bm\alpha_k$. The remaining calculation only involves
the two real amplitude variables
$(\alpha_{\text R,k},\alpha_{\text I,k})$. We therefore use the following
orthonormal basis in this two-dimensional nuisance-amplitude space.
\begin{equation}
\begin{aligned}
    \bm e_{\text r,k}
    &\triangleq
    \frac{1}{\|\bm\alpha_k\|}
    [\alpha_{\text R,k},\alpha_{\text I,k}]^{\text T},\\
    \bm e_{\text t,k}
    &\triangleq
    \frac{1}{\|\bm\alpha_k\|}
    [\alpha_{\text I,k},-\alpha_{\text R,k}]^{\text T},
    \qquad
    \bm e_{\text r,k}^{\text T}\bm e_{\text t,k}=0.
\end{aligned}
    \label{eq:app_radial_tangential}
\end{equation}
Here $\bm e_{\text r,k}$ is the direction in which $|\alpha_k|$ changes, whereas
$\bm e_{\text t,k}$ is the orthogonal direction in which only the phase of
$\alpha_k$ changes. This is only an orthonormal change of coordinates in the
amplitude block of \eqref{eq:app_joint_schur}.

By the definition of $c_{ij,k}$, $c_{0,k}=\|\text{vec}(\bm B_{0,k}\bm
S_{\text p})\|^2=T_{\text p}NM_{\text r}$ is independent of
$\bm\theta_k$. Hence $\text{Re}\{c_{i0,k}\}=0$ for
$i\in\{\tau,\varphi,\psi\}$, and the pilot parameter--amplitude couplings in
\eqref{eq:app_pilot_full_fim} are tangential to the complex amplitude, with
$\bm J_{\text p,i\alpha}=p_{i,k}\bm e_{\text t,k}^{\text T}$ for
$i\in\{\tau,\varphi,\psi\}$ and real coefficients $p_{i,k}$. The pilot
amplitude block is isotropic,
$\bm J_{\text p,\alpha\alpha}=j_{\text p,k}^{\alpha}\bm I_2$, with
$j_{\text p,k}^{\alpha}=2T_{\text p}NM_{\text r}/\sigma_v^2$. Therefore the
pilot-only Schur complement satisfies
\begin{equation}
    [\bm J_{\text p}^{\text e}(\bm\theta_k)]_{i,i}
    =
    [\bm J_{\text p,\theta\theta,k}]_{i,i}
    -\frac{p_{i,k}^2}{j_{\text p,k}^{\alpha}}
    =
    T_{\text p}\gamma_{\text p,k}^i .
    \label{eq:app_pilot_scalar_schur}
\end{equation}

To expose the data structure, let
$\beta_k=\bm a_{\text t}^{\text H}(\psi)\bm R_{\text d}
\bm a_{\text t}(\psi)$, so that
$\bm C_k=\beta_k\bm a_{\text r}(\varphi_k)
\bm a_{\text r}^{\text H}(\varphi_k)$. The constant-modulus receive response
implies $[\bm b_{\text d,k}]_\varphi=0$ and
$[\bm\Gamma_{\text d,k}]_{\varphi,\psi}=0$
because $\bm a_{\text r}^{\text H}\dot{\bm a}_{\text r}$ is purely imaginary,
whereas $\tilde{\bm Q}_k^{-1}\bm a_{\text r}$ is collinear with
$\bm a_{\text r}$. Thus the marginalized data have neither a delay term nor an
\ac{aoa}--amplitude coupling. Their only possible radial angular coupling is
$\bm J_{\text d,\psi\alpha}=q_{\psi,k}\bm e_{\text r,k}^{\text T}$, where
$q_{\psi,k}=T_{\text d}[\bm b_{\text d,k}]_\psi\|\bm\alpha_k\|$.
The data amplitude block in \eqref{eq:app_data_amp_rank_one} is
$T_{\text d}\zeta_k\|\bm\alpha_k\|^2
\bm e_{\text r,k}\bm e_{\text r,k}^{\text T}$. The joint amplitude block is
$\bm J_{\alpha\alpha,k}
=j_{\text p,k}^\alpha\bm I_2
+T_{\text d}\zeta_k\|\bm\alpha_k\|^2
\bm e_{\text r,k}\bm e_{\text r,k}^{\text T}$,
which is diagonal in $(\bm e_{\text r,k},\bm e_{\text t,k})$. Hence
\begin{equation}
    \bm J_{\alpha\alpha,k}^{-1}
    =
    \frac{\bm e_{\text r,k}\bm e_{\text r,k}^{\text T}}
    {j_{\text p,k}^{\alpha}+T_{\text d}\zeta_k\|\bm\alpha_k\|^2}
    +
    \frac{\bm e_{\text t,k}\bm e_{\text t,k}^{\text T}}
    {j_{\text p,k}^{\alpha}} .
    \label{eq:app_joint_amp_inverse}
\end{equation}

The pilot and data angle-amplitude couplings lie in orthogonal amplitude
directions. Since the pilot-only EFIM is diagonal and
$[\bm\Gamma_{\text d,k}]_{\varphi,\psi}=0$, all off-diagonal entries of the
joint Schur complement vanish. For the diagonal entries, the delay has no data
term and the \ac{aoa} has no radial data coupling, so
$\lambda_k^\tau=T_{\text p}\gamma_{\text p,k}^\tau$ and
$\lambda_k^\varphi=T_{\text p}\gamma_{\text p,k}^\varphi
+T_{\text d}[\bm\Gamma_{\text d,k}]_{\varphi,\varphi}$. For the \ac{aod}, the
total amplitude coupling is
\begin{equation}
    \bm J_{\psi\alpha,k}
    =
    p_{\psi,k}\bm e_{\text t,k}^{\text T}
    +q_{\psi,k}\bm e_{\text r,k}^{\text T}.
    \label{eq:app_psi_amp_coupling}
\end{equation}
Substituting \eqref{eq:app_joint_amp_inverse} into the Schur-complement
correction gives
\begin{equation}
    [\bm J_{\theta\alpha,k}\bm J_{\alpha\alpha,k}^{-1}
    \bm J_{\alpha\theta,k}]_{\psi\psi}
    =\frac{p_{\psi,k}^2}{j_{\text p,k}^\alpha}
    +\frac{q_{\psi,k}^2}{j_{\text p,k}^\alpha
    +T_{\text d}\zeta_k\|\bm\alpha_k\|^2}.
    \label{eq:app_psi_schur}
\end{equation}
Combining \eqref{eq:app_pilot_scalar_schur} and \eqref{eq:app_psi_schur},
\begin{equation}
\lambda_k^\psi
=
T_{\text p}\gamma_{\text p,k}^\psi
+T_{\text d}\left(
[\bm\Gamma_{\text d,k}]_{\psi,\psi}
-\frac{T_{\text d}[\bm b_{\text d,k}]_\psi^2\|\bm\alpha_k\|^2}
{j_{\text p,k}^{\alpha}+T_{\text d}\zeta_k\|\bm\alpha_k\|^2}
\right),
    \label{eq:app_lambda_psi_final}
\end{equation}
which is exactly \eqref{eq:mimo_gamma_data}. As $\rho$ increases,
$j_{\text p,k}^\alpha\propto T_{\text p}$ dominates and the residual correction
vanishes. This proves \eqref{eq:main_total_schur}--\eqref{eq:mimo_gamma_data}.
\hfill$\square$

\section{Proof of Theorem~\ref{thm:mimo_s2}}
\label{app:proof_s2_efim}
Conditioned on a correctly decoded data realization~\cite{ZhaKamAlo:J25}, the full waveform
$\bm S_2=[\bm S_{\text p},\bm S_{\text d}]$ is deterministic and known. Hence the
mean-based \ac{fim} is identical to the pilot \ac{fim} in
Appendix~\ref{app:proof_pilot_fim}, with $\bm S_{\text p}$ replaced by
$\bm S_2$. For the expected benchmark used for offline covariance design, let
$\bm{\mathsf S}_2=[\bm S_{\text p},\bm{\mathsf S}_{\text d}]$ denote the
random full-frame waveform. Averaging its Gram matrix gives
\begin{equation}
\mathbb E_{\bm{\mathsf S}_{\text d}}
\{\bm{\mathsf S}_2\bm{\mathsf S}_2^{\text H}\}
=T(\bm I_N\otimes\bm R_2).
\label{eq:app_s2_gram}
\end{equation}
Accordingly,
\begin{equation}
\bar c_{ij,k}
=T\operatorname{tr}(
\bm B_{i,k}^{\text H}\bm B_{j,k}
(\bm I_N\otimes\bm R_2)).
\label{eq:app_s2_cbar}
\end{equation}
Substituting this into the full FIM \eqref{eq:app_pilot_full_fim} and
eliminating the common complex amplitude by the same Schur complement give
\begin{equation}
[\bar{\bm J}_{2}^{\text e}]_{ij}
=\frac{2T|\alpha_k|^2}{\sigma_v^2}
\operatorname{Re}\left\{
\chi_{ij,k}
-\frac{\chi_{i0,k}\chi_{0j,k}}{\chi_{00,k}}
\right\},
\label{eq:app_s2_schur_entry}
\end{equation}
where
\begin{equation}
\chi_{ij,k}\triangleq\frac{\bar c_{ij,k}}{T}.
\label{eq:app_s2_chi_def}
\end{equation}
The Kronecker factors separate the
frequency, receive-array, and transmit-array terms. Their cross terms cancel,
and the three diagonal terms reduce to \eqref{eq:mimo_s2_lambda} using
\eqref{eq:mimo_s2_moments}. The position transformation of
Corollary~\ref{thm:mimo_s1_position} then gives \eqref{eq:mimo_s2_efim}.
\hfill$\square$

\bibliographystyle{IEEEtran}
\bibliography{IEEEAbrv,StringDefinitions,SGroupDefinition,SGroup}

\begin{thebibliography}{10}
\providecommand{\url}[1]{#1}
\csname url@samestyle\endcsname
\providecommand{\newblock}{\relax}
\providecommand{\bibinfo}[2]{#2}
\providecommand{\BIBentrySTDinterwordspacing}{\spaceskip=0pt\relax}
\providecommand{\BIBentryALTinterwordstretchfactor}{4}
\providecommand{\BIBentryALTinterwordspacing}{\spaceskip=\fontdimen2\font plus
\BIBentryALTinterwordstretchfactor\fontdimen3\font minus
  \fontdimen4\font\relax}
\providecommand{\BIBforeignlanguage}[2]{{%
\expandafter\ifx\csname l@#1\endcsname\relax
\typeout{** WARNING: IEEEtran.bst: No hyphenation pattern has been}%
\typeout{** loaded for the language `#1'. Using the pattern for}%
\typeout{** the default language instead.}%
\else
\language=\csname l@#1\endcsname
\fi
#2}}
\providecommand{\BIBdecl}{\relax}
\BIBdecl

\bibitem{LiuLiuCui:J25}
F.~Liu, Y.-F. Liu, Y.~Cui, C.~Masouros, J.~Xu, T.~X. Han, S.~Buzzi, Y.~C.
  Eldar, and S.~Jin, ``\textnormal{Sensing with communication signals: From
  information theory to signal processing},'' \emph{{IEEE} J. Sel. Areas
  Commun.}, vol.~44, pp. 1--30, Oct. 2025.

\bibitem{LiuZhaZha:J24}
F.~Liu, T.~Zhang, Z.~Zhang, Y.~Shen, and Q.~Zhang, ``\textnormal{ISAC with UWB:
  Reliable decoupling and target sensing},'' \emph{{IEEE} Trans. Wireless
  Commun.}, vol.~23, pp. 15\,957--15\,972, Nov. 2024.

\bibitem{TanYuPan:J25}
J.~Tang, Y.~Yu, C.~Pan, H.~Ren, D.~Wang, J.~Wang, and X.~You,
  ``\textnormal{Cooperative ISAC-empowered low-altitude economy},''
  \emph{{IEEE} Trans. Wireless Commun.}, vol.~24, no.~5, pp. 3837--3853, May
  2025.

\bibitem{WanWuShe:J20}
Y.~Wang, Y.~Wu, and Y.~Shen, ``\textnormal{Cooperative tracking by multi-agent
  systems using signals of opportunity},'' \emph{{IEEE} Trans. Commun.},
  vol.~68, no.~1, pp. 93--105, Jan. 2020.

\bibitem{DuLiuLi:J25}
Z.~Du, F.~Liu, Y.~Li, W.~Yuan, Y.~Cui, and Z.~Zhang, ``\textnormal{Toward
  ISAC-empowered vehicular networks: Framework, advances, and opportunities},''
  \emph{{IEEE} Wireless Commun. Mag.}, vol.~32, no.~2, pp. 222--229, Apr. 2025.

\bibitem{LiuMasLiSunHan:J18}
F.~Liu, C.~Masouros, A.~Li, H.~Sun, and L.~Hanzo, ``\textnormal{MU-MIMO
  communications with MIMO radar: From co-existence to joint transmission},''
  \emph{{IEEE} Trans. Wireless Commun.}, vol.~17, no.~4, pp. 2755--2770, Feb.
  2018.

\bibitem{XioLiuCui:J22}
Y.~Xiong, F.~Liu, Y.~Cui, W.~Yuan, T.~X. Han, and G.~Caire, ``\textnormal{On
  the fundamental tradeoff of integrated sensing and communications under
  Gaussian channels},'' \emph{{IEEE} Trans. Inf. Theory}, vol.~69, no.~9, pp.
  5723--5751, Sep. 2023.

\bibitem{SheLuZha:J26}
X.~Shen, Z.~Lu, N.~Zhao, H.~Zhao, and Y.~Shen, ``\textnormal{Fundamental
  tradeoff of bistatic ISAC under Gaussian fading channels at finite
  blocklength},'' \emph{{IEEE} Trans. Inf. Theory}, vol.~72, no.~2, pp.
  1176--1200, Feb. 2026.

\bibitem{KesMojLac:J25}
M.~F. Keskin, M.~M. Mojahedian, J.~O. Lacruz, C.~Marcus, O.~Eriksson,
  A.~Giorgetti, J.~Widmer, and H.~Wymeersch, ``\textnormal{Fundamental
  trade-offs in monostatic ISAC: A holistic investigation toward 6G},''
  \emph{{IEEE} Trans. Wireless Commun.}, vol.~24, pp. 7856--7873, Sep. 2025.

\bibitem{LiuHuaShlLiuZhoEld:J20}
X.~Liu, T.~Huang, N.~Shlezinger, Y.~Liu, J.~Zhou, and Y.~C. Eldar,
  ``\textnormal{Joint transmit beamforming for multiuser MIMO communications
  and MIMO radar},'' \emph{{IEEE} Trans. Signal Process.}, vol.~68, pp.
  3929--3944, Jun. 2020.

\bibitem{ZhaWanZha:J21}
N.~Zhao, Y.~Wang, Z.~Zhang, Q.~Chang, and Y.~Shen, ``\textnormal{Joint transmit
  and receive beamforming design for integrated sensing and communication},''
  \emph{{IEEE} Commun. Lett.}, vol.~26, no.~3, pp. 662--666, Jan. 2022.

\bibitem{DuLiuXio:J24}
Z.~Du, F.~Liu, Y.~Xiong, T.~X. Han, Y.~C. Eldar, and S.~Jin,
  ``\textnormal{Reshaping the ISAC tradeoff under OFDM signaling: A
  probabilistic constellation shaping approach},'' \emph{{IEEE} Trans. Signal
  Process.}, vol.~72, pp. 4782--4797, Sep. 2024.

\bibitem{LiuZhaXio:J25}
F.~Liu, Y.~Zhang, Y.~Xiong, S.~Li, W.~Yuan, F.~Gao, S.~Jin, and G.~Caire,
  ``\textnormal{CP-OFDM achieves the lowest average ranging sidelobe under
  QAM/PSK constellations},'' \emph{{IEEE} Trans. Inf. Theory}, vol.~71, no.~9,
  pp. 6950--6968, Sep. 2025.

\bibitem{ZhaChaShe:J25}
N.~Zhao, Q.~Chang, X.~Shen, Y.~Wang, and Y.~Shen, ``\textnormal{Joint target
  localization and data detection in bistatic ISAC networks},'' \emph{{IEEE}
  Trans. Commun.}, vol.~73, no.~5, pp. 3531--3546, May 2025.

\bibitem{WuLiHe:J25}
N.~Wu, H.~Li, D.~He, A.~Nallanathan, and T.~Q.~S. Quek,
  ``\textnormal{Integrated sensing and communication receiver design for
  OTFS-based MIMO system: A unified variational inference framework},''
  \emph{{IEEE} J. Sel. Areas Commun.}, vol.~43, no.~4, pp. 1339--1353, Apr.
  2025.

\bibitem{KesKoiWym:J21}
M.~F. Keskin, V.~Koivunen, and H.~Wymeersch, ``\textnormal{Limited feedforward
  waveform design for OFDM dual-functional radar-communications},''
  \emph{{IEEE} Trans. Signal Process.}, vol.~69, pp. 2955--2970, Apr. 2021.

\bibitem{HuaHanXu:J24}
H.~Hua, T.~X. Han, and J.~Xu, ``\textnormal{MIMO integrated sensing and
  communication: CRB-rate tradeoff},'' \emph{{IEEE} Trans. Wireless Commun.},
  vol.~23, pp. 2839--2854, Apr. 2024.

\bibitem{RenPenSon:J24}
Z.~Ren, Y.~Peng, X.~Song, Y.~Fang, L.~Qiu, L.~Liu, D.~W.~K. Ng, and J.~Xu,
  ``\textnormal{Fundamental CRB-rate tradeoff in multi-antenna ISAC systems
  with information multicasting and multi-target sensing},'' \emph{{IEEE}
  Trans. Wireless Commun.}, vol.~23, pp. 3870--3885, Apr. 2024.

\bibitem{GuoGuWan:J25}
Y.~Guo, Y.~Gu, M.~Wang, and B.~Xia, ``\textnormal{Fundamental limits for ISAC:
  CRB-rate bound and bound-achieving input distribution},'' \emph{{IEEE} Trans.
  Wireless Commun.}, vol.~25, pp. 5605--5621, Oct. 2025.

\bibitem{YuRenPan:J25}
Z.~Yu, H.~Ren, C.~Pan, G.~Zhou, D.~Wang, C.~Yuen, and J.~Wang, ``\textnormal{A
  framework for uplink ISAC receiver designs: Performance analysis and
  algorithm development},'' \emph{arXiv e-prints, arXiv:2503.02647}, Mar. 2025.

\bibitem{MenMasPet:J25}
K.~Meng, C.~Masouros, A.~P. Petropulu, and L.~Hanzo, ``\textnormal{Cooperative
  ISAC networks: Performance analysis, scaling laws, and optimization},''
  \emph{{IEEE} Trans. Wireless Commun.}, vol.~24, pp. 877--892, Feb. 2025.

\bibitem{KwoConParWin:J21}
G.~Kwon, A.~Conti, H.~Park, and M.~Z. Win, ``\textnormal{Joint communication
  and localization in millimeter wave networks},'' \emph{{IEEE} J. Sel. Topics
  Signal Process.}, vol.~15, no.~6, pp. 1439--1454, Nov. 2021.

\bibitem{WanWan:J25}
Z.~Wang and X.~Wang, ``\textnormal{Fundamental MMSE-rate performance limits of
  integrated sensing and communication systems},'' \emph{arXiv e-prints,
  arXiv:2501.01053}, Jan. 2025.

\bibitem{SonYuXu:J26}
X.~Song, X.~Yu, J.~Xu, and D.~W.~K. Ng, ``\textnormal{CRB-rate tradeoff for
  bistatic ISAC with Gaussian information and deterministic sensing signals},''
  \emph{{IEEE} Trans. Wireless Commun.}, vol.~25, pp. 11\,768--11\,782, Feb.
  2026.

\bibitem{FodFodTel:J25}
S.~Fodor, G.~Fodor, and M.~Telek, ``\textnormal{On the trade-off between angle
  of arrival and symbol estimation in bistatic ISAC systems using unitary
  signaling},'' \emph{{IEEE} Trans. Commun.}, vol.~73, pp. 5328--5343, Jul.
  2025.

\bibitem{MaPin:J14}
J.~Ma and L.~Ping, ``\textnormal{Data-aided channel estimation in large antenna
  systems},'' \emph{{IEEE} Trans. Signal Process.}, vol.~62, no.~12, pp.
  3111--3124, Jun. 2014.

\bibitem{ZhaKamAlo:J25}
X.~Zhang, A.~Kammoun, and M.-S. Alouini, ``\textnormal{Fundamental limits via
  CRB of semi-blind channel estimation in Massive MIMO systems},'' \emph{{IEEE}
  Trans. Signal Process.}, vol.~73, pp. 3572--3587, Aug. 2025.

\bibitem{XuYuLiu:J25}
C.~Xu, X.~Yu, F.~Liu, and S.~Jin, ``\textnormal{Exploiting both pilots and data
  payloads for integrated sensing and communications},'' \emph{{IEEE} Trans.
  Wireless Commun.}, vol.~25, pp. 5573--5586, Oct. 2025.

\bibitem{KesMurMiz:C25}
M.~F. Keskin, S.~Mura, M.~Mizmizi, D.~Tagliaferri, and H.~Wymeersch,
  ``\textnormal{Bridging the gap via data-aided sensing: Can bistatic ISAC
  converge to genie performance?}'' in \emph{Proc. IEEE Radar Conf.
  (RadarConf)}, Krakow, Poland, Oct. 2025.

\bibitem{DonLiuLu:J25}
F.~Dong, F.~Liu, S.~Lu, Y.~Xiong, Q.~Zhang, Z.~Feng, and F.~Gao,
  ``\textnormal{Communication-assisted sensing in 6G networks},'' \emph{{IEEE}
  J. Sel. Areas Commun.}, vol.~43, pp. 1371--1386, Apr. 2025.

\bibitem{LiCosMit:J22}
J.~Li, M.~F.~D. Costa, and U.~Mitra, ``\textnormal{Joint localization and
  orientation estimation in millimeter-wave MIMO OFDM systems via atomic norm
  minimization},'' \emph{{IEEE} Trans. Signal Process.}, vol.~70, pp.
  4252--4264, Aug. 2022.

\bibitem{ZhaLiLiuHim:J15}
X.~Zhang, H.~Li, J.~Liu, and B.~Himed, ``\textnormal{Joint delay and doppler
  estimation for passive sensing with direct-path interference},'' \emph{{IEEE}
  Trans. Signal Process.}, vol.~64, no.~3, pp. 630--640, Oct. 2015.

\bibitem{KhaPodHaa:J21}
L.~Khamidullina, I.~Podkurkov, and M.~Haardt, ``\textnormal{Conditional and
  unconditional Cramér-Rao bounds for near-field localization in bistatic MIMO
  radar systems},'' \emph{{IEEE} Trans. Signal Process.}, vol.~69, pp.
  3220--3234, May 2021.

\bibitem{SheWin:J10a}
Y.~Shen and M.~Z. Win, ``\textnormal{Fundamental limits of wideband
  localization -- Part I: A general framework},'' \emph{{IEEE} Trans. Inf.
  Theory}, vol.~56, no.~10, pp. 4956--4980, Oct. 2010.

\bibitem{SheDaiWin:J14}
Y.~Shen, W.~Dai, and M.~Z. Win, ``Power optimization for network
  localization,'' \emph{{IEEE/ACM} Trans. Netw.}, vol.~22, no.~4, pp.
  1337--1350, Aug. 2014.

\bibitem{HasHoc:J13}
B.~Hassibi and B.~M. Hochwald, ``\textnormal{How much training is needed in
  multiple-antenna wireless links?}'' \emph{{IEEE} Trans. Inf. Theory},
  vol.~49, no.~4, pp. 951--963, Apr. 2013.

\bibitem{BoyVan:B04}
S.~Boyd and L.~Vandenberghe, \emph{\textnormal{Convex Optimization}}.\hskip 1em
  plus 0.5em minus 0.4em\relax Cambridge, U.K.: Cambridge Univ. Press, 2004.

\bibitem{RazHonLuo:J13}
M.~Razaviyayn, M.~Hong, and Z.-Q. Luo, ``A unified convergence analysis of
  block successive minimization methods for nonsmooth optimization,''
  \emph{SIAM Journal on Optimization}, vol.~23, no.~2, pp. 1126--1153, 2013.

\bibitem{ScuFacLamSon:J17}
G.~Scutari, F.~Facchinei, L.~Lampariello, and P.~Song, ``Parallel and
  distributed methods for constrained nonconvex optimization---part {I}:
  Theory,'' \emph{{IEEE} Trans. Signal Process.}, vol.~65, no.~8, pp.
  1929--1944, Apr. 2017.

\bibitem{SunBabPal:J17}
Y.~Sun, P.~Babu, and D.~P. Palomar, ``Majorization-minimization algorithms in
  signal processing, communications, and machine learning,'' \emph{{IEEE}
  Trans. Signal Process.}, vol.~65, no.~3, pp. 794--816, Feb. 2017.

\bibitem{StoNeh:J90}
P.~Stoica and A.~Nehorai, ``\textnormal{Performance study of conditional and
  unconditional direction-of-arrival estimation},'' \emph{{IEEE} Trans.
  Acoust., Speech, Signal Process.}, vol.~38, no.~10, pp. 1783--1795, Oct.
  1990.

\bibitem{BosKraYar:J16}
J.~Bosse, O.~Krasnov, and A.~Yarovoy, ``\textnormal{Direct target localization
  with an active radar network},'' \emph{Signal Process.}, vol. 125, pp.
  21--35, Jan. 2016.

\bibitem{VikHasSto:J06}
H.~Vikalo, B.~Hassibi, and P.~Stoica, ``\textnormal{Efficient joint
  maximum-likelihood channel estimation and signal detection},'' \emph{{IEEE}
  Trans. Wireless Commun.}, vol.~5, no.~7, pp. 1838--1845, Jul. 2006.

\end{thebibliography}

\end{document}